\documentclass[a4paper,UKenglish,cleveref, autoref, thm-restate]{lipics-v2021}

\hideLIPIcs  

\title{Small Independent Sets versus Small Separator in Geometric Intersection Graphs} 

\author{Malory {Marin}}{ENS de Lyon, CNRS, Université Claude Bernard Lyon 1, LIP, UMR 5668, 69342, Lyon cedex 07, France }{malory.marin@ens-lyon.fr}{0009-0008-8253-2831}{}
\author{Rémi {Watrigant}}{Université Claude Bernard Lyon 1, ENS de Lyon, CNRS, LIP, UMR 5668, 69342, Lyon cedex 07, France }{remi.watrigant@univ-lyon1.fr}{0000-0002-6243-5910}{}

\authorrunning{M. Malory and R. Watrigant} 

\Copyright{M. Marin and R. Watrigant} 

\ccsdesc[500]{Theory of computation~Graph algorithms analysis}

\keywords{Subexponential Algorithms, Unit Disk Graphs, 2-Subcoloring, Two-Sets Cut-Uncut} 

\category{} 

\relatedversion{} 

\nolinenumbers 

\EventEditors{John Q. Open and Joan R. Access}
\EventNoEds{2}
\EventLongTitle{42nd Conference on Very Important Topics (CVIT 2016)}
\EventShortTitle{CVIT 2016}
\EventAcronym{CVIT}
\EventYear{2016}
\EventDate{December 24--27, 2016}
\EventLocation{Little Whinging, United Kingdom}
\EventLogo{}
\SeriesVolume{42}
\ArticleNo{23}

\usepackage{tikz}
\usetikzlibrary{fit,positioning,shapes,calc,decorations.pathmorphing,decorations.pathreplacing}
\usepackage{pgfplots}
\pgfplotsset{compat=newest}
\pgfplotsset{width=15cm}
\usepgfplotslibrary{colorbrewer} 
\usepackage{tikz-3dplot}

\usepackage{boxedminipage}
\usepackage[most]{tcolorbox}

\newcommand{\problemdef}[3]{
	
    \begin{tcolorbox}[
        enhanced,
        title={\large \color{black}\textsc{#1}},
        colback=white,
        boxrule=0.3pt,
        attach boxed title to top left={xshift=1cm, yshift*=-2.5mm},
        boxed title style={size=small, frame hidden, colback=white}
    ]
    \textbf{Input:} \hspace*{2mm} #2 \\
    \textbf{Output:} \hspace*{2mm} #3
    \end{tcolorbox}

}
\DeclareMathOperator{\tw}{tw}
\DeclareMathOperator{\ta}{tree-\alpha}
\DeclareMathOperator{\alw}{\alpha-lw}
\DeclareMathOperator{\amod}{\alpha-mod}

\newcommand{\torso}{\mathsf{torso}}

\newtheorem{problem}{Problem}

\begin{document}

\maketitle

\begin{abstract}

While most classical NP-hard graph problems cannot be solved in time $2^{o(n)}$ on general graphs under the Exponential Time Hypothesis (ETH), many exhibit the \emph{square-root phenomenon} and admit optimal algorithms running in time $2^{O(\sqrt{n})}$ on certain geometric intersection graphs, such as planar graphs or unit disk graphs. In 2018, de Berg \emph{et~al.} developed a general algorithmic framework for such problems on intersection graphs of similarly sized fat objects in $\mathbb{R}^d$, achieving running times of the form $2^{O(n^{1-1/d})}$, along with matching lower bounds under ETH.

In this paper, we identify problems that do not exhibit the square-root phenomenon, yet still admit subexponential algorithms on intersection graphs of similarly sized fat objects in $\mathbb{R}^d$, for every fixed dimension $d \geqslant 2$. We introduce the notion of a \emph{weak square-root phenomenon}: problems that can be solved in time $2^{\tilde{O}(n^{1-1/(d+1)})}$, and for which matching lower bounds hold under ETH. We develop both an algorithmic framework and a corresponding lower bound framework. As concrete examples, we show that the problems \textsc{2-Subcoloring} and \textsc{Two Sets Cut-Uncut} exhibit this behavior.

Our algorithms rely on a new win-win structural theorem, which can be informally stated as follows: every such graph admits a sublinear separator whose removal leaves connected components with sublinear independence number. 
To facilitate the design of these algorithms, we introduce a new graph parameter, the \emph{$\alpha$-modulator number}, which generalizes both the independence number and the vertex cover number.

\end{abstract}


\section{Introduction}

Many classical graph problems require running time $2^{\Omega(n)}$ on general graphs. However, the same problems often admit \emph{subexponential time} algorithms, that is, algorithms running in time $2^{o(n)}$, when the input graph enjoys certain geometric properties. The most famous example is the \emph{square-root phenomenon} in planar graphs: numerous problems can be solved in time $2^{O(\sqrt{n})}$ on this class. These include, among others, \textsc{Maximum Independent Set}, \textsc{Hamiltonian Cycle}, and \textsc{$3$-Coloring}.

The main structural insight enabling these algorithms is the celebrated \emph{Planar Separator Theorem}, which states that every planar graph admits a balanced separator of size $O(\sqrt{n})$. A direct consequence is that every planar graph has treewidth $O(\sqrt{n})$. Since many classical problems can be solved in time $2^{\operatorname{tw}(G)} \cdot n^{O(1)}$, it follows that on planar graphs these problems admit $2^{O(\sqrt{n})}$-time algorithms.

This naturally raises the question: \emph{which graph classes exhibit this square-root phenomenon?}
One well-studied generalization is the class of $H$-minor-free graphs (for any fixed graph $H$), which generalize planar graphs via Kuratowski-Wagner’s theorem. These classes also have treewidth $O(\sqrt{n})$ and therefore enjoy subexponential-time algorithms for a wide range of problems.

Another direction for generalizing planar graphs is through the lens of \emph{geometric intersection graphs}. Given a set of objects $F \subseteq \mathbb{R}^d$, the intersection graph $G[F]$ has one vertex per object of $F$ and edges between pairs of intersecting objects. A classical example is that of \emph{unit disk graphs}, the intersection graphs of unit-radius disks in the plane. Despite their geometric origin, such graphs behave very differently from planar graphs: objects may pairwise intersect in large numbers, and unit disk graphs can contain arbitrarily large cliques, unlike planar graphs. Consequently, general intersection graphs do not admit sublinear treewidth, and the standard treewidth-based subexponential techniques do not apply.

In this paper, we focus on geometric objects that remain "close" to $d$-dimensional balls. More precisely, given a constant $\beta \geqslant 1$, a family of objects $F$ in $\mathbb{R}^d$ is said to be \emph{similarly sized $\beta$-fat} if for every object $O \in F$ there exist two $d$-dimensional balls $B_{\mathrm{in}} \subseteq O \subset B_{\mathrm{out}}$ such that $B_{\mathrm{in}}$ has diameter $1$ and $B_{\mathrm{out}}$ has diameter $\beta$. When $F$ is a set of balls of equal radius in $\mathbb{R}^d$, the intersection graph $G[F]$ is called a \emph{$d$-dimensional unit ball graph}, or simply a \emph{unit ball graph}.

In their seminal work, de Berg \emph{et~al.}~\cite{de2018framework} developed a powerful algorithmic framework showing that intersection graphs of similarly sized fat objects in $\mathbb{R}^d$ admit $2^{O(n^{1-1/d})}$-time algorithms for a broad range of problems. They also proposed a matching lower-bound framework, establishing ETH-tightness for all the problems they considered. Their results unify a long line of previous research~\cite{alber2004geometric,fomin2019finding,fomin2012bidimensionality,marx2014limited} and provide a robust methodology covering a wide variety of problems.

However, their framework does not cover all problems. Several classical problems exhibit a \emph{curse of dimensionality}: there exists a dimension $d$ such that no subexponential-time algorithm exists. 
A prominent example is the \textsc{Maximum Clique} problem on unit ball graphs: it is polynomial-time solvable in dimension $d=2$, yet admits no subexponential-time algorithm in dimension $d=4$ under the ETH~\cite{bonamy2021eptas}. Another example is \textsc{Clique Cover}, which can be solved in time $2^{O(\sqrt{n})}$ on unit disk graphs, but has no subexponential-time algorithm in dimensions $d \geqslant 5$, under the ETH~\cite{koana2024subexponential}.

In this paper, we aim to capture problems that lie "in between" the two previously described extremes: they admit subexponential-time algorithms on intersection graphs of similarly sized fat objects in $\mathbb{R}^d$ for any fixed dimension $d \geqslant 2$, but with a slightly worse running time than the one obtained via the framework of de Berg~\emph{et~al}. We call this behavior the \emph{weak square-root phenomenon}. More precisely, our framework yields algorithms with running time 
\[
    2^{\tilde{O}\!\left(n^{1 - 1/(d+1)}\right)},
\]
where the $\tilde{O}(\cdot)$ notation hides polylogarithmic factors. We also provide nearly matching lower bounds (up to these polylogarithmic factors) under the ETH.

In particular, we show that certain \emph{binary classification} problems exhibit this intermediate behavior. We propose a general framework for obtaining (almost) ETH-tight algorithms together with corresponding lower bounds. The problems captured by our framework share several structural properties that can be informally explained through the lens of \emph{parameterized complexity}. Roughly speaking, a graph problem $\Pi$ falls within our framework if:
\begin{enumerate}
    \item It is FPT with respect to the \emph{vertex cover number} of the input graph, with a single-exponential dependency on the vertex cover number; and
    \item It is FPT or XP with respect to the \emph{independence number} $\alpha(G)$ of the input graph $G$ (which is the maximum size of an independent set), again with a single-exponential dependency in the parameter.
\end{enumerate}

The remainder of the introduction presents an overview of the algorithmic framework, including examples of problems it captures, as well as the corresponding lower-bound framework.

\paragraph*{A win/win approach}
As discussed earlier, our goal is to capture problems that are simultaneously FPT (or XP) with respect to both the vertex cover number and the independence number of the input graph. The central ingredient of our framework is a new separator theorem which, informally speaking, asserts that every intersection graph of similarly sized fat objects in $\mathbb{R}^d$ has a small subset of vertices whose removal leaves each connected component with a small independence number.

\begin{theorem}\label{thm:Separator} Let $d\geqslant 2$ and $\beta \geqslant 1$ be two constants, and let $G$ be the intersection graph of $n$ similarly sized $\beta$-fat objects in $\mathbb{R}^d$. There exists a subset $S\subseteq V(G)$ of vertices such that:
\begin{itemize}
    \item $|S| \leqslant d n^{1-1/(d+1)}$, and
    \item for all connected component $C$ of $G-S$, $\alpha(G[C])\leqslant (2\beta d)^d n^{1-1/(d+1)}$.
\end{itemize}
Moreover, when the geometric representation is given, such a set $S$ can be computed in polynomial time.
\end{theorem}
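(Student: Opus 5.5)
The plan is a grid-shifting argument: cut $\mathbb{R}^d$ into axis-parallel cubes of side $L \approx n^{1/(d+1)}$, put into $S$ every object that meets a cutting hyperplane, and pick the grid offset by averaging so that $S$ is small. Every component of $G-S$ then lies inside one cube, and a packing argument bounds its independence number. The choice $L = \Theta(n^{1/(d+1)})$ balances the two bounds: $|S| \approx n/L$ and $\alpha \approx L^d$.

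\textbf{Step 1 (choosing hyperplanes).} Fix $L$, of order $\beta n^{1/(d+1)}$; the exact constant is tuned at the end. For each axis $i \in \{1,\dots,d\}$ and each shift index $j \in \{0,1,\dots,\lfloor L/\beta\rfloor -1\}$, let $H_{i,j}$ be the family of hyperplanes $\{x_i = kL + j\beta : k\in\mathbb{Z}\}$.
\begin{itemize}
\item Each object $O$ lies in a ball of diameter $\beta$, so its projection onto axis $i$ is an interval of length at most $\beta$.
\item Hence $O$ meets hyperplanes of $H_{i,j}$ for at most two values of $j$, since consecutive shifted positions are $\beta$ apart.
\item Summing over all objects and averaging over the $\lfloor L/\beta\rfloor$ shifts, some $j_i$ has at most $2n/\lfloor L/\beta\rfloor \approx 2\beta n/L$ objects meeting $H_{i,j_i}$.
\end{itemize}
Let $S$ be the set of objects meeting $\bigcup_i H_{i,j_i}$. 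Then $|S| \lesssim 2\beta d\, n/L$.

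\textbf{Step 2 (components are confined).} The chosen hyperplanes partition $\mathbb{R}^d$ into open cubes of side $L$. Every object not in $S$ lies in the interior of a single cube. Two objects in different cubes cannot intersect, because their cubes' interiors are disjoint. So each connected component $C$ of $G-S$ consists of objects contained in one cube $Q$.

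\textbf{Step 3 (packing).} An independent set in $G[C]$ is a family of pairwise disjoint objects in $Q$. Each of them contains a ball of diameter $1$, and these balls are disjoint and lie in $Q$. Comparing volumes, $\alpha(G[C]) \leqslant L^d/\mathrm{vol}(B_{1/2}) \leqslant (cL)^d$ for a constant $c$ depending only on $d$; here $\mathrm{vol}(B_{1/2})\geqslant d^{-d}$ suffices. With $L$ of order $\beta d^{O(1)} n^{1/(d+1)}$ this gives $\alpha(G[C]) = O\bigl((2\beta d)^d\, n^{d/(d+1)}\bigr)$, and $d/(d+1) = 1-1/(d+1)$.

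\textbf{Step 4 (algorithm).} Given the representation, we try all $O(L/\beta)$ shifts per axis and keep the best one for each axis independently. Each trial costs polynomial time, so $S$ is computed in polynomial time.

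\textbf{Main obstacle.} The delicate part is the constant bookkeeping: one choice of $L$ must meet both the exact bound $|S|\leqslant d\,n^{1-1/(d+1)}$ (no $\beta$ factor is allowed there) and the bound $(2\beta d)^d n^{1-1/(d+1)}$ on $\alpha$.
\begin{itemize}
\item To remove the factor $2\beta$ from $|S|$, I would take $L = 2\beta n^{1/(d+1)}$. Then the averaging gives at most $2n/(L/\beta) = n^{1-1/(d+1)}$ per axis.
\item The cube then has side $2\beta n^{1/(d+1)}$, so the volume bound gives $\alpha \leqslant (2\beta)^d n^{d/(d+1)}/\mathrm{vol}(B_{1/2})$.
\item The estimate $\mathrm{vol}(B_{1/2})\geqslant d^{-d}$ (true, since the ball contains a cube of side $1/\sqrt d$) then yields the stated $(2\beta d)^d$.
\end{itemize}
Rounding of $L/\beta$, and the boundary case of objects touching a hyperplane, need care. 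If necessary I would count each object with its closed projection to get the factor $2$.
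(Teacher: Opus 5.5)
Your argument is essentially the paper's proof. The paper also chooses, for each axis by pigeonhole, one of $\Theta(n^{1/(d+1)})$ shifted families of width-$\beta$ layers. It removes the objects whose enclosing-ball center lies in the chosen slabs, which is exactly your condition ``$B_{\mathrm{out}}$ meets a cutting hyperplane'' up to a shift by $\beta/2$. It then confines each component to a box of side $O(\beta n^{1/(d+1)})$ and finishes with the same packing bound using $\mathrm{vol}(B_{1/2})\geqslant d^{-d}$.

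Two points to tidy. First, test $B_{\mathrm{out}}$ rather than $O$ against the hyperplanes, since the objects are not assumed connected and your confinement step needs this. Second, settle the rounding by taking $L=\beta\lceil 2n^{1/(d+1)}\rceil$ and using the sharper bound $\mathrm{vol}(B_{1/2})\geqslant d^{-d/2}$, which your inscribed cube of side $1/\sqrt d$ already gives; with these choices both stated constants hold exactly.
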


This separator theorem naturally gives rise to a new graph parameter. For a graph $G$, the \emph{$\alpha$-modulator number} is the smallest integer $k$ such that there exists a set $S \subseteq V(G)$ of size at most $k$ for which every connected component of $G - S$ has independence number at most $k$.

From Theorem~\ref{thm:Separator}, it follows that intersection graphs of similarly sized $\beta$-fat objects in $\mathbb{R}^d$ has $\alpha$-modulator number $O(n^{1 - 1/(d+1)})$.

On graphs of bounded $\alpha$-modulator number, the algorithmic strategy is as follows:
\begin{itemize}
    \item Solve the problem independently on each connected component of $G - S$, using an algorithm that is \textsc{FPT} or \textsc{XP} parameterized by the independence number, for every partial solution on $S$;
    \item Combine these partial solutions to obtain a global solution.
\end{itemize}

This immediately yields $2^{\tilde{O}(n^{1 - 1/(d+1)})}$-time algorithms for geometric intersection graphs.

We also identify two natural problems that fit within this framework, for which we provide matching lower bounds under the ETH.

\paragraph*{$2$-subcoloring} A $k$-subcoloring of a graph is an assignment of colors to the vertices of the graph such that each color class induces a disjoint union of complete graphs. The \emph{subchromatic number} of a graph is the minimum $k$ such that the graph has a $k$-subcoloring. This notion was introduced in 1989 by Albertson \emph{et~al.}~\cite{albertson1989subchromatic} as a natural generalization of the chromatic number, which may be small even for dense graphs such as complete graphs. Note that disjoint unions of cliques are sometimes referred to as \emph{cluster graphs}, and this class is equivalent to the class of $P_3$-free graphs, that is, graphs with no induced path on three vertices.

\problemdef{2-Subcoloring}{Graph $G$}{Does $G$ has a $2$-subcoloring, \emph{i.e.} a partition $(A,B)$ of $V(G)$ such that $G[A]$ and $G[B]$ are both a disjoint union of cliques?}

The special case $k=2$ is already of particular interest. Indeed, the \textsc{2-Subcoloring} problem is NP-complete~\cite{gimbel}, even on planar graphs~\cite{fiala,ochem} and on unit disk graphs~\cite{marin2025subcoloring}. On the contrary, the problem is polynomial-time solvable on interval graphs~\cite{fiala} and on graphs of bounded independence number~\cite{kanj2018parameterized}. More precisely, it was shown in~\cite{kanj2018parameterized} that \textsc{2-Subcoloring} is FPT when parameterized by the number of clusters in the $2$-subcoloring, which is a lower bound on the independence number of the input graph. In~\cite{fiala}, an FPT algorithm parameterized by the treewidth of the input graph was provided that solves \textsc{2-Subcoloring}.

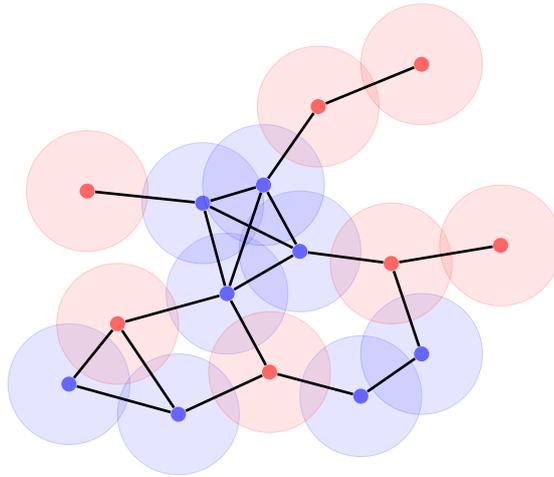
\begin{figure}
\centering
\begin{tikzpicture}[scale=0.8]
\def\r{1}

\coordinate (v1)  at (2.2,2);
\coordinate (v2)  at (4,1.5);
\coordinate (v3)  at (5.5,2.2);
\coordinate (v4)  at (7,1.8);
\coordinate (v5)  at (8,2.5);

\coordinate (v6)  at (3,3);
\coordinate (v7)  at (4.8,3.5);
\coordinate (v8)  at (6,4.2);
\coordinate (v9)  at (7.5,4);
\coordinate (v10) at (9.3,4.3);

\coordinate (v11) at (2.5,5.2);
\coordinate (v12) at (4.4,5);
\coordinate (v13) at (5.4,5.3);
\coordinate (v14) at (6.3,6.6);
\coordinate (v15) at (8,7.3);

\foreach \i in {1,2,4,5,7,8,12,13}{
    \draw[blue, fill=blue!50, opacity = 0.2] (v\i) circle (\r);
    \node[circle, fill=blue!60, inner sep=2pt] (u\i) at (v\i) {};
}
\foreach \i in {3,6,9,10,11,14,15}{
    \draw[red, fill=red!50, opacity = 0.2] (v\i) circle (\r);
    \node[circle, fill=red!60, inner sep=2pt] (u\i) at (v\i) {};
}

\draw[line width = 1] (u1) -- (u2) -- (u6) -- (u1);
\draw[line width = 1] (u2) -- (u3);
\draw[line width = 1] (u3) -- (u4);
\draw[line width = 1] (u4) -- (u5);

\draw[line width = 1] (u6) -- (u7);
\draw[line width = 1] (u7) -- (u8);
\draw[line width = 1] (u8) -- (u9);
\draw[line width = 1] (u9) -- (u10);

\draw[line width = 1] (u11) -- (u12);
\draw[line width = 1] (u12) -- (u13);
\draw[line width = 1] (u13) -- (u14);
\draw[line width = 1] (u14) -- (u15);

\draw[line width = 1] (u3) -- (u7);
\draw[line width = 1] (u5) -- (u9);

\draw[line width = 1] (u7) -- (u12) -- (u8) -- (u13) -- (u7) ;
\end{tikzpicture}

\caption{A $2$-subcoloring of a unit disk graph.}\label{fig:IntroSub}
\end{figure}

\paragraph*{Two Sets Cut-Uncut and related} Given a graph $G$ and two disjoint sets of terminals $S,T\subseteq V(G)$, an $S$-$T$-cut of $G$ is partition $(A,B)$ of $V(G)$ such that $S\subseteq A$ and $T\subseteq B$. In the \textsc{Two Sets Cut-Uncut} problem, we are given an edge-weighted graph $(G,w)$, where $w:E(G) \rightarrow \mathbb{N}$ , and we aim for an $S$-$T$ cut of minimum weight respecting some connectivity constraint. The \emph{weight} of a cut $(A,B)$ is the sum of the weight of edges $uv\in E(G)$ such that $u\in A$ and $v\in B$.

\problemdef{Two Sets Cut-Uncut}{A weighted graph $(G,w)$ with $w:E(G)\rightarrow \mathbb{N}$, sets $S,T\subseteq V(G)$ with $S \cap T = \emptyset$}{An $S$-$T$-cut $(A, B)$ of minimum weight such that $S$ (resp.\ $T$) is in a connected component of $G[A]$ (resp.\ $G[B]$).}

Bentert \emph{et~al.}~\cite{bentert2023two} introduced the unweighted version of this problem. They proved that it is \textsf{W}[1]-hard when parameterized by $|T|$, even in the special case where $|S| = 1$ on general graphs. In contrast, they showed that the unweighted problem becomes fixed-parameter tractable on planar graphs when parameterized by $|S \cup T|$. They further established fixed-parameter tractability on planar graphs when parameterized by the minimum number of faces in a planar embedding such that each terminal is incident to at least one of these faces. Later, Bentert \emph{et~al.}~\cite{bentert2024parameterized} provided a comprehensive study of the problem under various structural parameterizations. In particular, they showed that the problem is FPT when parameterized by the treewidth of the input graph (and hence the vertex cover number), and XP when parameterized by the independence number.

This problem naturally generalizes two classical problems.  
The first is the \textsc{2-Disjoint Connected Subgraphs} problem, which asks only for the existence of a partition $(A,B)$ satisfying the terminal constraints and the connectivity requirements, without optimizing the number of crossing edges. This problem is already NP-complete and has been extensively studied in graph theory and computational geometry~\cite{cygan2014solving,gray2012removing,van2009partitioning,kern2022disjoint,paulusma2011partitioning,telle2013connecting}.

The second related problem is \textsc{Network Diversion}, in which $S$ and $T$ are singletons $\{s\}$ and $\{t\}$, respectively, and there is a designated edge $b = uv \in E(G)$ that must lie in the cut between $A$ and $B$. One can solve \textsc{Network Diversion} by solving \textsc{Two Sets Cut-Uncut} on the two instances $(S=\{s,u\}, T=\{t,v\})$ and $(S'=\{s,v\}, T'=\{t,u\})$. The \textsc{Network Diversion} problem has attracted considerable attention in the computer networking community~\cite{cullenbine2013theoretical,curet2001network,erken2002branch,kallemyn2015modeling,lee2019combinatorial}.

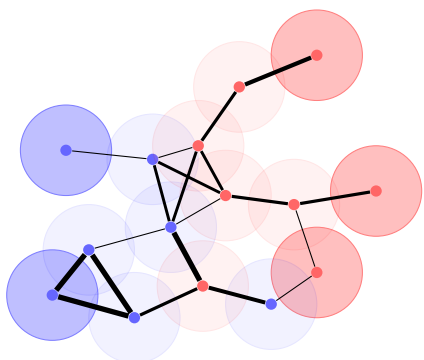
\begin{figure}
\centering
\begin{tikzpicture}[scale=0.6]
\def\r{1}

\coordinate (v1)  at (2.2,2);
\coordinate (v2)  at (4,1.5);
\coordinate (v3)  at (5.5,2.2);
\coordinate (v4)  at (7,1.8);
\coordinate (v5)  at (8,2.5);

\coordinate (v6)  at (3,3);
\coordinate (v7)  at (4.8,3.5);
\coordinate (v8)  at (6,4.2);
\coordinate (v9)  at (7.5,4);
\coordinate (v10) at (9.3,4.3);

\coordinate (v11) at (2.5,5.2);
\coordinate (v12) at (4.4,5);
\coordinate (v13) at (5.4,5.3);
\coordinate (v14) at (6.3,6.6);
\coordinate (v15) at (8,7.3);


\foreach \i in {1,11}{
    \draw[blue, fill=blue!50, opacity = 0.5] (v\i) circle (\r);
    \node[circle, fill=blue!60, inner sep=1.5pt] (u\i) at (v\i) {};
}

\foreach \i in {5,10,15}{
    \draw[red, fill=red!50, opacity = 0.5] (v\i) circle (\r);
    \node[ circle, fill=red!60, inner sep=1.5pt] (u\i) at (v\i) {};
}

\foreach \i in {2,4,6,7,12}{
    \draw[blue, fill=blue!50, opacity = 0.1] (v\i) circle (\r);
    \node[circle, fill=blue!60, inner sep=1.5pt] (u\i) at (v\i) {};
}
\foreach \i in {3,8,9,13,14}{
    \draw[red, fill=red!50, opacity = 0.1] (v\i) circle (\r);
    \node[circle, fill=red!60, inner sep=1.5pt](u\i) at (v\i) {};
}

\draw[line width = 2] (u1) -- (u2) -- (u6) -- (u1);
\draw[line width = 1.2] (u2) -- (u3);
\draw[line width = 1.5] (u3) -- (u4);
\draw (u4) -- (u5);

\draw (u6) -- (u7);
\draw (u7) -- (u8);
\draw[line width = 1.2] (u8) -- (u9);
\draw[line width = 1.2] (u9) -- (u10);

\draw (u11) -- (u12);
\draw (u12) -- (u13);
\draw[line width = 1.2] (u13) -- (u14);
\draw[line width = 1.6] (u14) -- (u15);

\draw[line width = 2] (u3) -- (u7);
\draw (u5) -- (u9);

\draw[line width = 1.1] (u7) -- (u12) -- (u8) -- (u13) -- (u7) ;

\end{tikzpicture}

\caption{An example instance of the \textsc{Two Sets Cut-Uncut} problem on a unit disk graph together with a solution $(A, B)$ represented with blue and red disks. The terminal sets $S$ and $T$ are shown as opaque blue and red disks, respectively. Edge widths are proportional to their weights.}\label{fig:IntroCutUncut}
\end{figure}

\paragraph*{Lower Bound Framework.} We establish matching lower bounds of the form $2^{o\left(n^{1-1/(d+1)}\right)}$ via reductions from \textsc{Monotone Not-All-Equal-3-SAT}. 
The reduction embeds the input formula into a $d$-dimensional grid of side length $O(n^{1/d})$ and uses a breadth-first search tree $T$ of height $O(n^{1/d})$ to control the propagation of truth assignments. 
Each clause is associated with a distinct grid cell and replaced by a gadget that satisfies the following properties:
\begin{enumerate}
    \item it can be realized within a $d$-dimensional box of constant side length;
    \item it propagates the truth values of its literals to selected adjacent cells of the grid;
    \item it enforces the Not-All-Equal constraint of the clause.
\end{enumerate}
A crucial feature of the construction is that all occurrences of any given literal are restricted to a single root-to-leaf path of $T$. 
As a consequence, the resulting graph has size $O(n^{(d+1)/d})$, which yields the claimed ETH-based lower bound.

\section{Preliminaries}\label{sec:preliminaries}

\subparagraph*{Graph notations.} 
Let $G$ be a simple graph. We denote by $V(G)$ and $E(G)$ the set of vertices and the set of edges of $G$, respectively. When there is no ambiguity, we denote by $n$ the number of vertices of $G$, and by $m$ the number of edges of $G$. An \emph{independent set} of $G$ is a set of pairwise non-adjacent vertices, and we denote by $\alpha(G)$ the \emph{independence number} of $G$, i.e., the size of a maximum independent set. Similarly, a \emph{clique} in $G$ is a set of pairwise adjacent vertices, and we denote by $\omega(G)$ the size of a maximum clique. A \emph{vertex clover} of $G$ is a subset of vertices $S\subseteq V(G)$ such that for each edge $uv\in E(G)$, either $u\in S$ or $v\in S$. The minimum size of a vertex cover of $G$, called the \emph{vertex cover number}, is denoted by $\text{vc}(G)$. Given a set $R \subseteq V(G)$, we use $G[R]$ to denote the subgraph induced by $R$, and $G - R$ to denote the graph induced by $V(G) \setminus R$. For a vertex $v \in V(G)$, we denote by $N(v)$ the \emph{open neighborhood} of $v$, that is, $N(v) = \{u \in V(G) \mid uv \in E(G)\}$, and by $N[v]$ its \emph{closed neighborhood}, defined as $N[v] = N(v) \cup \{v\}$. 

\subparagraph*{Functions and sets} Given a function $f: X \rightarrow Y$ and a subset $X'\subseteq X$, we denote by $f\restriction X'$ the \emph{restriction of $f$ to $X'$}, that is the function $h:X'\rightarrow Y$ defined by $h(x)=f(x)$ for all $x\in X'$. 


\subparagraph*{Geometric notations.} Given a constant $d\geqslant 2$ and two points $x,y\in \mathbb{R}^d$, we call the \emph{distance} between $x$ and $y$ the Euclidean distance between $x$ and $y$ in $\mathbb{R}^d$. Given two sets of points $X,Y\subset \mathbb{R}^d$, we call the \emph{distance between $X$ and $Y$} the minimum distance between a point $x\in X$ and a point $y\in Y$. For $j\in \{1,...,d\}$, we denote by $e_j$ the unit vector such that its $j$th entry equals~$1$.

\section{Separator theorem and new width parameters}\label{sec:SepWidth}

\subsection{The separator theorem}

This section is dedicated to the proof of Theorem~\ref{thm:Separator}.  
Let $d \geqslant 2$ and $\beta \geqslant 1$ be two constants. Let $F$ be a set of $n$ similarly sized $\beta$-fat objects in $\mathbb{R}^d$, and let $G := G[F]$.  
For every object $o \in F$, denote by $X_o \in \mathbb{R}^d$ the center of the smallest enclosing ball of $o$ (which has diameter at most $\beta$). 

\medskip

Let $p = \left\lceil n^{1/(d+1)} \right\rceil$ (so that $p > 1$).  
For $d' \in [d]$ and $i \in \mathbb{Z}/p\mathbb{Z}$, define
\[
\mathrm{Slab}_{d',i} := \bigcup_{\substack{q \in \mathbb{Z} \\ q \equiv i \ (\mathrm{mod}\ p)}} 
\left\{ (x_1,\dots,x_d) \in \mathbb{R}^d \;\middle|\; x_{d'} \in [\beta q, \beta(q+1)] \right\}.
\]
Informally, $\mathrm{Slab}_{d',i}$ consists of all slabs of width $\beta$ whose index is congruent to $i$ modulo $p$.  An illustration in the case of unit disks ($d=2$ and $\beta=1$) is given on Figure~\ref{fig:Slabs}.

For each $(d',i)$, define $V_{d',i} \subseteq V(G)$ as the set of vertices corresponding to objects $o \in F$ such that $X_o \in \mathrm{Slab}_{d',i}$.

\begin{figure}
    \centering
    \begin{tikzpicture}[scale=0.8]
    \foreach \i in {0,...,10}{
        \draw[gray] (\i,-0.1) -- (\i, 10.1) ;
        \node[draw=none] () at (\i, -0.5) {\i};
    }
    \foreach \i in {0,4,8}{
        \draw[draw=none, fill = blue, fill opacity = 0.4] (\i,0) rectangle (\i+1, 10);
    }
    
    \foreach \j in {0,...,10}{
        \draw[gray] (-0.1,\j) -- (10.1,\j) ;
        \node[draw=none] () at (-0.5,\j) {\j};
    }
    \foreach \j in {1,5,9}{
        \draw[draw=none, fill = red, fill opacity = 0.4] (0,\j) rectangle (10, \j+1);
    }
    
    \draw[draw=none, fill = blue, fill opacity = 0.4] (11.5, 9) rectangle (12, 9.5);
    \node[draw=none]() at (13, 9.2) {$\mathrm{Slab}_{1,0}$};
    
    \draw[draw=none, fill = red, fill opacity = 0.4] (11.5, 8) rectangle (12, 8.5);
    \node[draw=none]() at (13, 8.2) {$\mathrm{Slab}_{2,1}$};
    \foreach \x/\y in {
    0.42/7.13, 9.85/1.27, 3.64/5.92, 6.71/8.44, 2.18/0.56,
    7.39/3.21, 1.07/9.63, 4.55/2.87, 8.23/6.14, 5.90/4.02,
    0.91/1.74, 9.12/8.36, 3.33/7.58, 6.02/0.95, 2.77/3.49,
    7.84/5.61, 1.56/6.92, 4.98/9.11, 8.67/2.43, 5.21/7.79,
    0.15/4.66, 9.48/3.05, 3.87/1.28, 6.94/6.73, 2.49/8.90,
    7.08/2.16, 1.93/5.47, 4.12/0.34, 8.51/9.76, 5.73/3.88,
    0.68/6.25, 9.27/4.59, 3.05/2.71, 6.36/7.02, 2.84/1.93,
    7.61/8.58, 1.24/3.36, 4.79/6.67, 8.06/0.82, 5.47/5.13,
    0.33/8.01, 9.66/6.48, 3.58/4.20, 6.11/2.55, 2.06/7.34,
    7.92/1.64, 1.78/0.99, 4.44/8.73, 8.95/3.52, 5.02/6.89,
    0.57/2.11, 9.03/7.27, 3.21/9.48, 6.63/4.76, 2.95/5.08,
    7.14/0.67, 1.41/8.29, 4.67/3.60, 8.38/7.91, 5.88/1.45,
    0.26/5.72, 9.74/2.98, 3.76/6.35, 6.27/9.19, 2.63/4.41,
    7.53/3.77, 1.89/7.66, 4.01/1.02, 8.72/5.54, 5.36/8.87,
    0.84/3.14, 9.21/0.23, 3.47/8.55, 6.58/5.97, 2.11/6.03,
    7.26/9.32, 1.63/2.80, 4.90/7.44, 8.14/4.33, 5.64/0.51
    } {
        \draw (\x,\y) circle (0.5);
    }
    
    \draw[->] (-0.1,0) -- (10.5,0);
    \draw[->] (0, -0.1) -- (0,10.5);

    \end{tikzpicture}
    \caption{An example of the slabs defined in the proof of Theorem~\ref{thm:Separator}, in the case of $d=2$, $\beta=1$ and $p=5$.}
    \label{fig:Slabs}
\end{figure}
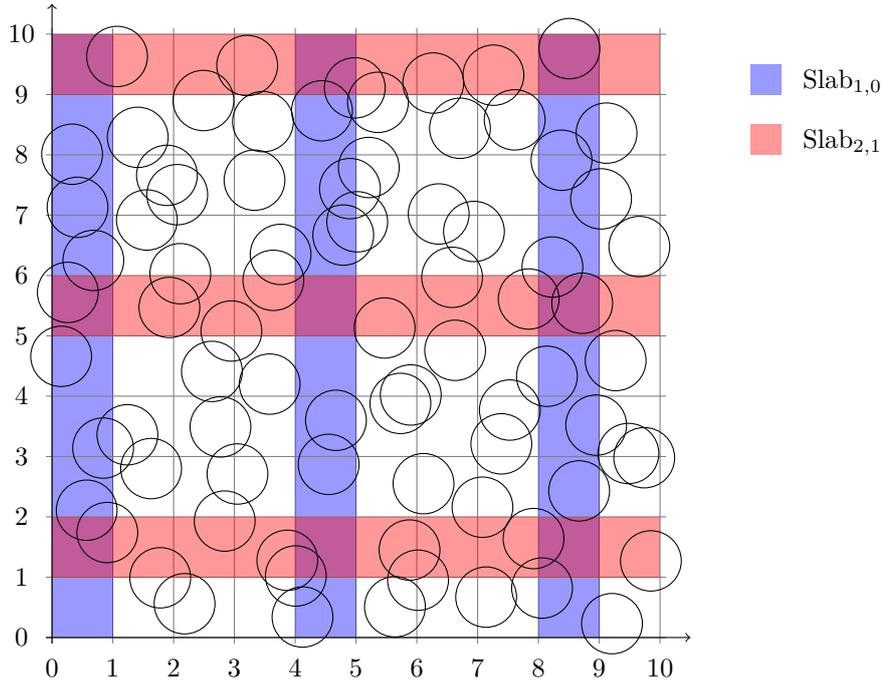

\begin{claim}\label{claim:Small}
For every $d' \in [d]$, there exists $i \in \{0,\dots,p-1\}$ such that $|V_{d',i}| \leqslant n^{1 - 1/(d+1)}$.
\end{claim}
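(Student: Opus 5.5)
The plan is a pigeonhole argument over the $p$ residue classes. Fix $d' \in [d]$. The main goal is to show that the sets $V_{d',0},\dots,V_{d',p-1}$ form a partition of $V(G)$. Given that, $\sum_{i=0}^{p-1}|V_{d',i}| = n$, so some $i$ satisfies
\[
|V_{d',i}| \leqslant \frac{n}{p} \leqslant \frac{n}{n^{1/(d+1)}} = n^{1-1/(d+1)},
\]
where the last step uses $p = \lceil n^{1/(d+1)}\rceil \geqslant n^{1/(d+1)}$.

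\emph{Covering.} For every object $o$, the coordinate $x_{d'}$ of $X_o$ lies in some interval $[\beta q, \beta(q+1)]$ with $q \in \mathbb{Z}$. Hence $X_o \in \mathrm{Slab}_{d',\, q \bmod p}$, and every vertex lies in at least one set $V_{d',i}$.

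\emph{Disjointness (the main obstacle).} The slabs are closed intervals, so a center whose $d'$-th coordinate is an integer multiple of $\beta$ lies in two consecutive slabs. Such a vertex would belong to both $V_{d',i}$ and $V_{d',i-1}$, and the naive count only gives $\sum_i |V_{d',i}| \leqslant 2n$. I would remove this degeneracy up front. Translating the whole family $F$ by a vector $\varepsilon \in \mathbb{R}^d$ does not change the intersection graph $G$, does not affect $\beta$-fatness or similar size, and translates every center $X_o$ by $\varepsilon$. Only finitely many centers, and for each of them only countably many bad values of each coordinate of $\varepsilon$, can place a center on a hyperplane $\{x_{d'} = \beta q\}$. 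So we can choose $\varepsilon$ so that no center lies on any such hyperplane, simultaneously for all $d' \in [d]$. This choice can also be made constructively, for instance by taking each coordinate of $\varepsilon$ smaller than the minimum positive distance from a center coordinate to the lattice $\beta\mathbb{Z}$, which is relevant for the polynomial-time part of Theorem~\ref{thm:Separator}. Since the whole construction depends only on $G$, this assumption is without loss of generality. After the translation, each center lies in exactly one slab per direction, and the $V_{d',i}$ partition $V(G)$.

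As a fallback, if I want to avoid modifying the representation, I would instead assign each boundary center to the slab with the larger index $q$. That is, I would replace $[\beta q,\beta(q+1)]$ by the half-open interval $[\beta q, \beta(q+1))$ when defining the $V_{d',i}$. This again gives a partition and the same bound. I would check that the later uses of the claim in the proof of Theorem~\ref{thm:Separator} only need slabs of width $\beta$ with the separating property, which either convention provides.
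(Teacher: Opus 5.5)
Your pigeonhole argument is the same as the paper's. The difference is that the paper simply asserts the $V_{d',i}$ are pairwise disjoint, and, as you observe, this is false for the closed slabs as written. For instance, take $n=2^{d+1}$ (so $p=2$) with all centers on the hyperplane $x_{d'}=0$. Then every vertex lies in both $V_{d',0}$ and $V_{d',1}$, both sets have size $2^{d+1} > n^{1-1/(d+1)} = 2^d$, and the claim fails verbatim.

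So some modification is unavoidable, and either of yours works: translate so that no center coordinate lies in $\beta\mathbb{Z}$, or use half-open slabs $[\beta q,\beta(q+1))$. For the translation, the justification is that the \emph{conclusion} of Theorem~\ref{thm:Separator} depends only on $G$, not that the construction does. Both fixes preserve what Claim~\ref{claim:Hypercube} needs. Centers on opposite sides of a removed slab differ by more than $\beta$ in coordinate $d'$, whereas intersecting objects have centers at distance at most $\beta$.
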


\begin{claimproof}
For fixed $d'$, the sets $\{V_{d',i}\}_{0 \leqslant i < p}$ are pairwise disjoint and their union has size at most $n$. Hence, for some $i$, $|V_{d',i}| \leqslant \frac{n}{p} \leqslant n^{1 - 1/(d+1)}$.
\end{claimproof}

Applying Claim~\ref{claim:Small} for each $d' \in [d]$, we obtain indices $(i_1,\dots,i_d) \in \{0,\dots,p-1\}^d$ such that
\[
\left| \bigcup_{d'=1}^d V_{d',i_{d'}} \right| \leqslant d\, n^{1 - 1/(d+1)}.
\]
Let
\[
S := \bigcup_{d'=1}^d V_{d',i_{d'}}.
\]

\begin{claim}\label{claim:Hypercube}
Let $F' \subseteq F$ be the set of objects corresponding to a connected component of $G - S$.  
Then all centers $\{X_o : o \in F'\}$ lie in a hypercube of side length at most $\beta p$.
\end{claim}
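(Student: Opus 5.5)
The plan is to show that, along each coordinate direction $d'$ separately, the projections of the centers of $F'$ cannot cross the removed slabs $\mathrm{Slab}_{d',i_{d'}}$. Fix $d' \in [d]$ and write $i := i_{d'}$. The complement of $\mathrm{Slab}_{d',i}$ (with respect to coordinate $d'$) is a union of open intervals in $x_{d'}$ of the form $(\beta(q+1), \beta(q+p))$ with $q \equiv i \pmod p$. Each such interval has length $\beta(p-1) \leqslant \beta p$. I will show that all centers $X_o$, $o \in F'$, have their $d'$-th coordinate in a single one of these intervals. Then the $d$ coordinate projections of the centers each lie in an interval of length at most $\beta p$. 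Their product is a hypercube of side length at most $\beta p$ containing all centers, which proves the claim.

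\textbf{Key step.} Let $uv$ be an edge of $G - S$ with corresponding objects $o_u,o_v$. Both objects lie in their enclosing balls of diameter $\beta$, centered at $X_{o_u}$ and $X_{o_v}$. Since the objects intersect, $\|X_{o_u} - X_{o_v}\| \leqslant \beta$. In particular, the $d'$-th coordinates differ by at most $\beta$. Neither center lies in $\mathrm{Slab}_{d',i}$, since $u,v \notin S \supseteq V_{d',i}$. Suppose the two $d'$-coordinates were in different complementary intervals. Then the closed slab $[\beta q, \beta(q+1)]$ of width exactly $\beta$ would lie strictly between them, because the complementary intervals are separated by these closed slabs. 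The two coordinates would then differ by strictly more than $\beta$, which is a contradiction. So adjacent vertices of $G-S$ have centers whose $d'$-coordinates lie in the same complementary interval. Propagating along paths, the same holds for every vertex of the connected component $F'$.

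\textbf{Main obstacle.} The only delicate point is the boundary case, and it is handled by the closedness of the slabs. The removed slabs are closed intervals of length exactly $\beta$. Two points strictly on opposite sides of such a slab are therefore at $d'$-distance strictly greater than $\beta$, while adjacency forces distance at most $\beta$. The contradiction is strict, so no tie case survives. I would also note the case $p = 1$, where the slab covers everything. The paper excludes it by assumption ($p > 1$), and in any case the complementary intervals are then empty, so the statement would be vacuous.
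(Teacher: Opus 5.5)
Your proof is correct and follows the same route as the paper. In each coordinate, the centers of intersecting objects are within distance $\beta$, so a connected component cannot jump over a removed closed slab of width $\beta$; each coordinate is therefore confined to one gap between consecutive removed slabs. Your version is slightly more careful than the paper's: it makes the strict inequality explicit via closedness of the slabs, where the paper only says the coordinates differ by ``at least $\beta$'', and it gives the sharper side length $\beta(p-1)$.
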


\begin{claimproof}
Fix a dimension $d' \in [d]$.  
Observe that if two points belong to slabs $\mathrm{Slab}_{d',i}$ and $\mathrm{Slab}_{d',i'}$ with $|i - i'| \geqslant 2$, then their $d'$-th coordinates differ by at least $\beta$, and hence their Euclidean distance is greater than $\beta$.  

Since each object has diameter at most $\beta$, two such objects cannot intersect, and therefore the corresponding vertices are not adjacent in $G$.

Now consider a connected component $C$ of $G - S$, and let $F'$ be its corresponding objects.  
Since all vertices whose centers lie in $\mathrm{Slab}_{d',i_{d'}}$ have been removed, the remaining centers cannot cross this slab. Hence, along each dimension $d'$, the centers of objects in $F'$ must lie within at most $p$ consecutive slabs.

Since each slab has width $\beta$, this implies that along each coordinate the range is at most $\beta p$. Therefore, all centers lie in a hypercube of side length at most $\beta p$.
\end{claimproof}

\begin{claim}\label{claim:AlphaCube}
Let $F' \subseteq F$ be a set of objects contained in a hypercube $H \subseteq \mathbb{R}^d$ of side length $a$.  
Then $\alpha(G[F']) \leqslant (ad)^d$.
\end{claim}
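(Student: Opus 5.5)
We bound the independence number by a packing argument: partition the hypercube $H$ into small cells, each too small to host two disjoint objects, so that an independent set can use each cell at most once. Two interpretations of the hypothesis must be considered. Either the objects of $F'$ lie in $H$, or, as needed when the claim is applied after Claim~\ref{claim:Hypercube}, only their centers $X_o$ lie in $H$. We plan for the weaker hypothesis that the centers lie in $H$. If the objects themselves lie in $H$, then in particular their centers do, so the same argument covers that case.

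Subdivide $H$ into a grid of axis-parallel subcubes of side length $a/m$, where $m$ is an integer chosen below. This gives $m^d$ cells, and we assign every center (with boundary points assigned arbitrarily) to exactly one cell. Two points in the same cell are at Euclidean distance at most $\sqrt{d}\,a/m$, the diameter of the cell.

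The key geometric step is to show that two objects whose centers are close must intersect. The inner ball gives no usable handle on the location of $X_o$ relative to the object, so we argue through the enclosing ball instead. The object $o$ has diameter at least $1$, since it contains a ball of diameter $1$. It is contained in a ball of diameter at most $\beta$ centered at $X_o$, so every point of $o$ is within $\beta/2$ of $X_o$.

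However, two objects whose centers are close need not intersect; two thin shells could in principle avoid each other. The fatness assumption does not prevent this, because only the inner ball is guaranteed to lie in $o$. So we instead use a cleaner volume-packing argument. Take an independent set $I$ of $F'$. Each $o\in I$ contains an inner ball of diameter $1$, and these inner balls are pairwise disjoint because the objects are. Each such ball lies within distance $\beta/2$ of $X_o\in H$, so it is contained in the hypercube $H^+$ obtained by enlarging $H$ by $\beta$ on each side, which has side $a+\beta$. Comparing volumes gives
\[
|I|\cdot \mathrm{vol}(B_{1/2}) \leqslant (a+\beta)^d .
\]
It remains to show that $(a+\beta)^d/\mathrm{vol}(B_{1/2})\leqslant (ad)^d$ in the regime used, namely $a=\beta p\geqslant 2\beta$.

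A simpler route is a cube-counting argument. Each inner ball of diameter $1$ contains the axis-parallel cube inscribed in it, which has side $1/\sqrt d\geqslant 1/d$. Partition $H^+$ into cubes of side $1/(2d)$. Each inscribed cube of side $1/d$ contains at least one full cell of this partition, and disjointness of the inner balls forces these cells to be distinct. Hence
\[
|I|\leqslant \bigl(2d(a+\beta)\bigr)^d .
\]

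The main obstacle is matching the exact constant $(ad)^d$, which is slightly off from the bounds above. We expect the authors' proof either assumes the objects lie inside $H$, in which case $H^+=H$ and we get $|I|\leqslant (a\sqrt d)^d\leqslant (ad)^d$ directly using cells of side $1/\sqrt d$, or absorbs the additive $\beta$ when invoking the claim with $a=2\beta p$. That second option is consistent with the final bound $(2\beta d)^d n^{1-1/(d+1)}$, since $(2\beta p d)^d\approx (2\beta d)^d n^{d/(d+1)}$. I would therefore prove the claim under the hypothesis that the objects lie in $H$, using cells of side $1/\sqrt d$ and the pigeonhole principle on the inscribed cubes. When applying it after Claim~\ref{claim:Hypercube}, I would enlarge the cube of centers by $\beta$ on each side to contain the objects, giving side $\beta p+\beta\leqslant 2\beta p$.
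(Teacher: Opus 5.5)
Your overall strategy matches the paper's: you assume the objects themselves, not just their centers, lie in $H$. You then use that the inner balls of an independent set are pairwise disjoint and compare volumes. For the application after Claim~\ref{claim:Hypercube}, the paper also enlarges the cube of centers, of side $\beta p$, to a cube of side $\beta(p+1)$ containing the objects, as you anticipate. The paper's proof is just the inequality you already wrote, $M\cdot V_d\leqslant a^d$, together with $V_d\geqslant d^{-d}$. That lower bound follows, for instance, from the inscribed cube of volume $d^{-d/2}$.

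However, the step you finally commit to, cells of side $1/\sqrt d$ with pigeonhole on the inscribed cubes, fails as stated. An axis-parallel cube of side $s$ need not contain any full cell of a grid of the same side $s$; for example, $[s/2,3s/2]^d$ contains none. Your earlier containment argument was correct because it used cells of half the side. Here that means cells of side $1/(2\sqrt d)$, which only gives $|I|\leqslant \lceil 2a\sqrt d\,\rceil^d$. This exceeds $(ad)^d$ for $d\in\{2,3\}$, which the claim must cover: for $d=2$ it is about $8a^2$ versus $4a^2$. Even with the right containment, the grid count would be $\lceil a\sqrt d\,\rceil^d$ rather than $(a\sqrt d)^d$.

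The fix is to drop the grid and compare volumes directly. The inscribed cubes are pairwise disjoint, each has volume $d^{-d/2}$, and all lie in $H$. Hence $|I|\cdot d^{-d/2}\leqslant a^d$, that is, $|I|\leqslant (a\sqrt d)^d\leqslant (ad)^d$, which is exactly the paper's argument.
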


\begin{claimproof}
Let $M := \alpha(G[F'])$.  
By the definition of $\beta$-fat objects (after scaling), each object contains a ball of radius at least $1/2$. Therefore, the $M$ objects of a maximum independent set contain $M$ pairwise disjoint balls of radius $1/2$.

Let $V_d$ denote the volume of a $d$-dimensional ball of radius $1/2$. One can show that $V_d \geqslant \left(\frac{1}{d}\right)^d$. Since all objects lie in $H$, the hypercube of volume $a^d$ contains $M$ disjoint balls, hence $M \cdot V_d \leqslant a^d$. Using the bound on $V_d$, we obtain $M \leqslant a^d \cdot d^d = (ad)^d$.
\end{claimproof}

Let $C$ be a connected component of $G - S$, and let $F'$ be its corresponding objects.  
By the Claim~\ref{claim:Hypercube}, all centers lie in a hypercube of side length $\beta p$. Since each object has diameter at most $\beta$, all objects are contained in a hypercube of side length $\beta(p+1)$.

By Claim~\ref{claim:AlphaCube}, we obtain
\[
\alpha(G[C]) \leqslant (\beta(p+1)d)^d.
\]
Since $p = \lceil n^{1/(d+1)} \rceil$, this yields
\[
\alpha(G[C]) \leqslant (2 \beta d)^d \, n^{1 - 1/(d+1)},
\]
which concludes the proof.

\subsection{An adapted width parameter}

Theorem~\ref{thm:Separator} motivates the definition of a new graph parameter, which we call the \emph{$\alpha$-modulator number}.

\begin{definition}
Given a graph $G$, the \emph{$\alpha$-modulator number of $G$}, denoted $\amod(G)$, is the minimum integer $k$ such that there exists a subset $S \subseteq V(G)$ satisfying:
\begin{itemize}
    \item $|S| \leqslant k$;
    \item for every connected component $C$ of $G - S$, $\alpha(G[C]) \leqslant k$.
\end{itemize}
The subset $S$ is called the \emph{modulator}.
\end{definition}

Observe that both the vertex cover number and the independence number are natural upper bounds for this parameter. Using this definition, Theorem~\ref{thm:Separator} can be reformulated as follows.

\begin{theorem}[Reformulation of Theorem~\ref{thm:Separator}]\label{thm:BoundAlphaMod}
Let $d \geqslant 2$ and $\beta \geqslant 1$ be constants.  
Any intersection graph $G$ of $n$ similarly sized $\beta$-fat objects in $\mathbb{R}^d$ satisfies
\[
\amod(G) = O\!\left( n^{\,1 - \frac{1}{d+1}} \right),
\]
and the corresponding decomposition can be computed in polynomial time when the geometric representation is given as input.
\end{theorem}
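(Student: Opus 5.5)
This is a direct corollary of Theorem~\ref{thm:Separator}, so the plan is simply to instantiate the definition of $\amod(G)$ with the set $S$ produced there. Given $G$ as the intersection graph of $n$ similarly sized $\beta$-fat objects in $\mathbb{R}^d$, Theorem~\ref{thm:Separator} yields a set $S\subseteq V(G)$ with two properties:
\[
|S|\leqslant d\,n^{1-1/(d+1)}, \qquad \alpha(G[C])\leqslant (2\beta d)^d n^{1-1/(d+1)} \text{ for every component } C \text{ of } G-S.
\]
Set
\[
k := \max\{d,(2\beta d)^d\}\cdot n^{1-1/(d+1)} = (2\beta d)^d n^{1-1/(d+1)},
\]
where the equality holds since $\beta\geqslant 1$ and $d\geqslant 2$. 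Then $|S|\leqslant k$, and every component of $G-S$ has independence number at most $k$. So $S$ witnesses $\amod(G)\leqslant k$.

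Since $d$ and $\beta$ are constants, the factor $(2\beta d)^d$ is a constant, and hence $\amod(G)=O(n^{1-1/(d+1)})$. The algorithmic part is inherited directly: Theorem~\ref{thm:Separator} computes $S$ in polynomial time from the geometric representation, and the components of $G-S$ are then found by a graph traversal.

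The only point requiring care, and the one I would double-check first, is the constant in the second bullet of Theorem~\ref{thm:Separator}. The bound $(\beta(p+1)d)^d$ with $p=\lceil n^{1/(d+1)}\rceil$ is of order $n^{d/(d+1)}=n^{1-1/(d+1)}$. This follows from $p+1\leqslant 2n^{1/(d+1)}+1\leqslant 3n^{1/(d+1)}$ for $n\geqslant 1$, so any constant slack in that step changes only the hidden constant, not the asymptotic statement.
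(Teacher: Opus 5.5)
Your proposal is correct and is essentially the paper's argument: the paper states this theorem as an immediate reformulation of Theorem~\ref{thm:Separator}, with the separator $S$ serving as the modulator and $k$ taken as the larger of the two bounds, rounded up to an integer. Your check of the constant is also sound: the paper's step from $(\beta(p+1)d)^d$ to $(2\beta d)^d n^{1-1/(d+1)}$ only holds once $n^{1/(d+1)}\geqslant 2$, but as you say, this affects only the hidden constant.
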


Less trivially, the tree-independence number~\cite{dallard2024treewidth} is a lower bound for this parameter. Indeed, let $S \subseteq V(G)$ be a subset of size at most $k$ such that all connected components of $G - S$ have independence number at most $k$. Construct a tree decomposition by creating a root bag $S$, and for each connected component $C$ of $G - S$, a leaf bag $C \cup S$ adjacent only to the root. Observe that the independence number of each bag is at most $2k$, which yields the following bound.

\begin{observation}\label{obs:AlphaModTreeAlpha}
For any graph $G$, $\ta(G) \leqslant 2 \amod(G)$.
\end{observation}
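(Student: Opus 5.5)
The plan is to exhibit an explicit tree decomposition of $G$ built from an optimal modulator and bound the independence number of each bag. Let $k := \amod(G)$ and fix a modulator $S \subseteq V(G)$ with $|S| \leqslant k$ such that every connected component $C$ of $G - S$ satisfies $\alpha(G[C]) \leqslant k$. Let $C_1,\dots,C_m$ be these components. The decomposition is a star: a central node with bag $S$, and one leaf node per $C_j$ with bag $S \cup C_j$, each attached only to the center. (If $G - S$ is empty, the single bag $S$ suffices.)

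First I will check the three tree-decomposition axioms.

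\emph{Vertex coverage.} Every vertex lies in $S$ or in some $C_j$, so it appears in some bag.

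\emph{Edge coverage.} Take an edge $uv$. If both endpoints lie in $S$, the central bag contains it. If exactly one lies in $S$ and the other in $C_j$, the bag $S \cup C_j$ contains it. If neither lies in $S$, then $u$ and $v$ are adjacent in $G - S$, so they belong to the same component $C_j$, and again $S \cup C_j$ contains the edge. Edges between distinct components cannot exist, because components of $G-S$ are maximal connected subsets.

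\emph{Connectivity of occurrences.} A vertex of $S$ appears in every bag, so its occurrences span the whole star tree. A vertex of $C_j$ appears only in the single leaf bag $S \cup C_j$, which is trivially connected.

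Next I bound the independence number of each bag. Let $I$ be an independent set inside $S \cup C_j$. Then $|I \cap S| \leqslant |S| \leqslant k$ and $|I \cap C_j| \leqslant \alpha(G[C_j]) \leqslant k$, so $|I| \leqslant 2k$. The central bag has independence number at most $|S| \leqslant k$. Hence this decomposition has independence number at most $2k$, and taking the minimum over all tree decompositions in the definition of the tree-independence number gives $\ta(G) \leqslant 2\amod(G)$.

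I expect no real obstacle here. The only point needing care is the edge-coverage case with both endpoints outside $S$, and it is settled by the observation that they must lie in the same component.
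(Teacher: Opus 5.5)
Your proof is correct and matches the paper's argument: a star-shaped tree decomposition with central bag $S$ and one leaf bag $S \cup C$ for each component $C$ of $G - S$, where each bag has independence number at most $|S| + \alpha(G[C]) \leqslant 2k$. Your explicit check of the tree-decomposition axioms, in particular that no edge joins two distinct components, fills in details the paper leaves implicit.
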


However, there is no direct relationship between treewidth and the $\alpha$-modulator number. Indeed, the $\alpha$-modulator number is unbounded on paths, while cliques have bounded $\alpha$-modulator number but unbounded treewidth. 

The relationships between these parameters are summarized in Figure~\ref{fig:Parameters}. In addition, we indicate in this figure the computational complexity of the \textsc{2-Subcoloring} problem. Indeed, the problem is \textsc{FPT} when parameterized by treewidth (see~\cite{fiala}), while it is \textsc{NP}-complete even on graphs of bounded tree-independence number (see~\cite{marin2025subcoloring}). 

However, in Section~\ref{sec:AlgoAlphaLeafWidth}, we show that the problem is \textsc{FPT} when parameterized by the $\alpha$-modulator number, provided that a corresponding modulator is given as part of the input. When it is not the case, this modulator can be computed in XP time $n^{O(\amod(G))}$ by simply guessing the modulator $S$ and checking that each remaining connected component of $G-S$ has independence number at most $\amod(G)$. Unfortunately, we cannot hope for a better strategy in the general case, as a simple reduction (namely, the join of a graph with itself) allows us to inherit all known (parameterized) inapproximability from \textsc{Maximum Independent Set}.

\begin{observation}
For any graph graph $G$, $\frac{1}{2}\alpha(G) \leqslant \amod(G+G) \leqslant \alpha(G)$.
\end{observation}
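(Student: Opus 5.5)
The plan is to prove the two inequalities separately. Throughout, $G+G$ denotes the join of two disjoint copies $G_1,G_2$ of $G$: every vertex of $G_1$ is adjacent to every vertex of $G_2$.

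\emph{Upper bound.} Any independent set of $G+G$ must lie entirely inside $G_1$ or entirely inside $G_2$, because vertices from different copies are adjacent. Hence $\alpha(G+G)=\alpha(G)$. Taking the empty modulator $S=\emptyset$, the graph $G+G$ has the single component $G+G$ itself when $G$ is nonempty. That component has independence number $\alpha(G)$, so $\amod(G+G)\leqslant \alpha(G)$. This is just the general fact, already noted after the definition, that the independence number bounds the $\alpha$-modulator number.

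\emph{Lower bound.} Let $k=\amod(G+G)$ and fix a modulator $S$ with $|S|\leqslant k$ such that every component of $(G+G)-S$ has independence number at most $k$. Write $S_1=S\cap V(G_1)$ and $S_2=S\cap V(G_2)$. We split into two cases.
\begin{itemize}
\item If both $V(G_1)\setminus S_1$ and $V(G_2)\setminus S_2$ are nonempty, then $(G+G)-S$ is connected, since any remaining vertex of one copy is adjacent to all remaining vertices of the other copy. So this single component has independence number at most $k$. It contains $G_1-S_1$, which has independence number at least $\alpha(G)-|S_1|$. This gives $\alpha(G)\leqslant k+|S_1|\leqslant 2k$.
\item Otherwise, $S$ contains all of one copy, say $V(G_1)\subseteq S$. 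Then $k\geqslant |S|\geqslant |V(G)|\geqslant \alpha(G)$.
\end{itemize}
In both cases $\alpha(G)\leqslant 2k$, that is, $\tfrac12\alpha(G)\leqslant\amod(G+G)$.

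The only delicate step is the connectivity argument in the first case, together with remembering the degenerate case where $S$ swallows a whole copy. Once these are handled, the bound follows directly from the inequality $\alpha(H-X)\geqslant \alpha(H)-|X|$. So I do not expect a real obstacle.
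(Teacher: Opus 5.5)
Your proposal is correct and matches the paper's argument: the empty modulator gives the upper bound, and for the lower bound, when $S$ contains neither copy entirely, $(G+G)-S$ is connected, and $\alpha(G)\leqslant |S|+\alpha(C)$ yields the bound. The only difference is cosmetic. The paper argues by contradiction from $\amod(G+G)<\tfrac12\alpha(G)$, which rules out $S$ containing a whole copy automatically, whereas you treat that case explicitly.
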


\begin{proof}
Denote by $G'$ the graph $G_1+G_2$, where $G_1$ and $G_2$ are two distinct copies of $G$. The inequality $\amod(G')\leqslant \alpha(G)$ is trivial by considering an empty modulator.
Then, by contradiction, suppose that $\amod(G') < \frac{1}{2}\alpha(G)$, and let $S$ be the corresponding modulator. Since $|S|<\frac{1}{2}\alpha(G)$, $S$ does not fully contain $V(G_1)$ or $V(G_2)$, and thus $G-S$ contains a unique connected component $C$. Since $\alpha(G[C])<\frac{1}{2}\alpha(G)$, we obtain $\alpha(G) \leqslant |S|+\alpha(G[C]) < \alpha(G)$, which is a contradiction. Therefore, $\amod(G') \geqslant \frac{1}{2}\alpha(G)$.
\end{proof}

In particular, assuming GAP-ETH, for any computable function $g(k) > k$, there is no algorithm running in time $f(k)\cdot n^{o(k)}$, for any computable function $f$, that, given an $n$-vertex graph $G$ and an integer $k$, distinguishes between the cases $\amod(G) \leq k$ and $\amod(G) \geq g(k)$~\cite{chalermsook2020Inapprox}. Note that the same result holds for the tree independence number~\cite{dallard2025computing}.

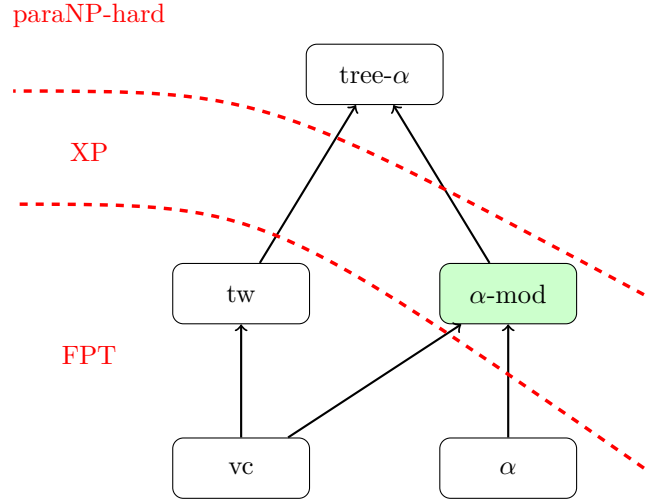
\begin{figure}[ht]
\centering
\begin{tikzpicture}[
    node distance=1.5cm and 1.7cm,
    every node/.style={draw, rounded corners, minimum width=1.8cm, minimum height=0.8cm, align=center},
    new/.style={fill=green!20},
    arrow/.style={->, thick}
]

\node (vc) {$\mathrm{vc}$};
\node (alpha) [right=of vc] {$\alpha$};

\node (tw) [above=of vc] {$\tw$};
\node (amod) [above=of alpha, new] {$\amod$};
\node (ta) [above= 2.5cm of $(tw)!0.5!(amod)$] {$\ta$};

\draw[arrow] (vc) -- (tw);
\draw[arrow] (tw) -- (ta);

\draw[arrow] (amod) -- (ta);

\draw[arrow] (vc) -- (amod);
\draw[arrow] (alpha) -- (amod);


\draw[dashed, red, line width = 1.3  ] (5.3,0) .. controls (0.3,3.5) .. (-3,3.5);

\draw[dashed, red, line width = 1.3  ] (5.3,2.3) .. controls (0.3,5) .. (-3,5);

\node[draw = none] () at (-2,1.5) {\color{red}  FPT};
\node[draw=none, align=center] at (-2,4.2) {\color{red} XP};
\node[draw = none] () at (-2,6) {\color{red}  paraNP-hard};

\end{tikzpicture}
\caption{Hierarchy of graph parameters. An arrow from parameter $p_1$ to parameter $p_2$ implies that if $p_1$ is bounded, then $p_2$ is also bounded. The parameter in green is newly introduced. The parameterized complexity landscape of the \textsc{2-Subcoloring} problem is depicted by the red dots.}
\label{fig:Parameters}
\end{figure}

\paragraph*{Link with $\mathcal{G} \rhd \mathcal{H}$ modulators.}
In~\cite{fomin2026planarity}, Fomin \emph{et~al.} introduced the notion of $\mathcal{G} \rhd \mathcal{H}$ in order to unify several width parameters based on the concept of a \emph{modulator}, such as $\mathcal{H}$-modulator number, $\mathcal{H}$-treewidth, and related notions.

Let $\mathcal{G}$ and $\mathcal{H}$ be two graph classes. The class $\mathcal{G} \rhd \mathcal{H}$ consists of all graphs $G$ that contain a subset $X \subseteq V(G)$ (called a \emph{$\mathcal{G} \rhd \mathcal{H}$-modulator}) such that $\torso(G, X) \in \mathcal{G}$ and, for every connected component $C$ of $G - X$, we have $C \in \mathcal{H}$. Here, $\torso(G, X)$ is the subgraph induced by $X$ where we also add edges between pairs of vertices with neighbors in a same connected component of $G - X$. In particular, for each connected component $C$ of $G-X$, the neighborhood $N_G(C)$ of $C$ induces a clique in $\torso(G,X)$.

For an integer $k \in \mathbb{N}$, let $\mathcal{S}_k$ denote the class of graphs with at most $k$ vertices and $\mathcal{I}_k$ the class of graphs with independence number at most $k$. For instance, $\mathcal{S}_k \rhd \mathcal{S}_{1}$ is precisely the class of graphs with a vertex cover of size at most $k$. We refer to~\cite{fomin2026planarity} for more details.

\smallskip

The $\alpha$-modulator number naturally fits into this framework (see Figure~\ref{fig:Torso}). More precisely, for a graph $G$, the minimum integer $k$ such that $G \in \mathcal{S}_k \rhd \mathcal{I}_k$ is exactly the $\alpha$-modulator number.

\begin{observation}\label{lemma:EquivalenceTorso}
For every graph $G$, $\amod(G) = \min \{k \mid G \in \mathcal{S}_k \rhd \mathcal{I}_k\}$.
\end{observation}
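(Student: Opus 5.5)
The plan is to prove the equality by showing both inequalities through a direct comparison of the two definitions. The observation that makes this work is that membership in $\mathcal{S}_k \rhd \mathcal{I}_k$ only constrains the \emph{number of vertices} of $\torso(G,X)$. Since $V(\torso(G,X)) = X$, the edges added when forming the torso are irrelevant: $\torso(G,X)\in\mathcal{S}_k$ holds if and only if $|X|\leqslant k$. This is exactly the first condition in the definition of the $\alpha$-modulator number.

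First, I will show $\min\{k \mid G\in \mathcal{S}_k \rhd \mathcal{I}_k\} \leqslant \amod(G)$. Let $k=\amod(G)$ and let $S$ be a modulator witnessing this value, so $|S|\leqslant k$ and every component $C$ of $G-S$ satisfies $\alpha(G[C])\leqslant k$. Take $X:=S$. Then $\torso(G,X)$ has vertex set $S$, hence at most $k$ vertices, and lies in $\mathcal{S}_k$. Moreover, each component $C$ of $G-X$ is, as a graph, $G[C]$, whose independence number is at most $k$, so $C\in\mathcal{I}_k$. Therefore $G\in \mathcal{S}_k \rhd \mathcal{I}_k$.

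Second, I will prove the reverse inequality. Suppose $G\in \mathcal{S}_k \rhd \mathcal{I}_k$ with a witnessing set $X$. Because $\torso(G,X)\in\mathcal{S}_k$ and its vertex set is $X$, we get $|X|\leqslant k$. Each component $C$ of $G-X$ lies in $\mathcal{I}_k$, which gives $\alpha(G[C])\leqslant k$. So $X$ is a modulator satisfying both conditions of the definition with parameter $k$, and hence $\amod(G)\leqslant k$. Taking the minimum over all such $k$ gives the claim. Both minima are well defined, since $X=V(G)$ (equivalently $k=n$) is always a valid choice.

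There is no real obstacle here. The only point that needs care is to state explicitly that the torso is defined on the vertex set $X$, so the extra edges it introduces cannot affect membership in $\mathcal{S}_k$. I will also note that components are understood as induced subgraphs, so that their independence number coincides with $\alpha(G[C])$ in the paper's definition.
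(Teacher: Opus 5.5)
Your proposal is correct and follows the same definitional argument the paper relies on; the paper states this as an observation without a written proof. The key point is the one you isolate: $V(\torso(G,X)) = X$, so $\torso(G,X)\in\mathcal{S}_k$ is exactly the condition $|X|\leqslant k$, and the two definitions then coincide for every $k$.
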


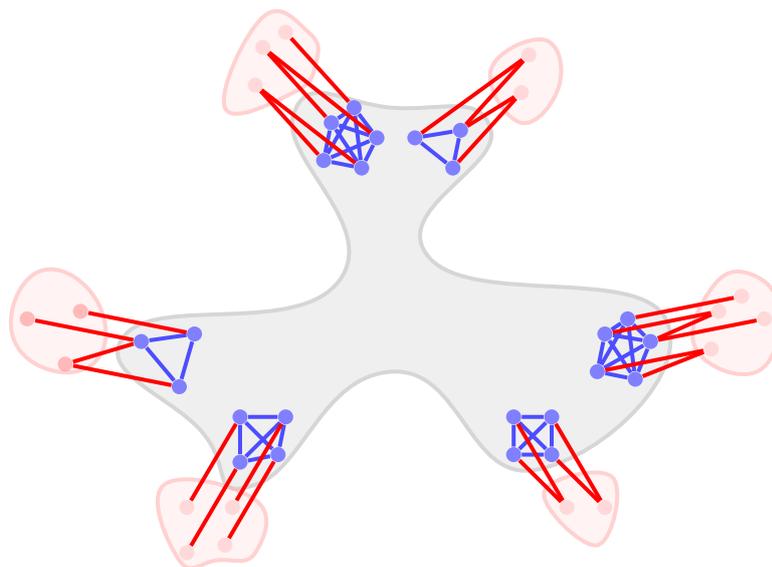
\begin{figure}
    \centering
    
\begin{tikzpicture}[
  node/.style={draw=none, inner sep=3pt},
  clique/.style={circle, fill=blue!50, inner sep=2pt},
  edge/.style={line width=2.5pt, rounded corners=8pt},
  clique_edge/.style={line width=1.5pt, color=blue!70},
  external/.style={circle, fill=red!50, inner sep=2pt, opacity=0.5},
  patate/.style={fill=red!8, draw=red!30, line width=1.5pt, opacity=0.6}
]


\node[node] (A) at (-2,-1) {};
\node[node] (B) at (0,1.5) {};
\node[node] (C) at (2,-1) {};

\node[node] (A1) at (-3,-0.8) {};
\node[node] (A2) at (-1.8,-1.8) {};
\node[node] (B1) at (-0.6,2) {};
\node[node] (B2) at (0.6, 2) {};
\node[node] (C1) at (1.8,-1.8) {};
\node[node] (C2) at (3,-0.8) {};

\begin{scope}
  \fill[black!8,draw=black!20,line width=1.5pt, opacity = 0.8]
    plot [smooth cycle, tension=0.9]
    coordinates {
      (0,-1)  (-1.8,-2.5) (-2.5,-1.8) (-3.5,-0.5)
      (-0.7, 0.2) (-1.3,2.4) (0,2.5) (1.3,2.2) (0.5,0.5)
      (3.6,-0.3)  (1.9,-2.3)
    };
\end{scope}

\node[clique] (A1c1) at ($ (A1) + (0.4,0.3) $) {};
\node[clique] (A1c2) at ($ (A1) + (0.2,-0.4) $) {};
\node[clique] (A1c3) at ($ (A1) + (-0.3,0.2) $) {};
\draw[clique_edge] (A1c1) -- (A1c2) -- (A1c3) -- (A1c1);
\begin{scope}
  \fill[patate]
    plot [smooth cycle, tension=0.8]
    coordinates {
      (-5,-0.5)  (-4.7,0.3) (-4,0.1) (-3.8,-0.7) (-4.4,-1)
    };
\end{scope}
\node[external] (A1e1) at ($ (-4.5,-0.5) + (0.4,0.3) $) {};
\node[external] (A1e2) at ($ (-4.5,-0.5) + (0.2,-0.4) $) {};
\node[external] (A1e3) at ($ (-4.5,-0.5) + (-0.3,0.2) $) {};

\draw[clique_edge, color=red] (A1c1) -- (A1e1);
\draw[clique_edge, color=red] (A1c3) -- (A1e2);
\draw[clique_edge, color=red] (A1c2) -- (A1e2);
\draw[clique_edge, color=red] (A1c3) -- (A1e3);

\node[clique] (A2c1) at ($ (A2) + (0.4,0.2) $) {};
\node[clique] (A2c2) at ($ (A2) + (0.3,-0.3) $) {};
\node[clique] (A2c3) at ($ (A2) + (-0.2,0.2) $) {};
\node[clique] (A2c4) at ($ (A2) + (-0.2,-0.4) $) {};
\draw[clique_edge] (A2c1) -- (A2c2) -- (A2c3) -- (A2c4) -- (A2c1) -- (A2c3);
\draw[clique_edge] (A2c4) -- (A2c2) ;

\node[external] (A2e1) at ($ (-2.5,-3) + (0.4,0.2) $) {};
\node[external] (A2e2) at ($ (-2.5,-3) + (0.3,-0.3) $) {};
\node[external] (A2e3) at ($ (-2.5,-3) + (-0.2,0.2) $) {};
\node[external] (A2e4) at ($ (-2.5,-3) + (-0.2,-0.4) $) {};

\begin{scope}
  \fill[patate]
    plot [smooth cycle, tension=0.8]
    coordinates {
      (-2.8,-3.5) (-3,-2.5)  (-1.8,-2.7) (-1.8,-3.4) (-2.3,-3.5)
    };
\end{scope}
\draw[clique_edge, color=red] (A2c1) -- (A2e1);
\draw[clique_edge, color=red] (A2c2) -- (A2e2);
\draw[clique_edge, color=red] (A2c3) -- (A2e3);
\draw[clique_edge, color=red] (A2c4) -- (A2e4);

\node[clique] (B1c1) at ($ (B1) + (0.4,0.1) $) {};
\node[clique] (B1c2) at ($ (B1) + (0.2,-0.3) $) {};
\node[clique] (B1c3) at ($ (B1) + (-0.2,0.3) $) {};
\node[clique] (B1c4) at ($ (B1) + (-0.3,-0.2) $) {};
\node[clique] (B1c5) at ($ (B1) + (0.1,0.5) $) {};
\draw[clique_edge] (B1c1) -- (B1c2) -- (B1c3) -- (B1c4) -- (B1c5) -- (B1c1) -- (B1c3);
\draw[clique_edge] (B1c4) -- (B1c2) -- (B1c5) ;
\draw[clique_edge] (B1c4) -- (B1c1);
\draw[clique_edge] (B1c3) -- (B1c5);

\node[external] (B1e3) at ($ (-1.5,3) + (-0.2,0.3) $) {};
\node[external] (B1e4) at ($ (-1.5,3) + (-0.3,-0.2) $) {};
\node[external] (B1e5) at ($ (-1.5,3) + (0.1,0.5) $) {};

\begin{scope}
  \fill[patate]
    plot [smooth cycle, tension=0.8]
    coordinates {
      (-2.2,2.5) (-1.7,3.7) (-1.1,3.5) (-1,3.1) (-1.3,2.7)
    };
\end{scope}
\draw[clique_edge, color=red] (B1c1) -- (B1e3);
\draw[clique_edge, color=red] (B1c2) -- (B1e4);
\draw[clique_edge, color=red] (B1c3) -- (B1e3);
\draw[clique_edge, color=red] (B1c4) -- (B1e4);
\draw[clique_edge, color=red] (B1c5) -- (B1e5);

\node[clique] (B2c1) at ($ (B2) + (0.3,0.2) $) {};
\node[clique] (B2c2) at ($ (B2) + (0.2,-0.3) $) {};
\node[clique] (B2c3) at ($ (B2) + (-0.3,0.1) $) {};
\draw[clique_edge] (B2c1) -- (B2c2) -- (B2c3) -- (B2c1);

\node[external] (B2e1) at ($ (1.5,3) + (0.3,0.2) $) {};
\node[external] (B2e2) at ($ (1.5,3) + (0.2,-0.3) $) {};

\begin{scope}
  \fill[patate]
    plot [smooth cycle, tension=0.8]
    coordinates {
      (1.3,3)  (2,3.4) (2.2,2.8) (1.7,2.3)
    };
\end{scope}
\draw[clique_edge, color=red] (B2c1) -- (B2e1);
\draw[clique_edge, color=red] (B2c2) -- (B2e2);
\draw[clique_edge, color=red] (B2c3) -- (B2e1);
\draw[clique_edge, color=red] (B2c1) -- (B2e2);

\node[clique] (C1c1) at ($ (C1) + (0.3,0.2) $) {};
\node[clique] (C1c2) at ($ (C1) + (0.3,-0.3) $) {};
\node[clique] (C1c3) at ($ (C1) + (-0.2,0.2) $) {};
\node[clique] (C1c4) at ($ (C1) + (-0.2,-0.3) $) {};
\draw[clique_edge] (C1c1) -- (C1c2) -- (C1c3) -- (C1c4) -- (C1c1) -- (C1c3);
\draw[clique_edge] (C1c4) -- (C1c2) ;
\node[external] (C1e1) at ($ (2.5,-3) + (0.3,0.2) $) {};
\node[external] (C1e3) at ($ (2.5,-3) + (-0.2,0.2) $) {};

\begin{scope}
  \fill[patate]
    plot [smooth cycle, tension=0.8]
    coordinates {
      (2.7,-3.3) (2.9,-2.4) (2,-2.6)
    };
\end{scope}
\draw[clique_edge, color=red] (C1c1) -- (C1e1);
\draw[clique_edge, color=red] (C1c2) -- (C1e1);
\draw[clique_edge, color=red] (C1c3) -- (C1e3);
\draw[clique_edge, color=red] (C1c4) -- (C1e3);

\node[clique] (C2c1) at ($ (C2) + (0.4,0.2) $) {};
\node[clique] (C2c2) at ($ (C2) + (0.2,-0.3) $) {};
\node[clique] (C2c3) at ($ (C2) + (-0.2,0.3) $) {};
\node[clique] (C2c4) at ($ (C2) + (-0.3,-0.2) $) {};
\node[clique] (C2c5) at ($ (C2) + (0.1,0.5) $) {};
\draw[clique_edge] (C2c1) -- (C2c2) -- (C2c3) -- (C2c4) -- (C2c5) -- (C2c1) -- (C2c3);
\draw[clique_edge] (C2c4) -- (C2c2) -- (C2c5) ;
\draw[clique_edge] (C2c4) -- (C2c1);
\draw[clique_edge] (C2c3) -- (C2c5);

\node[external] (C2e1) at ($ (4.5,-0.5) + (0.4,0.2) $) {};
\node[external] (C2e3) at ($ (4.5,-0.5) + (-0.2,0.3) $) {};
\node[external] (C2e4) at ($ (4.5,-0.5) + (-0.3,-0.2) $) {};
\node[external] (C2e5) at ($ (4.5,-0.5) + (0.1,0.5) $) {};

\begin{scope}
  \fill[patate]
    plot [smooth cycle, tension=0.8]
    coordinates {
      (5.1,-0.2) (4.7,0.3) (4.2,0.1) (4,-0.7) (4.8,-1)
    };
\end{scope}
\draw[clique_edge, color=red] (C2c1) -- (C2e1);
\draw[clique_edge, color=red] (C2c1) -- (C2e3);
\draw[clique_edge, color=red] (C2c2) -- (C2e4);
\draw[clique_edge, color=red] (C2c3) -- (C2e3);
\draw[clique_edge, color=red] (C2c4) -- (C2e4);
\draw[clique_edge, color=red] (C2c5) -- (C2e5);

\end{tikzpicture}
    \caption{An illustration of the parameter $\min \{k\mid G\in \mathcal{S}_k \rhd \mathcal{I}_k\}$, which is equal to the $\alpha$-modulator number. The middle part (in gray) represents the torso $\text{\sf torso}(G,X)$, which has size at most $k$. The red parts correspond to the connected components of $G - X$, each having independence number at most $k$.}
\label{fig:Torso}
\end{figure}

\section{Example of applications}\label{sec:AlgoAlphaLeafWidth}

In Section~\ref{sec:SepWidth}, we introduced a parameter, the $\alpha$-modulator number, which is sublinear in intersection graphs of similarly sized $\beta$-fat objects in $\mathbb{R}^d$. 

In this section, we prove that the \textsc{2-Subcoloring} problem is \textsc{FPT} when parameterized by the $\alpha$-modulator number, provided that a corresponding modulator is given as input. We also show that the \textsc{Two Sets Cut-Uncut} problem is in \textsc{XP} with respect to this parameter.

\smallskip
To design an FPT/XP algorithm parameterized by $\amod(G)$, two conditions are necessary. First, the problem must be FPT/XP parameterized by the independence number, in order to handle the connected components after removing the modulator. Second, it must be FPT/XP parameterized by the vertex cover number, to support the dynamic programming approach on the modulator. The main difficulty lies in combining these two algorithmic components coherently. 

\subsection{2-Subcoloring}

Recall that a $2$-subcoloring of a graph $G$ is partition of the vertex set $V(G)$ into two sets $A$ and $B$ such that both $G[A]$ and $G[B]$ are cluster graphs. An equivalent formulation, which will be useful in the following, is that a $2$-subcoloring of a graph is a $2$-coloring $\mu : V(G)\rightarrow \{0,1\}$ of $V(G)$ with no monochromatic induced path on three vertices. 

In~\cite{fiala}, it was shown that \textsc{2-Subcoloring} can be solved in time $2^{O(\tw(G))}\cdot n$ using a standard dynamic programming algorithm over a nice tree decomposition, with one additional technical feature: for each partial coloring of a bag, the algorithm associates a Boolean flag to each clique in the bag, indicating whether some vertices of that clique have been forgotten in the decomposition.

In~\cite{kanj2018parameterized}, Kanj \emph{et~al.}\ proved that \textsc{2-Subcoloring} is FPT when parameterized by the total number of clusters in the resulting subcoloring. It is easy to observe that this number is at most twice the independence number $\alpha(G)$ of the input graph, which directly yields an FPT algorithm parameterized by $\alpha(G)$. They were even able to deal with a partial version of the problem.

\begin{theorem}[Theorem~1.4~\cite{kanj2018parameterized}]\label{thm:FPTSubcoloring}
Let $G$ be a graph, $c$ an integer, and $(A',B')$ a partial $2$-subcoloring of $G$.
It is possible to decide whether there exists a $2$-subcoloring $(A,B)$ of $G$ such that $A'\subseteq A$, $B'\subseteq B$, and the total number of clusters in $(A,B)$ is at most $c$, in time $2^{O(c)}\cdot n^2$.
\end{theorem}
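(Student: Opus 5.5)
This result is imported verbatim from Kanj \emph{et~al.}~\cite{kanj2018parameterized}. I would reprove it by bounded branching that guesses one ``seed'' vertex per cluster, followed by a polynomial-time \textsc{2-SAT} completion. I use the formulation of a $2$-subcoloring as a $2$-coloring $\mu$ with no monochromatic induced $P_3$. The measure of the branching is the budget $c$ of clusters still allowed.

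\textbf{Completion step (done first, since the branching relies on it).} Suppose we are given a partial coloring extending $(A',B')$ together with a set of seeds $s_1,\dots,s_\ell$, $\ell\leqslant c$. Each seed has a colour, seeds of the same colour are pairwise non-adjacent, and every cluster of the sought solution is required to contain exactly one seed. Then an uncolored vertex $x$ placed in $A$ must lie in the cluster of an $A$-seed adjacent to it.
\begin{itemize}
\item If $x$ has two $A$-seed neighbours, then $x\notin A$.
\item If $x$ has none, then $x\notin A$ either.
\item Otherwise its $A$-cluster is determined, and symmetrically for $B$.
\end{itemize}
So each vertex has a Boolean variable (its colour) and a determined cluster for each colour. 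The solution conditions then become binary clauses:
\begin{itemize}
\item two vertices assigned the same colour and the same seed cluster must be adjacent;
\item two vertices assigned the same colour and different seed clusters must be non-adjacent.
\end{itemize}
These conditions capture exactly ``each colour class is a disjoint union of cliques, each containing its seed''. The resulting \textsc{2-SAT} instance has $O(n^2)$ clauses and is solved in time $O(n^2)$.

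\textbf{Branching step.} I would repeatedly pick an uncolored vertex $v$ that cannot yet be accounted for by the current seeds, meaning it has no $A$-seed neighbour or no $B$-seed neighbour.
\begin{itemize}
\item If it has neither, $v$ must be a new seed. We branch on its colour (two branches) and decrease $c$.
\item If it has, say, an $A$-seed neighbour $s$ but no $B$-seed neighbour, we branch on two options. Either $v$ becomes a new $B$-seed, which decreases $c$. Or $v$ is fixed into $A$ in the cluster of $s$.
\end{itemize}
In the second option we immediately propagate the forced consequences of $P_3$-freeness. Every $A$-coloured vertex in the cluster of $s$ must be adjacent to $v$. Every neighbour of $v$ that is non-adjacent to $s$, or adjacent to another $A$-seed, is forced into $B$. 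Once all vertices are accounted for, or $c$ reaches $0$, we call the completion step. Correctness is a routine check: in any solution, choosing as seed the first-processed vertex of each cluster is consistent with exactly one branch at every step.

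\textbf{Main obstacle.} The difficulty is bounding the search tree by $2^{O(c)}$ rather than $n^{O(c)}$. The non-decreasing branch (``$v$ joins $s$'') does not consume budget, so a naive count allows many such branches between two seed creations. My first attempt is to show that this branch is never a genuine guess. Let $v$ be chosen as a vertex of \emph{maximum} degree among the unaccounted ones. I would argue, via an exchange argument on clusters, that if some solution makes $v$ a new $B$-seed then there is also one where the cluster of $v$ is handled consistently, or vice versa. Alternatively, I would argue that one of the two options is always detected as infeasible by the propagation plus a \textsc{2-SAT} feasibility test. In either case, at each node only the budget-decreasing branches create genuine multiplicity, so the tree has at most $4^c$ leaves. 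Each node costs $O(n^2)$ for propagation and completion; I would aim for the $n^2$ factor by amortising the propagation over the whole path rather than paying for each node separately. If the exchange argument fails, my fallback is to follow Kanj \emph{et~al.}\ more closely and branch on pairs (seed, colour) with a potential that combines $c$ with the number of ``open'' seeds whose cluster is not yet closed.
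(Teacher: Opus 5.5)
The paper only cites this theorem, so your sketch must stand on its own. It does not at the decisive step: it establishes the easy $n^{O(c)}$ bound, not the $2^{O(c)}$ bound.

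\textbf{What works.} Your completion step is correct. Once every cluster of the sought solution is known to contain exactly one chosen seed, each vertex has at most one candidate cluster per colour, and the remaining constraints form a \textsc{2-SAT} instance. On its own, however, this gives only the $n^{O(c)}$ algorithm obtained by guessing the at most $c$ seeds and their colours. The whole content of the statement is the $2^{O(c)}$ dependence, and your ``main obstacle'' paragraph offers conjectures, not a proof.

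\textbf{Option (ii) is false.} Let $G$ be a disjoint union of $c/2$ cliques of size $N$, with no precolouring. All degrees are equal, so the maximum-degree rule is vacuous. When you pick a vertex $v$ adjacent to the $A$-seed $s$ of its clique, joining $v$ to $s$ triggers no propagation. Making $v$ a new $B$-seed also extends to a solution. Each clique therefore contributes $2N$ branch outcomes: the colour of its first seed, times the moment the opposite seed appears (or never). The tree thus has $(2N)^{c/2}=(4n/c)^{c/2}$ leaves, not $4^c$. Moreover, a \textsc{2-SAT} test at an intermediate node is not a valid infeasibility certificate, because it presupposes that every cluster is already seeded.

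\textbf{Option (i) and the fallback.} Option~(i) comes with no argument, and no general dominance between the two branches holds. Joining forces $N(v)\setminus N[s]$ into $B$, which can be infeasible while the seed branch succeeds. Conversely, the seed branch can exceed the budget while joining succeeds. Your fallback of branching on (seed, colour) pairs has up to $2n$ choices per seed unless you restrict the candidates to $f(c)$ vertices, so it again gives $n^{O(c)}$.

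\textbf{A smaller bug.} Precoloured or propagation-coloured vertices are never picked, so their clusters may never receive a seed, and the completion step then wrongly rejects. Every vertex not yet assigned to a seed's cluster must be processed.

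\textbf{The missing idea.} You need a way to name the clusters of the solution with $f(c)$ guesses rather than $n$. Iterative compression provides this.
\begin{itemize}
\item Insert the vertices $v_1,\dots,v_n$ one at a time. The property, with the precolouring restricted, is hereditary. So if $G_{i-1}=G[\{v_1,\dots,v_{i-1}\}]$ has no solution, you can reject.
\item Otherwise, the clusters of any solution for $G_{i-1}$, together with $\{v_i\}$, cover $V(G_i)$ by at most $c+1$ cliques $Q_j$.
\item In every $2$-subcolouring, $Q_j\cap A$ lies in a single $A$-cluster and $Q_j\cap B$ in a single $B$-cluster.
\item Fix labels and colours for the at most $c$ new clusters, and guess for each $j$ which labels contain these two parts. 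This is $c^{O(c)}$ choices.
\item Each guess gives every vertex one candidate cluster per colour, exactly as your seeds did, and your \textsc{2-SAT} finishes the step.
\end{itemize}
This yields fixed-parameter tractability with $c^{O(c)}$ dependence. Reaching the exact bound $2^{O(c)}\cdot n^2$ of the statement requires further refinement.
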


\medskip

We are now equipped to present the main algorithm, which borrows some ideas of the dynamic programming on tree decompositions presented in~\cite{fiala}.

\begin{theorem}
The \textsc{$2$-Subcoloring} problem can be solved in time $2^{O(\amod(G))}\cdot n^{O(1)}$, when the corresponding modulator is given in input.
\end{theorem}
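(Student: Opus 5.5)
We are given a modulator $S$ with $|S|\leqslant k$, and every component $C$ of $G-S$ satisfies $\alpha(G[C])\leqslant k$. The plan is to enumerate a small amount of global information on $S$ and then solve each component independently with Theorem~\ref{thm:FPTSubcoloring}.

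\textbf{Guessing on $S$.} First, we guess the colouring $\mu_S: S\to\{0,1\}$ ($2^k$ choices) and reject it if $G[S]$ already contains a monochromatic induced $P_3$. In any $2$-subcolouring extending $\mu_S$, each cluster meeting $S$ is a clique. Following the flag idea of~\cite{fiala}, we also guess the partition of each colour class of $S$ into its clusters, restricted to $S$. Each part must be a clique of $G[S]$, and two parts of the same colour must have no edges between them. This is a partition of a $k$-set, so there are $k^{O(k)}$ choices. To stay within $2^{O(k)}$, we instead observe that, given $\mu_S$, the clusters on $S$ are exactly the connected components of $G[\mu_S^{-1}(i)]$, which must be cliques. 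So the partition is forced and nothing extra has to be guessed.

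\textbf{Conditions each component must satisfy.} A cluster can contain vertices of $S$ together with vertices of several components $C$. However, vertices of distinct components are non-adjacent, so a cluster meets at most one component. Hence each $S$-cluster $K$ is either extended inside a single component or extended nowhere. Guessing which component extends $K$ is too expensive, but it is unnecessary. Fix a component $C$ and a colouring $\mu_C$ of $C$. The combined colouring on $S\cup C$ has no monochromatic induced $P_3$ exactly when, for each $S$-cluster $K$ of colour $i$, the set $N(K)\cap\mu_C^{-1}(i)$ is either empty or a set $Q$ that is complete to $K$, forms a clique, and has no other same-coloured neighbours. Two components $C,C'$ cannot both extend the same $K$: otherwise a vertex of $C$, a vertex of $K$ and a vertex of $C'$ would form a monochromatic induced $P_3$. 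This gives a bipartite matching-like constraint between components and $S$-clusters. The clean way to handle it is to note that if $C$ has a vertex $v$ of colour $i$ adjacent to $K$, then any other component with a same-coloured neighbour of $K$ creates a violation. So each $S$-cluster is ``claimed'' by at most one component, and the global problem becomes: each component $C$ chooses a set $\mathcal{K}_C$ of $S$-clusters it claims, the sets $\mathcal{K}_C$ must be pairwise disjoint, and $C$ must be colourable with exactly those clusters claimed.

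\textbf{Reducing to Theorem~\ref{thm:FPTSubcoloring}.} For a fixed $C$ and a fixed $\mathcal{K}_C$, we encode the feasibility test as an instance of Theorem~\ref{thm:FPTSubcoloring}. We take the graph $G[S\cup C]$ with the partial colouring $(A',B')$ given by $\mu_S$. Forbidding claims of clusters outside $\mathcal{K}_C$ is done by precolouring: every vertex of $C$ adjacent to an unclaimed $S$-cluster of colour $i$ must receive colour $1-i$, so we add it to $A'$ or $B'$ accordingly. The theorem then checks for a valid $2$-subcolouring of $G[S\cup C]$. Its cluster count is at most $2\alpha(G[S\cup C])\leqslant 2(2k)$, so we may take $c=4k$, and the running time is $2^{O(k)}n^2$.

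\textbf{Main obstacle: the disjoint assignment.} Enumerating $\mathcal{K}_C$ for all components simultaneously is too costly. I would handle this with a dynamic programme over the components in sequence. The state is the subset of $S$-clusters already claimed, of which there are at most $2^k$. For each component $C$ and each subset $\mathcal{K}\subseteq$ (clusters), we precompute $\mathrm{ok}(C,\mathcal{K})$ using the test above. Note that claiming fewer clusters is not monotone, since forbidding a claim adds constraints, so the exact set must be tracked. The transition then goes from state $X$ to $X\cup\mathcal{K}$ whenever $\mathcal{K}\cap X=\emptyset$ and $\mathrm{ok}(C,\mathcal{K})$ holds. The cost is $3^{k}$ per component via subset-disjoint enumeration. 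The total running time is
$$2^k\cdot n\cdot 3^k\cdot 2^{O(k)}n^2=2^{O(k)}n^{O(1)},$$
which is the claimed bound. The delicate point to verify is that the per-component local validity checks, together with the disjointness of claimed clusters, are sufficient for global validity. This holds because every induced $P_3$ lies within $S\cup C$ for some single component $C$, except those whose two endpoints lie in different components and whose middle vertex lies in $S$, and disjointness excludes exactly those.
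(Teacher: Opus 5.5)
Your proposal is correct and follows essentially the paper's route. Both run a dynamic programme over the components of $G-S$ whose state is the colouring of $S$ together with which $S$-clusters have already been extended; the paper's per-vertex signature $\sigma:S\to\{\top,\bot\}$ is exactly your set of claimed clusters. Each transition forbids, by precolouring, extensions of clusters not allotted to the current component, and then calls Theorem~\ref{thm:FPTSubcoloring} on $G[S\cup C_t]$ with $c=O(k)$. Your cluster-level bookkeeping (guessing $\mu_S$ upfront and noting the $S$-clusters are forced) is only a cleaner packaging of the same idea. Note that your test really checks ``claims a subset of $\mathcal{K}$'' rather than ``exactly $\mathcal{K}$'', which is harmless because disjointness of the allowed sets is all the DP needs.
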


\begin{proof}

We start by introducing definitions that will be useful for our dynamic programming algorithm. Let $G$ be a graph and let $\mu : V(G) \rightarrow \{0,1\}$ be a $2$-subcoloring of $G$.  
For a vertex $v\in V(G)$, we denote by 
\[
C_{\mu}(v) = \{u\in N_G[v] \mid \mu(u) = \mu(v)\}
\]
the clique containing $v$ in the $2$-subcoloring $\mu$.  

Given a subset of vertices $X\subseteq V(G)$ and a function $\sigma : X\rightarrow \{\bot,\top\}$, we say that $\sigma$ is a \emph{signature of $\mu$ on $X$} if:
\begin{enumerate}[(i)]
\item for every $v\in X$ and every $u\in C_\mu(v)\cap X$, we have $\sigma(u)=\sigma(v)$; 
\item for every $v\in X$ with $\sigma(v)=\bot$, the whole clique $C_\mu(v)$ is contained in $X$.
\end{enumerate}

Intuitively, a signature $\sigma$ records how the clusters defined by $\mu$ interact with $X$: vertices in the same cluster receive the same value, and $\bot$ indicates that the cluster of the vertex is entirely contained in $X$, i.e., there is no "hidden" vertex outside $X$ in the cluster.  

If, in addition, for every $v\in X$, the implication $C_\mu(v)\subseteq X$ implies $\sigma(v)=\bot$, then $\sigma$ is called the \emph{exact signature} of $\mu$ on $X$. While a subcoloring may have several signatures, it has a unique exact signature.

\medskip
We now describe the dynamic programming algorithm. Let $S\subseteq V(G)$ be such that $|S|\leqslant k$, and for every connected component $C$ of $G-S$, we have $\alpha(G[C])\leqslant k$. Let $C_1,\dots,C_q$ ($q\geq 0$) be the connected components of $G-S$. For $0\leqslant t\leqslant q$, define
\[
V_t = S \cup \bigcup_{i=1}^t C_i.
\]
We maintain a table $\mathbf{Tab}_t$ whose entries are pairs
\[
(\mu,\sigma), \qquad \mu : S \to \{0,1\}, \ \sigma : S \to \{\top,\bot\},
\]
with the following meaning: $(\mu,\sigma)\in \mathbf{Tab}_t$ if and only if $G[V_t]$ admits a $2$-subcoloring $\nu$ such that $\nu\restriction S=\mu$ and $\sigma$ is a signature of $\nu$ on $S$.

Observe that $G$ admits a $2$-subcoloring if and only if $\mathbf{Tab}_q \neq \emptyset$.

\medskip
For $t=0$, we add $(\mu,\sigma)$ to $\mathbf{Tab}_0$ if and only if $\mu$ is a $2$-subcoloring of $G[S]$ and $\sigma$ verifies the point (i) of the definition of a signature. In this case, $\sigma$ is trivially a signature of $\mu$ on $S$.

\medskip
Assume that $\mathbf{Tab}_{t-1}$ has been computed for some $t\geqslant 1$. Let $(\mu,\sigma_1)\in \mathbf{Tab}_{t-1}$, and let $\sigma_2 : S \to \{\top,\bot\}$ such that
\[
\sigma_1(v)=\top \ \Rightarrow\ \sigma_2(v)=\top \quad \text{for all } v\in S.
\]
We test whether $(\mu,\sigma_2)$ can be added to $\mathbf{Tab}_t$.

First, we discard $(\mu,\sigma_2)$ if there exist adjacent vertices $u,v\in S$ with $\sigma_2(u)\neq \sigma_2(v)$ and $\mu(u)=\mu(v)$, as this would violate condition (i) of a signature.  

Next, we discard the pair if there exists a vertex $u\in C_t$ having two neighbors $v,v'\in S$ such that $\sigma_2(v)=\sigma_2(v')=\bot$ but $\mu(v)\neq \mu(v')$, as any color assignment to $u$ would violate condition (ii) of a signature.

Otherwise, we define a partial coloring $\tilde{\mu}$ on $S\cup C_t$ as follows: for every $v\in C_t$, if there exists a neighbor $u\in S$ such that $\sigma_2(u)=\bot$ or $\sigma_1(u)=\top$, then we set
\[
\tilde{\mu}(v)=1-\mu(u).
\]
If this assignment is inconsistent or does not define a valid partial $2$-subcoloring, we discard the pair.

Otherwise, we run the algorithm of Theorem~\ref{thm:FPTSubcoloring} on $G[S\cup C_t]$ with the partial coloring $\tilde{\mu}$. Since $\alpha(G[S\cup C_t])\leq 2k$, this can be done in time $2^{O(k)}\cdot n^{O(1)}$. If a valid extension is found, we add $(\mu,\sigma_2)$ to $\mathbf{Tab}_t$. We call the pair $(\mu, \sigma_1)\in \mathbf{Tab}_{t-1}$ the \emph{ancestor} of $(\mu, \sigma_2)\in \mathbf{Tab}_t$

\medskip
We now prove that the transitions are correct.

\smallskip
\emph{Completeness.}
Assume that $G[V_t]$ admits a $2$-subcoloring $\nu$ with signature $\sigma_2$ on $S$. Let $\sigma_1$ be the exact signature of $\nu\restriction V_{t-1}$ on $S$. Then $(\mu,\sigma_1)\in \mathbf{Tab}_{t-1}$ by induction.

Moreover, for every $v\in S$, $\sigma_2(v)=\bot$ implies $\sigma_1(v)=\bot$. By construction of $\tilde{\mu}$, every vertex of $C_t$ that is constrained by $S$ receives the same color as in $\nu$, hence $\tilde{\mu}$ is consistent with $\nu$. Therefore, the algorithm of Theorem~\ref{thm:FPTSubcoloring} finds a valid extension, and $(\mu,\sigma_2)$ is added to $\mathbf{Tab}_t$.

\smallskip
\emph{Soundness.}
Suppose that $(\mu,\sigma_2)$ is added to $\mathbf{Tab}_t$. Then there exists an ancestor pair $(\mu,\sigma_1)\in \mathbf{Tab}_{t-1}$ and a $2$-subcoloring $\tilde{\nu}$ of $G[S\cup C_t]$ extending $\tilde{\mu}$. Let $\nu$ be a $2$-subcoloring of $G[V_{t-1}]$ such that $\nu\restriction S = \mu$ and which has $\sigma_1$ as a signature on $S$, which exists by induction hypothesis.

We define $\lambda$ on $V_t$ by combining $\nu$ and $\tilde{\nu}$. This is well-defined since both agree on $S$.

We claim that $\lambda$ is a $2$-subcoloring. Otherwise, there exists a monochromatic path $a b c$. Since $\lambda \restriction V_{t-1}$ and $\lambda \restriction (S\cup C_t)$ are both $2$-subcolorings, such a path must intersect both $V_{t-1}\setminus S$ and $C_t$, hence $b\in S$. Note that $b$ has a neighbor with the same color in $V_{t-1}$. By correction of $\mathbf{Tab}_{t-1}$, it follows that $\sigma_1(b)=\top$ and thus the construction of $\tilde{\mu}$ forces $\tilde{\mu}(c)=1-\mu(b)$, a contradiction.

Finally, one checks that $\sigma_2$ satisfies the definition of a signature for $\lambda$ on $S$ since otherwise the pair would have been rejected directly.

\medskip
Each table $\mathbf{Tab}_t$ contains at most $2^{2k}$ entries, and each of them can be computed in time $2^{O(k)}\cdot n^{O(1)}$. The overall complexity is $2^{O(k)}\cdot n^{O(1)}$, which completes the proof.
\end{proof}

As a corollary from Theorem~\ref{thm:BoundAlphaMod} and the result above, we obtain directly a subexponential algorithm for intersection graphs of similarly-sized fat objects.

\begin{theorem}\label{thm:Subexp2Subcoloring}
For every constants $d\geqslant 2$ and $\beta \geqslant 1$, there is an algorithm solving \textsc{2-Subcoloring} in time $2^{O(n^{1-1/(d+1)})}$ on intersection graphs of similarly sized $\beta$-fat objects in $\mathbb{R}^d$.
\end{theorem}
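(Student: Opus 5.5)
The plan is to combine Theorem~\ref{thm:Separator} (equivalently Theorem~\ref{thm:BoundAlphaMod}) with the preceding FPT algorithm parameterized by the $\alpha$-modulator number. The only subtlety is that the separator theorem, as stated, needs the geometric representation. So we handle two cases according to whether a representation is available.

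\textbf{With a representation.} We compute the set $S$ of Theorem~\ref{thm:Separator} in polynomial time. Then $|S|\leqslant d\,n^{1-1/(d+1)}$, and every component $C$ of $G-S$ has $\alpha(G[C])\leqslant (2\beta d)^d n^{1-1/(d+1)}$. Set $k:=\max\{d,(2\beta d)^d\}\cdot n^{1-1/(d+1)}$. Since $d$ and $\beta$ are constants, $k=O(n^{1-1/(d+1)})$, and $S$ is a valid modulator witnessing $\amod(G)\leqslant k$. We then run the dynamic programming algorithm of the previous theorem with this $S$ as input. Inspecting that proof, its running time is $2^{O(k)}\cdot n^{O(1)}$, where $k$ is the bound certified by the given modulator and not necessarily the exact value $\amod(G)$. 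Concretely, the tables have $2^{2|S|}$ entries, and each call to Theorem~\ref{thm:FPTSubcoloring} on $G[S\cup C_t]$ uses $c\leqslant 2\alpha(G[S\cup C_t])\leqslant 2(|S|+\alpha(G[C_t]))\leqslant 4k$. The total running time is therefore $2^{O(n^{1-1/(d+1)})}\cdot n^{O(1)}=2^{O(n^{1-1/(d+1)})}$.

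\textbf{Without a representation.} Here we cannot compute the modulator as in Theorem~\ref{thm:Separator}. The brute-force alternative is to guess $S$ among all subsets of size at most $k$ and check the components, but this costs $n^{O(k)}=2^{O(k\log n)}$ and loses a logarithmic factor. I would therefore state the result in the model where the geometric representation is part of the input, consistent with ``when the geometric representation is given'' in Theorem~\ref{thm:Separator}, and remark that the representation-free version holds with an extra $\log n$ factor in the exponent.

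\textbf{Main obstacle.} The only step needing care is verifying that the previous algorithm's analysis depends on the sizes certified by the supplied modulator rather than on the optimal $\amod(G)$. The per-component call uses $c=O(\alpha)$ clusters, since any $2$-subcoloring of a graph with independence number $\alpha$ has at most $2\alpha$ clusters: picking one vertex per cluster within a color class gives an independent set. Because everything in that analysis is bounded in terms of $|S|$ and $\alpha(G[C_t])$, the substitution above goes through, and no further argument is needed.
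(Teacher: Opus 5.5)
Your proposal is correct and follows the paper's route. The paper presents the theorem as a direct corollary: compute the modulator of Theorem~\ref{thm:Separator} from the given geometric representation, then run the $2^{O(\amod(G))}\cdot n^{O(1)}$ dynamic programming algorithm on it. You additionally make explicit what the paper leaves implicit: the running time is governed by the bound certified by the supplied modulator, the call to Theorem~\ref{thm:FPTSubcoloring} can use $c\leqslant 4k$ via the $2\alpha$ cluster bound, and the geometric representation is assumed to be part of the input.
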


\subsection{Two Sets Cut-Uncut}

Recall that the \textsc{Two Sets Cut-Uncut} problem asks, given an edge-weighted graph $(G,w)$ and two sets of distinct terminal sets $S$ and $T$, to find an $S$-$T$-cut $(A,B)$ of minimum weight such that $S$ (resp.\ $T$) is connected in $G[A]$ (resp.\ $G[B]$). In~\cite{bentert2024parameterized} it was proved that this problem can be solved in time $n^{O(\alpha(G)^2)}$ on $n$-vertex graph $G$. We start by observing that it is possible to drop the quadratic dependency when the input graph is the intersection graph of similarly sized fat objects.
\smallskip 
We begin with two lemmas previously proved in~\cite{de2018framework}.  
Given a constant $\kappa \geqslant 1$ and a graph $G$, a \emph{$\kappa$-partition of $G$} is a partition $\mathcal{P} = (V_1,\ldots,V_q)$ of the vertex set $V(G)$ such that each $V_i$ can be partitioned into at most $\kappa$ cliques.  
In particular, when $\kappa = 1$, each part $V_i$ must itself be a clique in $G$.  
A partition $\mathcal{P} = (V_1,\ldots,V_q)$ is said to be \emph{greedy} if there exists a maximal independent set $S \subseteq V(G)$ such that each $V_i$ contains exactly one vertex of $S$.  
The \emph{$\mathcal{P}$-contraction} of $G$, denoted $G_{\mathcal{P}}$, is the graph whose vertex set is $\mathcal{P}$, where two parts $V_i$ and $V_j$ (with $i \neq j$) are adjacent if and only if there exists an edge in $G$ between a vertex of $V_i$ and a vertex of $V_j$.

The first lemma shows that one can always obtain a greedy $\kappa$-partition $\mathcal{P}$ for which the contracted graph $G_{\mathcal{P}}$ has bounded maximum degree.

\begin{lemma}[Lemma~9, \cite{de2018framework}]\label{lemma:KappaPartition}
Let $d \geqslant 2$ and $\beta \geqslant 1$ be constants.  
Then there exist constants $\kappa$ and $\Delta$ such that for any intersection graph $G$ of $n$ similarly sized $\beta$-fat objects in $\mathbb{R}^d$, a greedy $\kappa$-partition $\mathcal{P}$ for which $G_{\mathcal{P}}$ has maximum degree at most $\Delta$ can be computed in polynomial time.
\end{lemma}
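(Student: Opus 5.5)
The plan is the standard greedy clustering argument. First I compute a maximal independent set $I \subseteq V(G)$ greedily. Then I assign every vertex $v \notin I$ to one neighbor $s(v) \in I$; such a neighbor exists by maximality. For $s \in I$, let $V_s := \{s\} \cup \{v : s(v) = s\}$. This is a partition of $V(G)$ in which each part contains exactly one vertex of $I$, so it is greedy by definition. Everything here is polynomial time given the representation, since adjacency is determined by the input graph.

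Next I bound $\kappa$. Every object in $V_s$ meets the object $s$, so all of them lie within distance $O(\beta)$ of $X_s$; concretely, their centers lie in a ball of radius $2\beta$ around $X_s$. I tile $\mathbb{R}^d$ by axis-parallel cubes of side $1/\sqrt{d}$, so each cube has diameter at most $1$. The point is that two objects whose inscribed unit-diameter balls have centers in the same cube must intersect: their inner balls have radius $1/2$ and centers at distance at most $1$. Grouping $V_s$ by the cube containing the center of the inner ball (not the center $X_o$ of the enclosing ball) therefore splits $V_s$ into cliques. The inner-ball centers of $V_s$ lie in a ball of radius about $3\beta$, and such a ball meets at most $(c\beta\sqrt{d})^d$ cubes for an absolute constant $c$. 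This gives $\kappa = O(\beta\sqrt d)^d$, a constant.

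Finally I bound the degree of $G_{\mathcal{P}}$. If $V_s$ and $V_{s'}$ are adjacent, then some object within distance $O(\beta)$ of $s$ meets some object within distance $O(\beta)$ of $s'$, so $\|X_s - X_{s'}\| \le 5\beta$, say. The representatives form an independent set $I$, so their inner balls of diameter $1$ are pairwise disjoint. A packing-volume argument, as in Claim~\ref{claim:AlphaCube}, then bounds the number of $s' \in I$ with $X_{s'}$ in a ball of radius $5\beta$ around $X_s$ by a constant: the inner balls all lie in a ball of radius $6\beta$, giving $\Delta \le (12\beta)^d \cdot 2^d$ or similar. The main point of care is distinguishing the inner-ball center from $X_o$ when arguing that same-cell objects intersect; once the cell assignment uses inner-ball centers, the rest is routine volume counting.
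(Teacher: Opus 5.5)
Your proposal is correct and is essentially the standard argument from the cited source, which this paper imports without proof. You take a greedy maximal independent set and attach every other vertex to an adjacent representative; you cover each class by $O(1)$ cliques using grid cells of diameter at most $1$ applied to inner-ball centers (equivalently, stabbing by $O(1)$ points); and you bound the degree of $G_{\mathcal{P}}$ by packing the pairwise disjoint inner balls of nearby representatives. One small remark: computing $\mathcal{P}$ uses only the graph, and the geometric representation enters only in the analysis, which is what makes the lemma robust in de Berg \emph{et~al.}
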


Given a set of terminals $S \subseteq V(G)$, a \emph{Steiner tree} for $S$ is a vertex set $X$ such that $S \subseteq X$ and $G[X]$ is connected.  

\begin{lemma}[Lemma~18, \cite{de2018framework}]\label{lemma:SteinerTree}
Let $\mathcal{P}$ be a $\kappa$-partition of a graph $G$ such that $G_{\mathcal{P}}$ has maximum degree $\Delta$.  
Suppose $X$ is a minimal Steiner tree for a set of terminals $S$ (that is, no proper subset $X' \subset X$ is also a Steiner tree for $S$).  
Then $X$ contains at most $\kappa^2 (\Delta + 1)$ vertices from each partition class that do not belong to $S$.
\end{lemma}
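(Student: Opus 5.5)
We would prove Lemma~\ref{lemma:SteinerTree} by a local-exchange argument: if some class contained too many non-terminal vertices of $X$, one of them could be deleted while keeping $G[X]$ connected, contradicting minimality. Fix a partition class $V_i$ and write it as a union of cliques $Q_1,\dots,Q_\kappa$. By minimality of $X$, every non-terminal vertex $v \in X\setminus S$ is a cut vertex of $G[X]$ whose removal separates some terminals. In particular, $G[X]-v$ has at least two components, each containing a vertex of $S$, since a component with no terminal could simply be deleted.

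The main step is to show that a single clique $Q_j$ contains at most $\kappa(\Delta+1)$ non-terminal vertices of $X$. Summing over the $\kappa$ cliques of $V_i$ then gives the bound $\kappa^2(\Delta+1)$.

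Fix $Q_j$ and let $Y = (X\cap Q_j)\setminus S$. All vertices of $Y$ are pairwise adjacent, so $G[Y]$ is connected. Every neighbor in $G$ of a vertex of $Y$ lies in $V_i$ or in one of the at most $\Delta$ classes adjacent to $V_i$ in $G_{\mathcal P}$. These at most $\Delta+1$ classes together contain at most $\kappa(\Delta+1)$ cliques.

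Now take $v\in Y$. Removing $v$ separates $G[X]$ into components. The rest of $Y$ stays connected through the clique, so all of $Y\setminus\{v\}$ lies in a single component $D_0$. Every other component $D$ touches $v$, and hence contains a neighbor of $v$. That neighbor lies in one of the $\kappa(\Delta+1)$ cliques around $V_i$, and it has no neighbor in $Y\setminus\{v\}$, since otherwise $D$ would be joined to $D_0$. We charge $v$ to such a clique $Q'$ together with a witness vertex $w_v \in Q'\cap X$ that is adjacent to $v$ but to no other vertex of $Y$.

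Suppose two distinct vertices $v,v'\in Y$ were charged to the same clique $Q'$. Then $w_v$ and $w_{v'}$ are distinct: $w_v$ is adjacent to $v$ but not to $v'$, while $w_{v'}$ is adjacent to $v'$. Since $Q'$ is a clique, $w_v w_{v'}$ is an edge. This yields the path $v\,w_v\,w_{v'}\,v'$, which avoids all of $Y$ except its endpoints. Hence $w_v$ is connected to $Y\setminus\{v\}$ in $G[X]-v$, so $w_v\in D_0$, contradicting the choice of $w_v$. The charging is therefore injective, which gives $|Y|\leqslant \kappa(\Delta+1)$ as required.

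The delicate point is this injectivity argument, specifically making sure the witness $w_v$ is chosen inside $X$ and in a component other than $D_0$. I would double-check the case where $w_v$ itself lies in $V_i$, possibly even in $Q_j$ and possibly a terminal. A witness in $Q_j$ is adjacent to all of $Y$, so it lies in $D_0$ and can never be chosen. Witnesses in $V_i\setminus Q_j$ are handled uniformly because $V_i$'s own cliques are already among the $\kappa(\Delta+1)$ counted. Terminals in $X$ cause no issue, since only non-terminal vertices of $V_i$ are being counted.
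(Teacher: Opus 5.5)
Your proof is correct. Minimality makes each $v\in Y$ a cut vertex of $G[X]$, and a component of $G[X]-v$ other than the one containing the clique $Y\setminus\{v\}$ supplies a witness $w_v$ whose only neighbour in $Y$ is $v$. Two witnesses in a common clique would be adjacent and would merge that component into $D_0$. Hence $|Y|$ is at most the number $\kappa(\Delta+1)$ of cliques in $V_i$ and its $G_{\mathcal P}$-neighbours, and summing over the at most $\kappa$ cliques of $V_i$ gives $\kappa^2(\Delta+1)$.

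The paper gives no proof of its own; it simply cites this lemma from de Berg et al., and your private-witness pigeonhole over cliques is, as far as I recall, the argument behind it there.
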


We can now prove the main result, namely an XP algorithm for the \textsc{Two Sets Cut-Uncut} problem parameterized by the $\alpha$-leaf width of the input graph. The approach is similar to the one used for the \textsc{2-Subcoloring} problem: we solve the problem on connected components after the removal of the modulator, and then recombine the solutions. 

\begin{theorem}\label{thm:XPCutUncut}
Let $d\geqslant 2$ and $\beta \geqslant 1$ be two constants. There exists an algorithm which, given an intersection graph $G$ of $n$ similarly sized $\beta$-fat objects in $\mathbb{R}^d$, solves the \textsc{Two Sets Cut-Uncut} problem on $G$ in time $n^{O(\amod(G))}$.
\end{theorem}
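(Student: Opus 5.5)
Let $k=\amod(G)$. We first compute a modulator $X$ of size at most $k$, such that every component of $G-X$ has independence number at most $k$. This can be done in time $n^{O(k)}$ by trying all candidate sets and computing independence numbers of the components by brute force in $n^{O(k)}$, for increasing values of $k$. Let $C_1,\dots,C_q$ be the components of $G-X$, and fix the greedy $\kappa$-partition $\mathcal{P}$ of Lemma~\ref{lemma:KappaPartition}.

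The plan is to enumerate a polynomial-size (in $n^{O(k)}$) set of "skeletons" describing how the solution $(A,B)$ is connected, and then to solve the remaining minimum-weight problem by a single minimum $s$-$t$ cut computation. A skeleton consists of:
\begin{itemize}
\item a guess of $\mu\colon X\to\{A,B\}$, which costs $2^k$;
\item a minimal Steiner tree $Y_A\subseteq A$ connecting $S$ together with one representative per relevant connected piece, and similarly $Y_B\subseteq B$ for $T$.
\end{itemize}
The point is that once $S\cup Y_A$ is forced into $A$ and $T\cup Y_B$ into $B$, connectivity of the terminal sets is certified. The rest of the partition is then free: any vertex of $G$ outside these sets can be assigned either side. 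However, it must keep $S$ in one component of $G[A]$, and since we only need $S$ (not all of $A$) to be in a single component of $G[A]$, extra vertices of $A$ are harmless. The cut weight is then minimized by a min-cut with sources $S\cup Y_A$ and sinks $T\cup Y_B$, with cut edges weighted by $w$. Taking the best skeleton gives the optimum, because for the optimal $(A,B)$ the minimal Steiner trees inside $G[A]$ and $G[B]$ form one of the enumerated skeletons, and the min-cut returns something at least as good that is still feasible.

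The main obstacle is bounding the number of skeletons by $n^{O(k)}$. A minimal Steiner tree for $S$ might be long, so it cannot be guessed naively. I would handle this per component. Inside $X$, the tree uses at most $k$ vertices, which are already guessed via $\mu$. Inside a component $C_i$, the tree splits into sub-paths between vertices of $X$ and terminals. For vertices outside $S$, Lemma~\ref{lemma:SteinerTree} bounds the tree to $\kappa^2(\Delta+1)$ vertices per class of $\mathcal{P}$. Moreover, because $\alpha(G[C_i])\le k$ and $\mathcal{P}$ is greedy, $C_i$ meets only $O(k)$ classes: each class contains a vertex of a maximal independent set, and classes touching $C_i$ have representatives within bounded distance, which packing bounds to $O(k)$. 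So inside $C_i$ the nonterminal part of a minimal Steiner tree has $O(k)$ vertices. We also need to handle the many components: components containing no terminals need no tree vertices, except those bridging $X$-vertices. There are at most $\binom{k}{2}$ such bridges, each realized inside a single component, and each can be guessed as a short path plus its component. Terminals themselves are forced and not guessed. Summing, the nonterminal part of $Y_A\cup Y_B$ has size $O(k^2)$ in the worst case. I would need to refine this to $O(k)$: a minimal Steiner tree uses each component only to link its attachment points, and the number of components used that contain no terminals is at most $k$. Since connectivity across components must pass through $X$, a terminal-containing component connects to the rest via $X$, and only its $O(k)$ classes can host Steiner vertices. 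The guessing cost is therefore $n^{O(k)}$, and together with polynomial-time min-cut this gives the claimed running time $n^{O(\amod(G))}$.
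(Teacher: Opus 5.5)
\textbf{Verdict: there is a genuine gap.} The claimed bound of $n^{O(k)}$ on the number of skeletons is false, so the single global enumeration does not give the stated running time.

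\textbf{What is correct.} Your reduction to one min-cut is sound. Once connected sets $Y_A\supseteq S$ and $Y_B\supseteq T$ are forced to their sides, every cut respecting them is feasible, because adding vertices to a side only merges its components. Taking $Y_A,Y_B$ to be minimal Steiner trees inside an optimal solution then shows that the best such min-cut is optimal. Your per-component bound is also right: each $C_i$ meets $O(k)$ classes of the greedy partition. So by Lemma~\ref{lemma:SteinerTree}, a minimal Steiner tree has $O(k)$ non-terminal vertices inside each $C_i$.

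\textbf{The gap.} Nothing bounds the number of components of $G-X$ that contain terminals. Each such component needs its own non-terminal Steiner vertices, and these choices are independent across components, so the guesses multiply. Your final ``refinement to $O(k)$'' only repeats the per-component bound.

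Geometry does not rescue the count. You can discard components of $G$ disjoint from $S\cup T$, and each vertex of $X$ touches only $O(1)$ components of $G-X$. This still leaves $O(k)$ relevant components and $O(k^2)$ non-terminal Steiner vertices, i.e.\ $n^{O(k^2)}$ skeletons. This is not an artifact of the analysis. Hang $\Theta(k)$ unit-disk ladders $P_2\times P_k$ off a modulator path on $k$ vertices, with a terminal of $S$ at the far end of each ladder. Then $n=\Theta(k^2)$ and, by bounded degree, $\amod(G)=\Theta(k)$. Yet each ladder has $2^{\Theta(k)}$ induced paths, so $S$ has $2^{\Theta(k^2)}$ minimal Steiner trees, far more than $n^{O(k)}=2^{O(k\log k)}$.

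\textbf{The missing idea: decouple the components.} Once the assignment $\sigma$ on $X$ is fixed, the cut weight is a sum of independent contributions of the components. The components interact only through which vertices of $X$ they connect on each side. The paper therefore runs a dynamic program over $C_1,\dots,C_q$. Its state is $\sigma$ together with the partitions of $\sigma^{-1}(A)$ and $\sigma^{-1}(B)$ induced by connectivity in the already-processed part, plus marks for the parts that already contain terminals. This gives $k^{O(k)}$ states. The $n^{O(k)}$ guess of the $O(k)$ Steiner vertices inside $C_t$, followed by a min-cut on $G[X\cup C_t]$, is done once per component and state, so costs add instead of multiplying. Your forcing-plus-monotonicity observation fits naturally inside such a DP. The states may record certified connectivity, which can be finer than the true one, since coarser partitions only help feasibility.
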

\begin{proof}
Let $(G,w)$ be an edge-weighted graph and let $S,T\subseteq V(G)$ be two disjoint terminal sets. Suppose we are given a set $X\subseteq V(G)$ of size at most $k$ such that every connected component $C$ of $G-X$ satisfies $\alpha(G[C])\leqslant k$. Let $C_1,\dots,C_q$ ($q\ge 0$) be the connected components of $G-X$. For $0\leqslant t\leqslant q$, define
\[
V_t = X \cup \bigcup_{i=1}^t C_i
\quad \text{and} \quad
G_t = G[V_t].
\]

For every $0\leqslant t \leqslant q$, we define a table
\[
\mathbf{Tab}_t[\sigma,\Pi_A,\Pi_B,\mu_A,\mu_B]
\in \mathbb{N}\cup\{+\infty\},
\]
where:
\begin{itemize}
    \item $\sigma : X \to \{A,B\}$ assigns each vertex of $X$ to a side, respecting $S \cap X \subseteq \sigma^{-1}(A)$ and $S \cap B \subseteq \sigma^{-1}(B)$;
    \item $\Pi_A$ is a partition of $\sigma^{-1}(A)$ and $\Pi_B$ is a partition of $\sigma^{-1}(B)$;
    \item $\mu_A \subseteq \Pi_A$ and $\mu_B \subseteq \Pi_B$, which will serve to identify the connected components which already contain terminals.
\end{itemize}

The value $\mathbf{Tab}_t[\sigma,\Pi_A,\Pi_B,\mu_A,\mu_B]$ stores the minimum weight of a solution cut $(A,B)$ of $G_t$ consistent with:
\begin{enumerate}
    \item the assignment $\sigma$ on $X$;
    \item each part of $\Pi_A$ (resp.\ $\Pi_B$) is exactly the intersection of a connected component of $G_t[A]$ (resp.\ $G_t[B]$) with $X$;
    \item if a part is not in $\mu_A$ (resp.\  $\mu_B$), then its corresponding connected component in $G[A]$ (resp.\ $G[B]$) does not contain any vertex from $S$ (resp.\ $T$).
\end{enumerate}

If such a partial solution exists, the state is called \emph{feasible in $G_t$}; otherwise its value is $+\infty$ and it is \emph{infeasible}.

\medskip
Since $|X|\leqslant k$, the number of possible states is at most $2^k \cdot k^{2k} = k^{O(k)}$.

\medskip
\noindent\textbf{Computation of $\mathbf{Tab}_0$.}
For each assignment $\sigma : X \to \{A,B\}$, let $\Pi_A$ (resp.\ $\Pi_B$) be the partition of $\sigma^{-1}(A)$ (resp.\ $\sigma^{-1}(B)$) into connected components of $G_0[\sigma^{-1}(A)]$ (resp.\ $G_0[\sigma^{-1}(B)]$). 

Let $\mu_A \subseteq \Pi_A$ (resp.\ $\mu_B \subseteq \Pi_B$) be the set of parts that contain at least one vertex of $S \cap X$ (resp.\ $T \cap X$).

If $\sigma(v)=A$ for every $v \in S \cap X$ and $\sigma(v)=B$ for every $v \in T \cap X$, then we set
\[
\mathbf{Tab}_0[\sigma,\Pi_A,\Pi_B,\mu_A,\mu_B]=w\big(\delta(\sigma^{-1}(A),\sigma^{-1}(B))\big),
\]
where the cut is taken in $G_0 = G[X]$. Otherwise, we set this value to $+\infty$.

This correctly initializes the table, since $G_0$ contains only the vertices of $X$.

\medskip
\noindent\textbf{Computation of $\mathbf{Tab}_t$.}
Fix $t \geqslant 1$ and assume that $\mathbf{Tab}_{t-1}$ is correct and complete. 
We describe how to compute $\mathbf{Tab}_t$ from $\mathbf{Tab}_{t-1}$.

Let $(\sigma, \Pi_A, \Pi_B, \mu_A, \mu_B)$ be a state such that 
$\mathbf{Tab}_{t-1}[\sigma, \Pi_A, \Pi_B, \mu_A, \mu_B] < +\infty$. 
We extend this partial solution by incorporating the component $C_t$.

\medskip
\noindent
\emph{Guessing the interaction with $X$.}
We enumerate all pairs of partitions $(\Pi_A', \Pi_B')$ obtained from $(\Pi_A, \Pi_B)$ by merging parts. 
Intuitively, $\Pi_A'$ (resp.\ $\Pi_B'$) represents the connectivity on the $A$-side (resp.\ $B$-side) after adding vertices of $C_t$.

\medskip
\noindent
\emph{Guessing Steiner trees inside $C_t$.}
We guess families of pairwise disjoint sets
\[
A_1',\dots,A_a' \subseteq C_t
\quad \text{and} \quad
B_1',\dots,B_b' \subseteq C_t,
\]
such that each $A_i'$ and $B_j'$ is an independent set and 
\[
\sum_i |A_i'| + \sum_j |B_j'| \leqslant 2k.
\]
These sets will serve as dominating sets of the connected components that intersect $C_t$ on each side.

For each $A_i'$ (resp.\ $B_j'$), we further guess a set $X_i \subseteq C_t$ (resp.\ $Y_j \subseteq C_t$) inducing a connected subgraph of $G[C_t]$ that spans $A_i'$ (resp.\ $B_j'$). 
By Lemma~\ref{lemma:SteinerTree}, we may assume $|\bigcup_{i}X_i|,|\bigcup_{j}Y_j| = O(k)$, and thus all these sets can be guessed in time $n^{O(k)}$.

\medskip
\noindent
\emph{Constructing a constrained cut instance.}
We now determine the assignment of the remaining vertices of $C_t$ by solving a minimum cut instance.

We build an auxiliary graph $H$ on vertex set $X \cup C_t$ with edge weights inherited from $G$, and add two terminals $s$ and $t$. 
We enforce the following constraints:
\begin{itemize}
    \item every vertex $v \in X$ is assigned according to $\sigma(v)$;
    \item every vertex of $S \cap C_t$ is forced to the $A$-side (connected to $s$ with infinite capacity), and every vertex of $T \cap C_t$ to the $B$-side;
    \item for each guessed set $X_i$, all its vertices are forced to the $A$-side, and for each $Y_j$, all its vertices are forced to the $B$-side;
    \item if a vertex $v \in C_t$ is adjacent to vertices in two distinct sets $A_i'$ and $A_{i'}'$ with $i \neq i'$, then $v$ is forced to the $B$-side; symmetrically for the $B$-side;
    \item if a vertex $v \in C_t$ has no neighbor in any $A_i'$, it is forced to the $B$-side, and symmetrically for $B$.
\end{itemize}

All remaining vertices are left free. We compute a minimum $(s,t)$-cut in $H$, yielding a bipartition $(A',B')$ of $X \cup C_t$.

\medskip
\noindent
\emph{Checking consistency.}
From $(A',B')$, we compute the induced partitions $\widehat{\Pi}_A$ and $\widehat{\Pi}_B$ of $X$ according to connectivity in $G_t[A']$ and $G_t[B']$. 
We keep the solution only if:
\begin{enumerate}
\item The partition $\Pi_A'$ (resp.\ $\Pi_B'$) is exactly the coarsest partition refining both $\Pi_A$ (resp.\ $\Pi_B$) and $\widehat{\Pi}_A$ (resp.\ $\widehat{\Pi}_B$). More formally, $\Pi_A'$ (resp. $\Pi_B'$) represents the transitive closure of the equivalence relations represented by $\Pi_A$ and $\widehat{\Pi}_A$ (resp. by $\Pi_B$ and $\widehat{\Pi}_B$).
\item The updated sets $\mu_A'$ and $\mu_B'$ correctly identify the parts containing terminals, that is, every part of $\Pi_A'$ (resp.\ $\Pi_B'$) is marked if and only if it contains a vertex of $S$ (resp.\ $T$) in $G_t[A']$ (resp.\ $G_t[B']$).
\end{enumerate}

\medskip
\noindent
\emph{Updating the table.}
For every valid construction as above, we update
\[
\mathbf{Tab}_t[\sigma,\Pi_A',\Pi_B',\mu_A',\mu_B']
\]
by taking the minimum over
\[
\mathbf{Tab}_{t-1}[\sigma,\Pi_A,\Pi_B,\mu_A,\mu_B] 
\;+\;
w\big(\delta(A',B') \cap (E(C_t) \cup E(C_t,X))\big),
\]
that is, we add exactly the contribution of edges incident to $C_t$, 
since all edges inside $G_{t-1}$ have already been accounted for.
\medskip
\noindent
Since the number of states is $k^{O(k)}$ and each transition can be evaluated in time $n^{O(k)}$, the computation of $\mathbf{Tab}_t$ takes time $n^{O(k)}$.

\medskip
The final result is the minimum value over all entries of $\mathbf{Tab}_q$ with $\mu_A$ and $\mu_B$ being singletons.
\end{proof}

As a direct consequence, we obtain a subexponential algorithm in the case of geometric intersection graphs.

\begin{theorem}
For every constants $d\geqslant 2$ and $\beta \geqslant 1$, there is an algorithm solving \textsc{Two Sets Cut-Uncut} in time $2^{O\left(n^{1-1/(d+1)}\ln(n)\right)}$ on intersection graphs of similarly sized $\beta$-fat objects in $\mathbb{R}^d$.
\end{theorem}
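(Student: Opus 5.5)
This follows by combining Theorem~\ref{thm:BoundAlphaMod} with Theorem~\ref{thm:XPCutUncut}. Given the intersection graph $G$ of $n$ similarly sized $\beta$-fat objects in $\mathbb{R}^d$, together with its geometric representation, the first step is to invoke the polynomial-time construction of Theorem~\ref{thm:Separator}. It yields a set $X \subseteq V(G)$ with $|X| \leqslant d\,n^{1-1/(d+1)}$ such that every connected component $C$ of $G-X$ satisfies $\alpha(G[C]) \leqslant (2\beta d)^d n^{1-1/(d+1)}$. Set
\[
k := \max\{d,(2\beta d)^d\}\cdot n^{1-1/(d+1)} = O\!\left(n^{1-1/(d+1)}\right).
\]
Then $X$ is a modulator witnessing $\amod(G)\leqslant k$, and it is computed explicitly. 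This matters because the dynamic programming in the proof of Theorem~\ref{thm:XPCutUncut} takes the modulator as input; we never need to compute $\amod(G)$ exactly.

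The second step is to run the algorithm of Theorem~\ref{thm:XPCutUncut} with this $X$ and parameter $k$. The only point to check is that its running time depends on the width $k$ of the supplied modulator rather than on the true value $\amod(G)$. The DP has $k^{O(k)}$ states, obtained from a side assignment of $X$, two partitions of subsets of $X$, and the marked parts. Each transition enumerates independent sets $A_i', B_j'$ of total size at most $2k$ inside a component $C_t$. It also enumerates connected sets $X_i, Y_j$ of total size $O(k)$, which suffices by Lemma~\ref{lemma:SteinerTree} applied with the constant-parameter partition of Lemma~\ref{lemma:KappaPartition}. Finally, it solves a polynomial-time min-cut. Every one of these quantities is bounded in terms of $|X|$ and the independence numbers of the components, so each transition costs $n^{O(k)}$. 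There are $q\leqslant n$ components, which gives total time $n^{O(k)}\cdot k^{O(k)}\cdot n^{O(1)} = n^{O(k)}$.

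Finally, rewriting $n^{O(k)} = 2^{O(k\ln n)}$ and substituting $k = O(n^{1-1/(d+1)})$, with $d$ and $\beta$ constants, gives the claimed running time $2^{O(n^{1-1/(d+1)}\ln n)}$. The main obstacle is only the one noted above: the bound must be stated in terms of the given modulator's width rather than $\amod(G)$ itself, because computing an optimal modulator is hard. Apart from that, the argument is direct bookkeeping with the constants in $k$.
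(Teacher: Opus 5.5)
Your proposal is correct and matches the paper, which states this result as a direct corollary of Theorem~\ref{thm:BoundAlphaMod} (with the explicit modulator produced by Theorem~\ref{thm:Separator}) combined with Theorem~\ref{thm:XPCutUncut}, using $n^{O(k)}=2^{O(k\ln n)}$ for $k=O(n^{1-1/(d+1)})$. You point out that the dynamic program's cost depends on the width of the supplied modulator rather than on $\amod(G)$; the paper leaves this implicit. Even without a geometric representation, a modulator of this known width $k$ can be found by brute force in $n^{O(k)}$ time, so the bound still holds.
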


\section{A framework for matching lower bound}

\subsection{Overview}
\label{subsec:overview}

This section establishes a matching lower bound under the Exponential Time Hypothesis (ETH) for the running time of the form $2^{o(n^{1-1/(d+1)})}$. We provide a high-level sketch of the reduction, which starts from the \textsc{Monotone Not-All-Equal-3-SAT} problem, defined as follows.

\problemdef{Monotone Not-All-Equal-3-SAT}{ A negation-free 3-CNF formula $\Phi = \bigwedge_{i=1}^m (\ell_1^j\vee\ell_2^j\vee \ell_3^j) $ with $m$ clauses over $n$ variables $x_1,...,x_n$, where each variable appears exactly four times in $\Phi$.}{ A truth assignment $\tau \{x_1,...,x_n\} \rightarrow \{\textsf{false}, \textsf{true}\}$ such that every clause contains at least one \textsf{true} and one \textsf{false} literal.}

By the Sparsification Lemma~\cite{impagliazzo2001problems} and the reduction of~\cite{darmann2020simple}, this problem cannot be solved in time $2^{o(n)}$, where $n$ is the number of variables, unless the ETH fails.

\smallskip

The reduction is constructed as follows. Consider a $d$-dimensional grid of side length $n^{1/d}$, consisting of $n$ cells, each of side length $1$. We fix a BFS tree $T$ on the corresponding $d$-dimensional grid graph, where each vertex represents a cell, and edges connect vertices if their corresponding cells are adjacent. Since the grid graph has diameter $O(n^{1/d})$, the same bound holds for $T$.

Each clause of the formula is mapped to a cell in the grid (and thus to a node in $T$), and a \emph{gadget} is placed at each node. The gadget's role is to verify whether the corresponding clause contains at least one \textsf{true} and one \textsf{false} literal, and to propagate the truth values of each literal. Each vertex in a gadget is labeled with a literal from the original formula, subject to the following constraints:
\begin{itemize}
    \item For any literal $\ell$, each clause gadget contains $O(1)$ vertices labeled with $\ell$.
    \item If a literal $\ell$ appears in a clause gadget corresponding to a node $t \in V(T)$, then $\ell$ must belong to a clause assigned to a descendant of $t$ in $T$.
\end{itemize}

This property ensures that vertices labeled with a given literal $\ell$ appear only along a single branch of $T$. Since there are at most $4n$ literals, the total number of vertices in the graph is bounded by $O(n \cdot n^{1/d}) = O(n^{(d+1)/d})$.
\smallskip

At the end of the construction, the resulting graph is a \textbf{yes-instance} of the target problem if and only if the original formula is a \textbf{yes-instance} of \textsc{Monotone Not-All-Equal-3-SAT}. This construction can be realized using unit balls in $\mathbb{R}^d$ for any $d \geqslant 3$, and with similarly sized fat objects in $\mathbb{R}^2$. 
Whether the lower bound also holds for unit disk graphs remains an open question. 

\subsection{Lower bound for 2-Subcoloring}\label{sec:LowerBoundSubcoloring}

Applying the framework described above, we prove that the running time of Theorem~\ref{thm:Subexp2Subcoloring} is almost tight under the ETH, up to the logarithmic factor in the exponent.

\begin{theorem}\label{thm:LowerBound2Subcoloring}
 \textsc{2-Subcoloring} cannot be solved in time $2^{o\left(n^{1-1/(d+1)}\right)}$ on $n$-vertex intersection graphs of similarly sized fat objects in $\mathbb{R}^d$ for any $d\geqslant 2$, and even in unit ball graphs in $\mathbb{R}^d$ for any $d\geqslant 3$.
\end{theorem}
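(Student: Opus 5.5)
We instantiate the framework of Section~\ref{subsec:overview}, reducing from \textsc{Monotone Not-All-Equal-3-SAT} with $N$ variables, where each variable occurs exactly four times, so $m = 4N/3$ clauses. The goal is a graph $G_\Phi$ with $O(N^{(d+1)/d})$ vertices, realizable as an intersection graph of similarly sized fat objects in $\mathbb{R}^d$ (unit balls when $d\geqslant 3$), that has a $2$-subcoloring if and only if $\Phi$ is NAE-satisfiable. Given this, a $2^{o(n^{1-1/(d+1)})}$ algorithm would run in time $2^{o((N^{(d+1)/d})^{d/(d+1)})}=2^{o(N)}$, contradicting ETH.

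\emph{Gadgets.} The basic building block is a \emph{wire} carrying a Boolean value: a chain of small clique-like pieces whose $2$-subcolorings are forced to alternate in a rigid way, so that the colour of one end determines the colour of the other end. One natural choice is a sequence of triangles, each with a pendant vertex, arranged so that any monochromatic $P_3$ is forbidden and the colour pattern propagates along the chain. I would first design four pieces:
\begin{itemize}
\item a propagation piece;
\item a splitter, which duplicates a value;
\item a crossing or bending piece, which lets wires turn inside a constant-size box;
\item a NAE-clause piece with three incoming wires, which is $2$-subcolorable exactly when its inputs are not all equal.
\end{itemize}
For the clause piece, I would try three vertices $a,b,c$ carrying the three values, attached to a common structure that creates a monochromatic induced $P_3$ whenever all three receive the same colour. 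Each piece must fit in a box of constant side length in $\mathbb{R}^d$ and connect to neighbouring cells only through designated ports.

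\emph{Layout.} Place the clauses injectively into the cells of a $d$-dimensional grid of side $\lceil m^{1/d}\rceil$, and fix a BFS tree $T$ of the grid graph rooted at a corner, so that $T$ has depth $O(N^{1/d})$. Assign each variable $x$ a root-to-leaf path containing its four clause cells; for example, take the path to the deepest clause cell of $x$. If the four clause cells do not lie on one branch, use instead the union of the root paths to those cells, which still has $O(N^{1/d})$ cells. Along this path, run a wire carrying $x$ from the root, with splitters feeding copies into the clause gadgets of the cells where $x$ occurs. Each cell is crossed by many wires. The construction therefore needs each cell box to host, in its $d$-dimensional interior, a bundle of wires, so the box size grows with the number of wires through it. The total vertex count is still $\sum_x O(\text{path length}) = O(N\cdot N^{1/d})$.

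So that each cell keeps constant size, I would route wires through separate parallel tracks using one extra dimension or a planar crossing gadget. The alternative is to let cells have non-constant size while controlling the total vertex count, which is what actually matters. The lower bound only needs the total number of objects to be $O(N^{(d+1)/d})$ and a valid geometric realization, so I would realize all wires as unit-ball chains in $\mathbb{R}^d$ that avoid one another. When $d\geqslant 3$, curves can be routed disjointly with constant clearance, given enough space. When $d=2$, use fat non-disk objects to build a crossing gadget.

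\emph{Correctness.} In the forward direction, colour every wire according to $\tau$ and extend the colouring inside each clause gadget using the NAE property. In the backward direction, read $\tau(x)$ off the wire of $x$ at the root. Wire rigidity makes every copy agree with $\tau(x)$, and the clause gadget then forces the NAE condition.

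\emph{Main obstacle.} I expect two difficulties. The first is designing rigid wire, splitter and clause gadgets for $2$-subcoloring, which is subtle because clusters can absorb neighbours. The second is keeping the geometric routing within total size $O(N^{(d+1)/d})$ without unintended intersections. I would first check the gadgets by hand in the unit-disk-like setting and only then embed them into the grid cells.
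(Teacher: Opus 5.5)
Your source problem, the $d$-dimensional grid of side $O(N^{1/d})$ with a BFS tree, and the target size $O(N^{(d+1)/d})$ all match the paper. However, the step that is supposed to achieve that size fails. Your count $\sum_x O(\text{path length})$ assumes $O(1)$ vertices per wire per cell. Your routing, though, makes the wires pairwise disjoint unit-ball chains with constant clearance, and lets cells grow with the number of wires crossing them. Since all $N$ wires start at the root, cells near it need side length polynomial in $N$, and wire lengths, hence vertex counts, grow accordingly.

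No routing of this kind can be repaired. If wires are disjoint with constant clearance and all gadgets have constant size, the resulting graph has bounded ply. It therefore has treewidth $O(n^{1-1/d})$ by standard separator arguments (e.g.\ \cite{de2018framework}), and \textsc{2-Subcoloring} is then solvable in time $2^{O(\tw(G))}\cdot n=2^{O(n^{1-1/d})}$ \cite{fiala}. A reduction with $n=O(N^{(d+1)/d})$ would decide Monotone NAE-3-SAT in time $2^{O(N^{1-1/d^2})}=2^{o(N)}$. So under ETH any bounded-ply construction needs $n=\Omega(N^{d/(d-1)})$, which only yields the ordinary $2^{\Omega(n^{1-1/d})}$ bound. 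Separately, none of your wire, splitter, crossing or clause gadgets is specified, so the correctness part is not yet a proof either.

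The missing idea is to let the wires through a cell overlap into large cliques, which costs nothing geometrically for fat objects. In a sense this is your ``extra dimension'', collapsed to an $\varepsilon$-thin offset.
\begin{itemize}
\item \emph{Bundling.} In the paper, each vertex of a constant-size skeleton (a path on $2d+1$ vertices for the first dimension, short paths for the other dimensions, and a clause vertex) is replaced by a clique $C_{i,j}=\{v_{i,j,\ell} : \ell\in g(t)\}$. This clique has one vertex per literal occurrence in the subtree of $t$ and is realized as copies of one object shifted by $\varepsilon h(\ell)$. Skeleton edges, and tree edges between adjacent cells, become matchings between same-literal vertices.
\item \emph{Rigidity.} The key lemma replacing your wire rigidity is that two cliques of size at least $3$, joined by a matching with at least two edges, force every matched pair to get different colours in any $2$-subcoloring. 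Hence each occurrence propagates rigidly and independently inside the bundle.
\item \emph{NAE preservation.} All literals of $H_t$ arrive along the same tree path from the root, so their colours are flipped the same number of times and NAE relations are preserved.
\item \emph{No splitters or crossings.} The four occurrences of $x_i$ are synchronized at the root through a clique $U$, with $u_i$ adjacent to all four. A clause is checked by a path $a_t^1a_t^2a_t^3$ with $a_t^b$ attached to $v_{1,d+1,\ell_b}$.
\item \emph{Size.} Each occurrence uses $O(d)$ vertices in each of the $O(dp)$ cells on its root path. This is what actually gives the $O(d^2N^{(d+1)/d})$ bound.
\end{itemize}
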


\begin{proof}
Let $\phi$ be a monotone $3$-SAT formula with $n$ variables and $m = 4n$ clauses, where each clause contains exactly three positive literals and each variable appears in exactly four clauses. We call $\mathcal{L}$ the set of literals.

Let $p$ be the smallest odd integer greater than $n^{1/d}$.  
Let $G_{d,p}$ denote the $d$-dimensional grid graph on $[p]^d$, and let $o$ be its unique central node. Given two nodes $t=(t_1,...,t_d)\in [p]^d$ and $t'=(t_1',...,t_d')\in [p]^d$ such that $tt'\in E(G_{d,p})$, call the \emph{dimension} of the edge $tt'$ the unique $d'\in [d]$ such that $|t_{d'}-t'_{d'}|=1$. We construct a BFS tree $T_{d,p}$ of $G_{d,p}$ rooted at $o$.  
Note that $T_p$ has height at most $p$, since every vertex of $G_{d,p}$ is at distance at most $dp/2$ from $o$.
\smallskip
We first describe the \emph{cell gadget} $H_{L,d}$, where $L\subseteq \mathcal{L}$ is a set of literals. Begin with a path $v_{1,1}, v_{2,1}, \dots, v_{2d+1,1}$. For each $i$ with $2 \leqslant i \leqslant d$, add four vertices $v_{1,i}, v_{2,i}, v_{3,i}, v_{4,i}$ and the edges $v_{1,i}v_{2,i}$,  $v_{2,i}v_{2i-2,1}$, $v_{2i-2,1}v_{3,i}$ and  $v_{3,i}v_{4,i}$. Finally, add a special vertex $v_{1,d+1}$ adjacent to $v_{2d,1}$. This forms the \emph{skeleton} of the gadget (Figure~\ref{fig:clause_gadget} illustrates the cases $d=2$ and $d=3$). Intuitively, the first long path is in charge of the propagation of the solution in the first dimension, the $(d-1)$ short paths attached to it are in charge of the propagation in the other dimensions, and the vertex $v_{1,d+1}$ represents the clause. Then, we replace each vertex $v_{i,j}$ in the skeleton with a clique
    $$
        C_{i,j} = \{ v_{i,j,\ell} \mid \ell \in L \}.
    $$
For each edge $v_{i,j}v_{i',j'}$ in the skeleton, connect $v_{i,j,\ell}$ to $v_{i',j',\ell}$ for every $\ell \in L$. The main challenge, when representing this gadget in a $d$-dimensional space (especially with balls as vertices), is that this blow-up will be performed in one of the $d$ dimensions, and thus we will have to manage carefully this blow up with one of the previously short path.

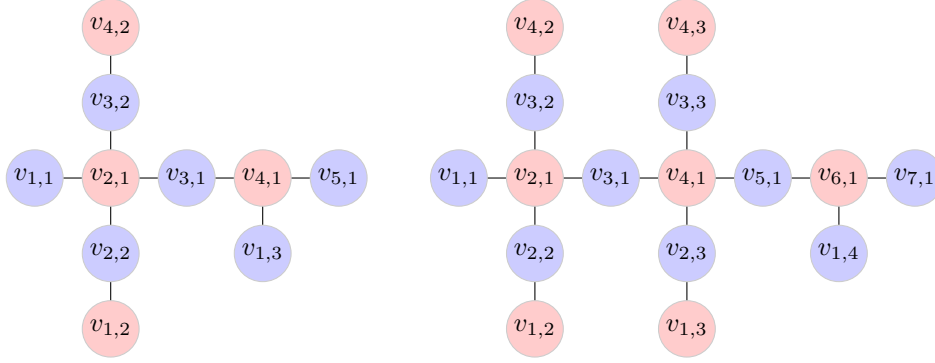
\begin{figure}[ht]
\centering
\begin{tikzpicture}[scale=1.0, every node/.style={circle, draw, inner sep=1.5pt, minimum size = 20}]
  \def\d{3} 
  \def\n{7} 

    \node[fill = blue, text opacity = 1, opacity = 0.2] (w1) at (1,0) {$v_{1,1}$};    
    \node[fill = red, text opacity = 1, opacity = 0.2] (w2) at (2,0) {$v_{2,1}$};    
    \node[fill = blue, text opacity = 1, opacity = 0.2] (w3) at (3,0) {$v_{3,1}$};    
    \node[fill = red, text opacity = 1, opacity = 0.2] (w4) at (4,0) {$v_{4,1}$};    
    \node[fill = blue, text opacity = 1, opacity = 0.2] (w5) at (5,0) {$v_{5,1}$};    
    \draw (w1) -- (w2)-- (w3)-- (w4) -- (w5) ;

  \node[fill = red, text opacity = 1, opacity = 0.2]  (v1-2) at (2,-2) {$v_{1,2}$};
  \node[fill = blue, text opacity = 1, opacity = 0.2]  (v2-2) at (2,-1) {$v_{2,2}$};
  \node[fill = blue, text opacity = 1, opacity = 0.2]  (v3-2) at (2,1) {$v_{3,2}$};
  \node[fill = red, text opacity = 1, opacity = 0.2]  (v4-2) at (2,2) {$v_{4,2}$};

  \draw (v1-2)  -- (v2-2) -- (w2) -- (v3-2) -- (v4-2);

  \node[fill = blue, text opacity = 1, opacity = 0.2]  (v1-3) at (4,-1) {$v_{1,3}$};
  \draw (v1-3)--(w4);
\end{tikzpicture}
\qquad
\begin{tikzpicture}[scale=1.0, every node/.style={circle, draw, inner sep=1.5pt, minimum size = 20}]
  \def\d{3} 
  \def\n{7} 

    \node[fill = blue, text opacity = 1, opacity = 0.2] (w1) at (1,0) {$v_{1,1}$};    
    \node[fill = red, text opacity = 1, opacity = 0.2] (w2) at (2,0) {$v_{2,1}$};    
    \node[fill = blue, text opacity = 1, opacity = 0.2] (w3) at (3,0) {$v_{3,1}$};    
    \node[fill = red, text opacity = 1, opacity = 0.2] (w4) at (4,0) {$v_{4,1}$};    
    \node[fill = blue, text opacity = 1, opacity = 0.2] (w5) at (5,0) {$v_{5,1}$};    
    \node[fill = red, text opacity = 1, opacity = 0.2] (w6) at (6,0) {$v_{6,1}$};
    \node[fill = blue, text opacity = 1, opacity = 0.2] (w7) at (7,0) {$v_{7,1}$};
    \draw (w1) -- (w2)-- (w3)-- (w4) -- (w5) -- (w6) -- (w7);

  \node[fill = red, text opacity = 1, opacity = 0.2]  (v1-2) at (2,-2) {$v_{1,2}$};
  \node[fill = blue, text opacity = 1, opacity = 0.2]  (v2-2) at (2,-1) {$v_{2,2}$};
  \node[fill = blue, text opacity = 1, opacity = 0.2]  (v3-2) at (2,1) {$v_{3,2}$};
  \node[fill = red, text opacity = 1, opacity = 0.2]  (v4-2) at (2,2) {$v_{4,2}$};

  \draw (v1-2)  -- (v2-2) -- (w2) -- (v3-2) -- (v4-2);
 
  \node[fill = red, text opacity = 1, opacity = 0.2]  (v1-3) at (4,-2) {$v_{1,3}$};
  \node[fill = blue, text opacity = 1, opacity = 0.2]  (v2-3) at (4,-1) {$v_{2,3}$};
  \node[fill = blue, text opacity = 1, opacity = 0.2]  (v3-3) at (4,1) {$v_{3,3}$};
  \node[fill = red, text opacity = 1, opacity = 0.2]  (v4-3) at (4,2) {$v_{4,3}$};

  \draw (v1-3)  -- (v2-3) -- (w4) -- (v3-3) -- (v4-3);

  \node[fill = blue, text opacity = 1, opacity = 0.2] (w) at (6,-1) {$v_{1,4}$};
    \draw (w6) -- (w) ;
\end{tikzpicture}
\caption{Skeleton of the clause gadget for $d=2$ (on the left) and $d=3$ (on the right) before replacing vertices with cliques.}
\label{fig:clause_gadget}
\end{figure}
\smallskip

We now build a graph $G_{\phi,d}$ as follows :

\begin{enumerate}
    \item \emph{Mapping clauses to grid vertices.}  
    Choose an injective function $\textsf{cell} : [m] \to V(G_{d,p}) \setminus \{o\}$  mapping each clause of $\phi$ to a distinct non-central vertex of $G_{d,p}$. Call $\textsf{clause}=\textsf{cell}^{-1}$ so that, when defined, $\textsf{clause}(t)$  is the index of the clause associated to the node $t\in V(G_{d,p})$.

    \item \emph{Literal sets along the BFS tree.}  
    For each $t \in V(G_{d,p})$, let $g(t)$ be the set of literals appearing in clauses $\textsf{clause}(t')$ for all descendants $t'$ of $t$ in $T_{d,p}$.  
    Fix an arbitrary ordering of $g(t)$:
    $$
        g(t) = \{\ell_{t,1}, \dots, \ell_{t,|g(t)|}\}.
    $$

    \item \emph{Placing gadgets.}  
    For each $t \in V(G_{d,p})$, create a copy $H_t$ of $H_{g(t),d}$. To avoid heavy notation, we will always use \emph{the vertex $v_{i,j,\ell}$ of $H_t$} to talk about the copy of the vertex $v_{i,j,\ell}$ which appears in the gadget $H_t$. If $\textsf{clause}(t)$  is defined, let $\ell_1, \ell_2, \ell_3$ be its three literals and add a path $a_t^1 a_t^2 a_t^3$ with edges:
    $$
        a_t^1 v_{1,d+1,\ell_1},\quad
        a_t^2 v_{1,d+1,\ell_2},\quad
        a_t^3 v_{1,d+1,\ell_3}.
    $$

    \item \emph{Propagation edges.}  
    For each edge $t_1 t_2 \in E(T_{d,p})$ of dimension $d'$ and for each $\ell \in g(t_1) \cap g(t_2)$, if $d'\neq 1$ (resp. $d'=1$), add the edge between $v_{4,d',\ell}$ (resp. $v_{2d+1,1,\ell}$) of $H_{t_1}$ and $v_{1,d',\ell}$ of  $H_{t_2}$.


    \item \emph{Variable-consistency gadget.}  
    Add a clique $U = \{u_1, \dots, u_n\}$ and connect $u_i$ to the vertex $v_{1,d+1,\ell}$ of $H_o$ whenever $x_i = \ell$.
\end{enumerate}

By construction of $G_{\phi,d}$, any literal $\ell$ belongs to at most $h$ sets $g(t)$, where $h$ is the height of $T_{d,p}$ (at most $pd/2$).  
Each $H_t$ contains $O(d)$ vertices for each literal in $g(t)$.  
Since $p \leqslant n^{1/d} + 2$, the total number of vertices in $G_{\phi,d}$ is
$$
    O\left(d^2\, n^{\frac{d+1}{d}}\right).
$$

We now prove that $G_{\phi,d}$ admits a $2$-subcoloring if and only if there exists a truth assignment such that each clause contains at least one literal set to \textsf{true}and at least one literal set to \textsf{false}.

We begin with a simple claim, which is a slight generalization of one appearing in~\cite{marin2025subcoloring}, and we reprove it here for completeness.

\begin{claim}\label{claim:PropagationSubcoloring}
Let $G$ be a graph such that $V(G)$ admits a bipartition $(A,B)$ with $|A|,|B|\geqslant 3$ and 
$$
E(G) = \{aa' : a,a'\in A\}\cup \{bb' : b,b'\in B\}\cup M,
$$
where $M\subseteq A\times B$ is a matching with at least two edges. Then, for any $2$-subcoloring $\mu : V(G) \rightarrow \{0,1\}$ of $G$ and any edge $ab\in M$, we must have $\mu(a) \neq \mu(b)$.
\end{claim}

\begin{claimproof}
Suppose, for the sake of contradiction, that there exists an edge $ab\in M$ with $\mu(a) = \mu(b)$. Without loss of generality, assume $\mu(a)=0$. Then, for any $b'\in B\setminus \{b\}$, it must hold that $\mu(b')=1$, otherwise $abb'$ would form a monochromatic $P_3$ of color $0$. Similarly, for any $a'\in A\setminus \{a\}$, we have $\mu(a')=1$. Now take $a'b'\in M\setminus \{ab\}$ and any vertex $a''\in A\setminus \{a,a'\}$. Then $a''a'b'$ is a monochromatic $P_3$ of color $1$, a contradiction.
\end{claimproof}

Suppose that $G_{\phi,d}$ has a $2$-subcoloring $\mu : V(G)\rightarrow \{0,1\}$. We define the following truth assignment: for $1\leqslant i \leqslant n$, set $x_i$ to \textsf{true} if and only if $\mu(u_i)=1$. 

Now observe that, for any literal $\ell=x_i$, the color assigned by $\mu$ to the vertex $v_{1,d+1,\ell}$ of $H_o$ is $1-\mu(u_i)$. The proof is similar to the one of Claim~\ref{claim:PropagationSubcoloring}. Suppose that $\mu(u_i)=\mu(v_{1,d+1,\ell})=0$. Then, for any $j\neq i$, observe that $\mu(u_j)=1$, otherwise it would create a path on three vertices $u_ju_iv_{1,d+1,\ell}$ of color $0$. Similarly, for any $\ell\in \mathcal{L}$ such that $\ell\neq x_i$, $\mu(v_{1,d+1,\ell})=1$. Since $|\mathcal{L}|\geqslant 3$, take two different variable $x_{j_1}$ and $x_{j_2}$, both different from $i$, and consider $\ell_1\in L$ such that $\ell_1=x_{j_1}$. We have $\mu(v_{1,d+1,\ell_1})=\mu(u_{j_1})=\mu(u_{j_2})=1$, while $v_{1,d+1,\ell_1}u_{j_1}u_{j_2}$ is a path on three vertices. It is a contradiction.

We call $C_o$ the clique $C_{1,d+1}$ in the clause gadget $H_o$ of the root $o$. Then, we show that that the colors of $H_o$ propagates in a unique way to all other cliques of the graph $G_{\phi,d}$. Observe that for any clique $C$ in $H_t$ for some $t\in V(G_{d,p})$ such that $g(t)$ is not empty, there is a sequence of cliques of size at least $C_1,...,C_k$ such that $C_1=C_o$ and $C_k=C$ and such that for any $1\leqslant i <k$, there is a matching between $C_i$ and $C_{i+1}$. By successive applications of Claim~\ref{claim:PropagationSubcoloring}, we obtain that for any pair of literals $\ell,\ell' \in g(t)$, the colors assigned by $\mu$ to the vertices $v_{i,j,\ell}$ and $v_{i,j,\ell'}$ of $H_t$ are equal if and only if the colors assigned to $v_{1,d+1,\ell}$ and $v_{1,d+1,\ell'}$ in $H_o$ are the same, which in turn holds if and only if $\mu(u_i)=\mu(u_{i'})$ for $i,i'\in [n]$ such that $\ell=x_i$ and $\ell'=x_{i'}$. 

Now, let $\ell_1,\ell_2,\ell_3$ be the three literals appearing in the $j$th clause. Suppose, by contradiction, that under the truth assignment, the three corresponding variables are either all \textsf{true} or all \textsf{false}. Then, by the argument above, the three vertices $v_{1,2d,\ell_1}$, $v_{1,2d,\ell_2}$, and $v_{1,2d,\ell_3}$ of $H_t$ all have the same color under $\mu$, say color $0$. Consider the vertex $a_t^1$: it must have color $1$, otherwise $a_t^1v_{1,d+1,\ell_1}v_{1,d+1,\ell_2}$ would form a monochromatic $P_3$ of color $0$. Similarly, both $a_t^2$ and $a_t^3$ must have color $1$, which creates a monochromatic $P_3$ of the form $a_t^1a_t^2a_t^3$, a contradiction. 

Therefore, the formula $\phi$ admits a truth assignment in which each clause contains at least one \textsf{true} variable and at least one \textsf{false} variable.

\smallskip

Assume $\phi$ admits a truth assignment $\tau:\{x_1,\dots,x_n\}\to\{0,1\}$ such that in each clause at least one literal is \textsf{true} and at least one literal is \textsf{false} (we use $1=\textsf{true}$ and $0=\textsf{false}$). 
We construct a 2-subcoloring $\mu:V(G_{\phi,d})\to\{0,1\}$ as follows.

For each variable $x_i$, let $\mu(u_i):=\tau(x_i)$.  
For every literal $\ell=x_i$ that appears in the center gadget $H_o$, set
$$
\mu\big(v_{1,d+1,\ell}\big) := 1-\mu(u_i) = 1-\tau(x_i).
$$
for the vertex $v_{1,d+1,\ell}$ of $H_o$. Color every clique of $H_o$, following the propagation rule of Claim~\ref{claim:PropagationSubcoloring}. Similarly, along the BFS tree $T_{d,p}$, propagate the 2-subcoloring to every clique of all gadgets $H_t$ for $t\in V(G_{d,p})$ according to Claim~\ref{claim:PropagationSubcoloring}.  

Finally, for each node $t\in V(G_{d,p})$ such that $j=\textsf{clause}(t)$ is defined, let $\ell_1,\ell_2,\ell_3$ be the literals of the $j$th clause and set 
$$
\mu(a_t^b) = 1-\mu\big(v_{1,d+1,\ell_b}\big), \quad b\in\{1,2,3\}.
$$
It can be observed that there is no monochromatic $P_3$: indeed, for any edge $uv\in E(G_{\phi,d})$ with $u$ and $v$ belonging to different cliques (either in the same cell gadget or in two adjacent ones), we have $\mu(u)\neq \mu(v)$. Thus, the only possible monochromatic $P_3$ would be of the form $a_t^1a_t^2a_t^3$ for some cell $t\in V(G_{d,p})$, which is not monochromatic by the hypothesis on the truth assignment.

\subparagraph{Representation when $d=2$.} 
We describe a representation of $G_{\phi,2}$ using similarly sized fat objects. Define a bijective function $h : \mathcal{L} \rightarrow [4n]$ satisfying the condition that for any $\ell \in \mathcal{L}$, if $\ell$ corresponds to variable $x_i$ for some $i \in [n]$, then $h(\ell) \in \{4i, 4i+1, 4i+2, 4i+3\}$. 
Note that the literals associated with the same variable are consecutive in the order induced by $h$.

Let $\varepsilon > 0$ be a small real number to be fixed later. 
We first describe how to embed the cell gadget $H_{L,2}$ into a $10 \times 10$ rectangle. 
The corresponding representation is illustrated in Figure~\ref{fig:gadget2D}, and we now give a formal definition. 
For each $\ell \in L$, we create ten polygons as follows:
\begin{itemize}
    \item For each $x \in \{0,4,8\}$, add the polygon with vertices $(x, h(\ell)\varepsilon)$, $(x+1,1)$, $(x+2, h(\ell)\varepsilon)$, and $(x+1,-1)$. These polygons represent $v_{1,1,\ell}$, $v_{3,1,\ell}$ and $v_{5,1,\ell}$, respectively.
    \item For each $x \in \{2,6\}$, add the polygon with vertices $(x, h(\ell)\varepsilon)$, $(x+1 + h(\ell)\varepsilon, 1)$, $(x+2, h(\ell)\varepsilon)$, and $(x+1 + h(\ell)\varepsilon, -1)$. These polygons represent $v_{2,1,\ell}$ and $v_{4,1,\ell}$, respectively.
    \item For each $y \in \{-5, -3, 1, 3\}$, add the polygon with vertices $(3 + h(\ell)\varepsilon, y)$, $(2, y+1)$, $(3 + h(\ell)\varepsilon, y+2)$, and $(4, y+1)$. These polygons represent $v_{1,2,\ell}$, $v_{2,2,\ell}$, $v_{3,2,\ell}$ and $v_{4,2,\ell}$, respectively.
    \item Finally, add the polygon with vertices $(7 + h(\ell)\varepsilon, -3)$, $(6, -2)$, $(7 + h(\ell)\varepsilon, -1)$, and $(8, -2)$. This polygon represents the vertex $v_{1,3,\ell}$.
\end{itemize}

One can observe that two polygons intersect if and only if the corresponding vertices of $H_{L,2}$ are adjacent. 
Moreover, any two vertices of $H_{L,2}$ that do not belong to the same clique in the gadget intersect only at a single point. 
Later, in the construction of the full representation of $G_{\phi,2}$, some of these single-point intersections may be removed depending on the edges between cells in the tree $T_{p,2}$.

For any cell $t \in G_{2,p}$, we construct the corresponding gadget $H_{g(t),2}$ using the polygons described above. 
Additionally, if $\textsf{clause}(t)$ is defined, we add three polygons representing the vertices $a_t^1, a_t^2,$ and $a_t^3$ . 
Let $\ell_1, \ell_2, \ell_3$ denote the three literals of $\textsf{clause}(t)$, and assume that $h(\ell_1) < h(\ell_2) < h(\ell_3)$. 
Then we add:
\begin{itemize}
    \item the triangle with vertices $(7 + h(\ell_1)\varepsilon, -3)$, $(7 + h(\ell_1)\varepsilon, -5)$, and $(6, -4)$ to represent $a_t^1$;
    \item the polygon with vertices $(7 + h(\ell_2)\varepsilon, -3)$, $(8, -4)$, $(7 + h(\ell_2)\varepsilon, -5)$, and $(6, -4)$ to represent $a_t^2$;
    \item the triangle with vertices $(7 + h(\ell_3)\varepsilon, -3)$, $(7 + h(\ell_3)\varepsilon, -5)$, and $(8, -4)$ to represent $a_t^3$.
\end{itemize}

It can be observed that $a_t^1 a_t^2 a_t^3$  forms a path on three vertices in the polygonal representation. 
We denote by $\mathcal{P}_t$ the set of polygons obtained, which fit exactly into a $10 \times 10$ square. 
The full representation of $G_{\phi,2}$ is constructed as follows:
\begin{enumerate}
    \item For each node $t = (t_1, t_2) \in G_{2,p}$, place a copy of $\mathcal{P}_t$ within the square with vertices 
    $(10t_1, 10t_2)$, $(10t_1, 10(t_2 + 1))$, $(10(t_1 + 1), 10(t_2 + 1))$, and $(10(t_1 + 1), 10t_2)$.
    \item For each variable $x_i$, define
    $$
        h_i = \min \{ h(\ell) \mid \ell \in L, \ \ell = x_i \}.
    $$
    Add a polygon with vertices $(7 + h_i\varepsilon, -3)$, $(7 + (h_i + 3)\varepsilon, -3)$, $(8, -4)$, $(7, -5)$, and $(6, -4)$ to represent $u_i$, where coordinates are taken inside the $10\times10$ square which contains $H_o$, using the same origin as in the definition of the polygons.
    This polygon intersects all polygons representing vertices $v_{1, d+1, \ell}$ whenever $\ell = x_i$, since $h(\ell) \in \{h_i, \dots, h_i + 3\}$.
    \item For any edge $(t_1, t_2)(t_1 + 1, t_2) \in E(G_{2,p}) \setminus E(T_{2,p})$ (i.e., an edge of the grid not belonging to the tree), remove every point of every polygon that lies on the segment between the points $(10(t_1 + 1), 10t_2)$ and $(10(t_1 + 1), 10(t_2 + 1))$.
    Similarly, for any edge $(t_1, t_2)(t_1, t_2 + 1) \in E(G_{2,p}) \setminus E(T_{2,p})$, remove the segment between $(10t_1, 10(t_2 + 1))$ and $(10(t_1 + 1), 10(t_2 + 1))$.
    This ensures that the intersections between polygon sets corresponding to different grid nodes correspond exactly to the \emph{propagation edges} of $G_{\phi,2}$. 
    Note that removing these points does not affect the fact that the objects remain similarly sized fat objects. 
    Alternatively, one could avoid deletions by slightly shifting the polygons to eliminate any undesired intersections.
\end{enumerate}

It remains to fix the value of $\varepsilon$. 
The only requirement is that $h(\ell)\varepsilon$ must be small enough for any $\ell \in L$ so as not to create unintended intersections between polygons. 
This can be achieved, for instance, by taking $\varepsilon = \frac{1}{2|L|}$.

This completes the construction of the representation of $G_{\phi,2}$ as the intersection graph of similarly sized objects in dimension~$2$.

\begin{figure}[t]
    \centering
\begin{tikzpicture}[scale=1.1]
\foreach \x in {-1,0,1,2,3,4,5,6,7,8, 9}{
    \draw[black!20] (\x, -5) -- (\x,5); 
}
\foreach \y in {-5,...,5}{
    \draw[black!20] (-1,\y) -- (9,\y); 
}

\foreach \x in {0,4,8}{
    \fill[draw, black, fill = green, fill opacity = 0.2] (\x-1,0) -- (\x,1) -- (\x+1,0) -- (\x, -1) -- (\x-1,0) ; 
    \fill[draw, black, fill = green, fill opacity = 0.2] (\x-1,0.15) -- (\x,1) -- (\x+1,0.15) -- (\x, -1) -- (\x-1,0.15) ;     
    \fill[draw, black, fill = green, fill opacity = 0.2] (\x-1,0.3) -- (\x,1) -- (\x+1,0.3) -- (\x, -1) -- (\x-1,0.3) ;     
   \fill[draw, black, fill = red, fill opacity = 0.2] (\x-1,0.45) -- (\x,1) -- (\x+1,0.45) -- (\x, -1) -- (\x-1,0.45) ; }
    
\foreach \x in {0,4}{
\fill[draw, black, fill = red, fill opacity = 0.2] (\x+1,0) -- (\x+2,1) -- (\x+3,0) -- (\x+2,-1) -- (\x+1,0);
\fill[draw, black, fill = red, fill opacity = 0.2] (\x+1,0.15) -- (\x+2+0.15,1) -- (\x+3,0.15) -- (\x+2+0.15,-1) -- (\x+1,0.15);
\fill[draw, black, fill = red, fill opacity = 0.2] (\x+1,0.3) -- (\x+2+0.3,1) -- (\x+3,0.3) -- (\x+2+0.3,-1) -- (\x+1,0.3);
\fill[draw, black, fill = green, fill opacity = 0.2] (\x+1,0.45) -- (\x+2+0.45,1) -- (\x+3,0.45) -- (\x+2+0.45,-1) -- (\x+1,0.45);
 } 
 
 \foreach \x/\y in {0/0, 0/-4, 4/-4}{ 
\fill[draw, black, fill = green, fill opacity = 0.2] (\x+1,\y+2) -- (\x+2,\y+1) -- (\x+3,\y+2) -- (\x+2,\y+3) -- (\x+1,\y+2);
\fill[draw, black, fill = green, fill opacity = 0.2] (\x+1,\y+2) -- (\x+2+0.15,\y+1) -- (\x+3,\y+2) -- (\x+2+0.15,\y+3) -- (\x+1,\y+2);
\fill[draw, black, fill = green, fill opacity = 0.2] (\x+1,\y+2) -- (\x+2+0.3,\y+1) -- (\x+3,\y+2) -- (\x+2+0.3,\y+3) -- (\x+1,\y+2);
\fill[draw, black, fill = red, fill opacity = 0.2] (\x+1,\y+2) -- (\x+2+0.45,\y+1) -- (\x+3,\y+2) -- (\x+2+0.45,\y+3) -- (\x+1,\y+2);
}

 \foreach \x/\y in {0/2, 0/-6}{ 
\fill[draw, black, fill = red, fill opacity = 0.2] (\x+1,\y+2) -- (\x+2,\y+1) -- (\x+3,\y+2) -- (\x+2,\y+3) -- (\x+1,\y+2);
\fill[draw, black, fill = red, fill opacity = 0.2] (\x+1,\y+2) -- (\x+2+0.15,\y+1) -- (\x+3,\y+2) -- (\x+2+0.15,\y+3) -- (\x+1,\y+2);
\fill[draw, black, fill = red, fill opacity = 0.2] (\x+1,\y+2) -- (\x+2+0.3,\y+1) -- (\x+3,\y+2) -- (\x+2+0.3,\y+3) -- (\x+1,\y+2);
\fill[draw, black, fill = green, fill opacity = 0.2] (\x+1,\y+2) -- (\x+2+0.45,\y+1) -- (\x+3,\y+2) -- (\x+2+0.45,\y+3) -- (\x+1,\y+2);
}

\foreach \x/\y in {4/-6}{
\fill[draw, black, fill = red, fill opacity = 0.2] (\x+1,\y+2) -- (\x+2,\y+1) --  (\x+2,\y+3) -- (\x+1,\y+2);
\fill[draw, black, fill = red, fill opacity = 0.2] (\x+1,\y+2) -- (\x+2+0.3,\y+1) -- (\x+3,\y+2) -- (\x+2+0.3,\y+3) -- (\x+1,\y+2);
\fill[draw, black, fill = green, fill opacity = 0.2] (\x+2.45,\y+1) -- (\x+3,\y+2) -- (\x+2.45,\y+3) -- (\x+2.45,\y+1);
}

\end{tikzpicture}
    \caption{2D representation of $H_{L,2}$ with similarly sized fat objects.}
    \label{fig:gadget2D}
\end{figure}
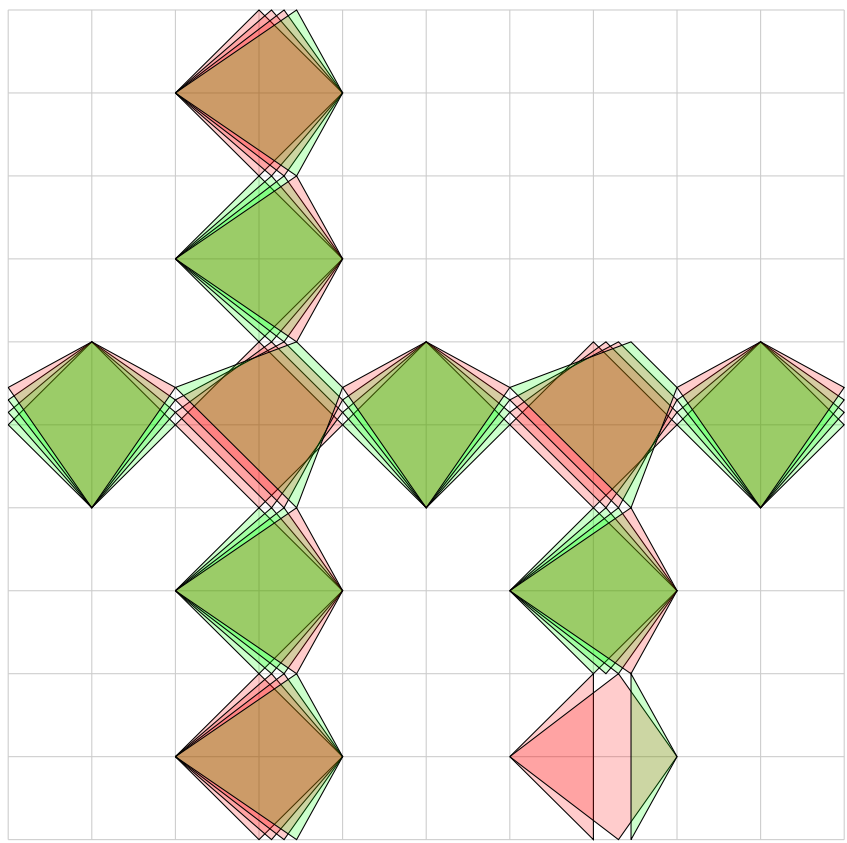

\subparagraph{Representation with unit balls when $d\geqslant 3$.} We now prove that for any $d\geqslant 3$, $G_{\phi,d}$ is a unit ball graph. Throughout, pick a sufficiently small $\varepsilon>0$ (specified below). We fix an ordering of the literals $h : \mathcal{L} \rightarrow [4n]$ in a similar way to the case $d=2$.

\medskip
Let $L\subseteq \mathcal{L}$ be a set of literals. The heart of the proof will be to represent $H_{L,d}$ with a $d$-dimensional box in a similar fashion than the $2$-dimensional case, except that the objects will be balls of radius $1/2$ this time. Recall that for every $j \in [d]$, $e_j$ denotes the $d$-dimensional unit vector with a $1$ in its $j$th coordinate.  We first embed one copy of $H_{L,d}$ inside an axis-aligned box of side lengths
$(2d+1)\times 5 \times \cdots \times 5 \times (3-\varepsilon m)$. 

We give the center of each ball representing a vertex of $H_{L,d}$. Let $\ell \in L$ be a literal. For the path $v_{1,1},\dots,v_{2d+1,1}$, place the center of the ball representing $v_{i,1,\ell}$ at $X_{i,1,\ell}:= ie_1+h(\ell)\varepsilon e_d$, where $1\leqslant  i\leqslant 2d+1$.

For each $2\leqslant j\leqslant d-1$, place four balls with centers at
\begin{itemize}
    \item $X_{2j-2,1,\ell}-2e_j$ to represent vertex $v_{1,j,\ell}$;
    \item $X_{2j-2,1,\ell}-1e_j$ to represent vertex $v_{2,j,\ell}$;
    \item $X_{2j-2,1,\ell}+1e_j$ to represent vertex $v_{3,j,\ell}$;
    \item $X_{2j-2,1,\ell}+2e_j$ to represent vertex $v_{4,j,\ell}$.
\end{itemize}

For $\ell \in L$, the ball representing $v_{1,d+1,\ell}$ is placed with center $X_{1,d+1,\ell} := 2de_{1}-e_{2}+\varepsilon h(\ell)e_{d}$.

\smallskip
One can observe that the ball representation respects the edges between corresponding vertices of $H_{L,d}$. It remains to place the balls of the last short path, \textit{i.e.} vertices $v_{i,d,\ell}$ for $i\in \{1,2,3,4\}$. 
The challenge is that this remaining dimension is already used to encode the offset $\varepsilon h(\ell)$. To overcome this issue, we will adopt a construction introduced in~\cite{de2020lower}.

\smallskip

To place those balls, we will work in the $3$-dimensional subspace
$$
S:=\{(x_1,x_2,0,\dots,0,x_d)\}\cong\mathbb{R}^3,
$$
and we write points as $(x_1,x_2,x_d)$, and use cylindrical coordinates centered at
$(2d-2,0,0)$ in the $(e_1,e_2)$-plane (\textit{i.e.} at the ball to which the path is attached). We have 
$$(x_1,x_2,x_d)\;\equiv\; (r,\theta,z)
$$
with
\begin{itemize}
    \item $r=\sqrt{(x_1-(2d-2))^2+x_2^2}$,
    \item $z=x_d$, 
    \item $\theta$ is the angle between $e_1$ and the projection of the point $(x_1,x_2,x_d)$ on the plane $(e_1,e_2)$.
\end{itemize}
Observe that the squared distance between two points $(1,\theta,z)$ and $(1,\theta',z')$ is exactly
\begin{equation}\label{eq:distCylindre}
(z-z')^2+2\sin^2\left(\frac{\theta-\theta'}{2}\right).
\end{equation}
Let $\varepsilon'>0$ be a constant yet to be fixed. Place the centers as follows :

\begin{itemize}
\item $X_{2,d,\ell}:=\ (1,\ -\tfrac{\pi}{2} +\varepsilon'h(\ell),\ \varepsilon h(\ell))$ to represent $v_{2,d,\ell}$ ;
\item $X_{1,d,\ell}:=\ (1,\ -\tfrac{\pi}{2} +\varepsilon'h(\ell),\ -1+\varepsilon m)$ to represent $v_{1,d,\ell}$ 
\item $X_{3,d,\ell}:=\ (1,\ \phantom{-}\tfrac{\pi}{2} + \varepsilon'h(\ell),\ \varepsilon h(\ell))$  to represent $v_{3,d,\ell}$ ;
\item $X_{4,d,\ell}:=\ (1,\ \phantom{-}\tfrac{\pi}{2} + \varepsilon'h(\ell),\ 1)$ to represent $v_{4,d,\ell}$.  
\end{itemize}
An illustration of how those points are placed for two consecutive literals in the order $h$ is given in Figure~\ref{fig:Cylinder}.

\begin{figure}[t]
    \centering
    \tdplotsetmaincoords{-10}{0}
\tdplotsetmaincoords{-10}{0}
\begin{tikzpicture}[scale=2, tdplot_main_coords]

\tdplotsetmaincoords{70}{0}

\begin{scope}[canvas is xz plane at y=1]
    \draw (0,0) circle (1);
\end{scope}

\begin{scope}[canvas is xz plane at y=-1]
    \draw (0,0) circle (1);
\end{scope}

\draw (1,-1,0) -- (1,1,0);
\draw (-1,-1,0) -- (-1,1,0);

\draw (0,1,0) -- (-0.389,1,0.921);
\draw (0,1,0) -- (-0.565,1,0.825);
\draw (0,-1,0) -- (0.389,-1,-0.921);
\draw (0,-1,0) -- (0.564,-1,-0.825);

\draw (0.389,0,-0.921) --(0,0,0) -- (-0.389,0,0.921);
\draw (0.564,0.05,-0.825) --(0,0.05,0) -- (-0.564,0.05,0.825);

\draw[->] (0,-1.5,0) -- (0,1.5,0);

\draw[dashed] (-0.389,0,0.921) -- (-0.389,1,0.921);
\draw[dashed]  (-0.564,0.05,0.825) -- (-0.564,1,0.825);

\draw[dashed] (0.389,0,-0.921) -- (0.389,-1,-0.921);
\draw[dashed]  (0.564,0.05,-0.825) -- (0.564,-1,-0.825);

\draw[red, <->] (0.1, -0.03, 0) -- (0.1, 0.08,0);
\node[] () at (0.2,0, 0) {\textcolor{red}{$\varepsilon$}};

\node[] () at (-0.46,0.9, 0.85) {\textcolor{red}{$\varepsilon'$}};

\node[circle, fill, blue, inner sep = 1] () at (-0.389,0,0.921) {};
\node[circle, fill, red, inner sep = 1] () at (-0.564,0.05,0.825) {};

\node[circle, fill, blue, inner sep = 1] () at (0,0,0) {};
\node[circle, fill, red, inner sep = 1] () at (0,0.05,0) {};

\node[circle, fill, blue, inner sep = 1] () at (0.389,0,-0.921) {};
\node[circle, fill, red, inner sep = 1] () at (0.564,0.05,-0.825) {};

\node[circle, fill, blue, inner sep = 1] () at (-0.389,1,0.921) {};
\node[circle, fill, red, inner sep = 1] () at (-0.565,1,0.825) {};

\node[circle, fill, blue, inner sep = 1] () at (0.389,-1,-0.921) {};
\node[circle, fill, red, inner sep = 1] () at (0.565,-1,-0.825) {};

\end{tikzpicture}
    \caption{An illustration of how are placed the centers of balls for two consecutive literals $\ell_1$ and $\ell_2$ in the $3$-dimensional subspace $S$. The two points along the central axis of the cylinder are $X_{2d-2,1,\ell_1}$ and $X_{2d-2,1,\ell_2}$ and are at distance exactly $\varepsilon$, as they are consecutive in the order $h$. }\label{fig:Cylinder}
\end{figure}
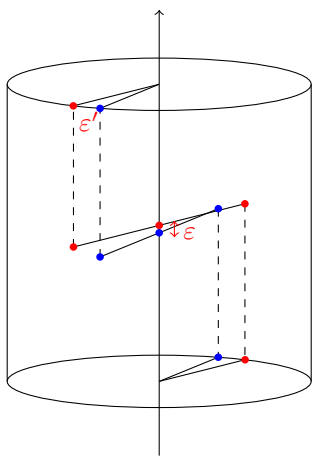
Following Equation~\ref{eq:distCylindre}, one can observe that the squared distance between the centers of $v_{1,d,\ell}$ and $v_{2,d,\ell}$ is $(1-\varepsilon(m-h(\ell)))^2$, and similarly for $v_{3,d,\ell}$ and $v_{4,d,\ell}$. This distance is less than $1$, under the condition that $\varepsilon \leqslant \frac{1}{m}$ (which will be satisfied once $\varepsilon$ is fixed).
For $\ell\neq \ell'$, the squared distance between the centers of $v_{1,d,\ell}$ and $v_{2,d,\ell'}$ is
\begin{equation}\label{eq:dist1}
(1-\varepsilon (m-h(\ell')))^2+2\sin^2\left(\frac{\varepsilon'}{2}(h(\ell)-h(\ell'))\right)
\end{equation}
Assuming that $\varepsilon$ and $\varepsilon'$ are smaller than $\frac{1}{m}$ (which will be the case later), the sum~(\ref{eq:dist1}) is at least
\begin{equation}\label{eq:dist2}
    (1-\varepsilon m)^2 + 2 \sin^2\left(\frac{\varepsilon'}{2}\right)
\end{equation}
We fix $\varepsilon = \frac{1}{4m^3}$ and $\varepsilon'= \frac{\pi}{m}$, and observe that the sum~(\ref{eq:dist2}) becomes
$$
\left(1-\frac{1}{4m^2}\right)^2 + 2 \sin^2\left(\frac{\pi}{2m}\right)\geqslant 1-\frac{1}{2m^2}+\frac{1}{16m^4}+2\left(\frac{2}{\pi} \frac{\pi}{2m}\right)^2>1
$$
\smallskip
The construction of the ball representation of $H_{d,L}$ is done. The case $d=3$ is illustrated in Figure~\ref{fig:clause_gadget3D}. Notice that it fits into a box of dimension $(2d+1)\times 5\times ...\times 5 \times (3-\varepsilon m)$. In addition, notice that for any $2\leqslant d'\leqslant d-1$, if we glue two ball representations $H_1$ and $H_2$ of $H_{d,L}$ along the face of the box corresponding to the $d'$-th dimension, then there is a matching between the clique $C_{1,d'}$ of $H_1$ and the clique $C_{2,d'}$ of $H_2$ (or vice-versa). A similar result holds for the first dimension : if we glue along the face of the first dimension of the box, then either there is a matching between the clique $C_{1,1}$ of $H_1$ and the clique $C_{2d+1,1}$ of $H_2$, or vice-versa. However, no such result holds for the $d$-th dimension (see Figure~\ref{fig:clause_gadget3D}). Thus we define two types of ball representations of $H_{L,d}$. Type I is precisely the one defined previously. Type II is the same, except that for any $\ell\in L$, the balls corresponding to $v_{1,d,\ell}, v_{2,d,\ell}, v_{3,d,\ell}$ and $v_{4,d,\ell}$ are placed differently. More precisely, place the centers as follows :
\begin{itemize}
\item $X_{2,d,\ell}:=\ (1,\ \tfrac{\pi}{2} +\varepsilon'h(\ell),\ \varepsilon h(\ell))$ to represent $v_{2,d,\ell}$ ;
\item $X_{1,d,\ell}:=\ (1,\ \tfrac{\pi}{2} +\varepsilon'h(\ell),\ -1+\varepsilon m)$ to represent $v_{1,d,\ell}$ 
\item $X_{3,d,\ell}:=\ (1,\ -\tfrac{\pi}{2} + \varepsilon'h(\ell),\ \varepsilon h(\ell))$  to represent $v_{3,d,\ell}$ ;
\item $X_{4,d,\ell}:=\ (1,\ -\tfrac{\pi}{2} + \varepsilon'h(\ell),\ 1)$ to represent $v_{4,d,\ell}$.  
\end{itemize}
Now, observe that if we glue a type I representation with a type II representation of two copies $H_1$ and $H_2$ of $H_{L,d}$ on the $d$-th face of the box, then there is a matching between the clique $C_{1,d}$ of $H_1$ and the clique $C_{4,d}$ of $H_2$ (or vice-versa).

\begin{figure}
    \centering
    \includegraphics[scale=0.4]{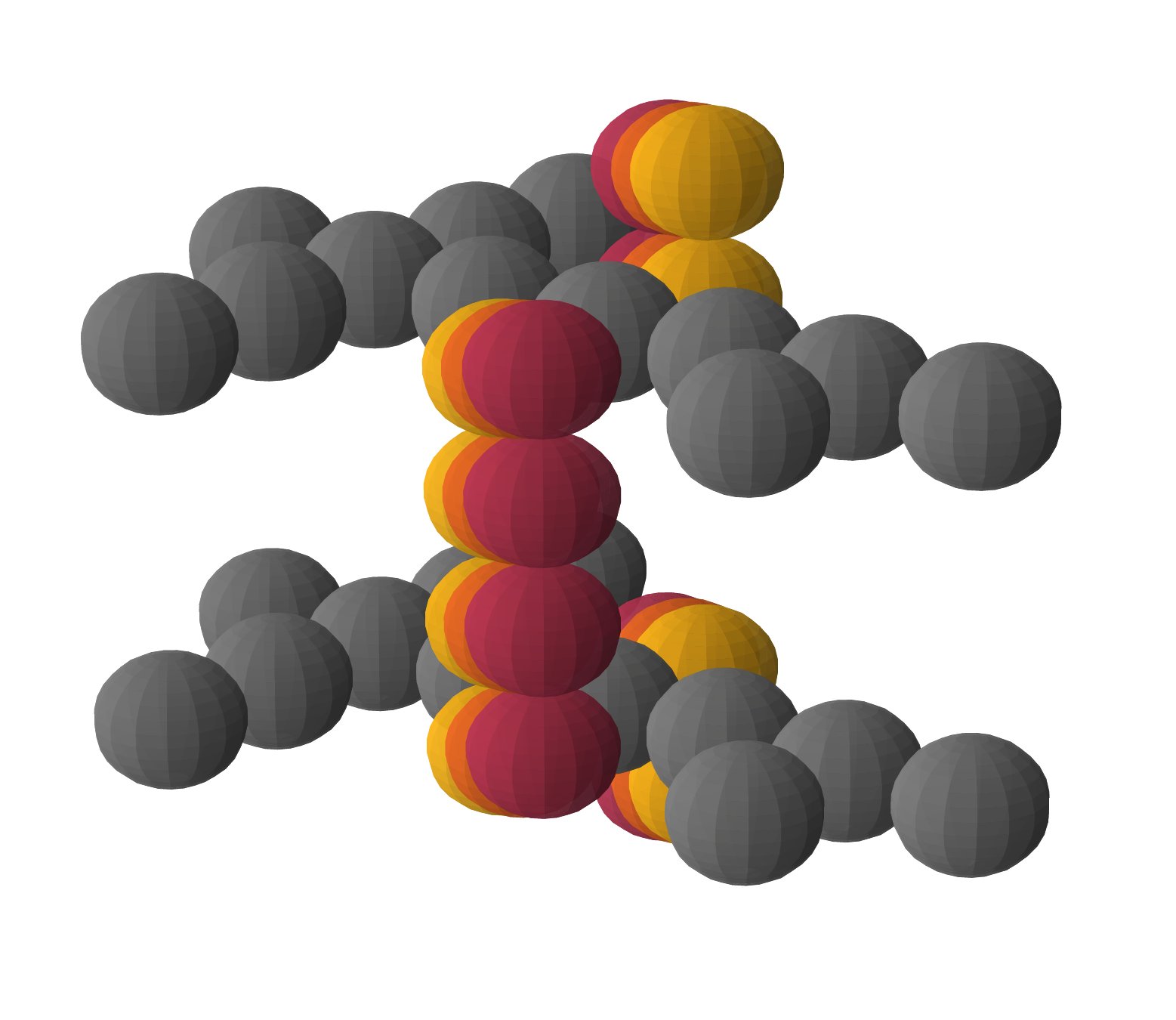}
    \caption{Representation of two adjacent clause gadgets with unit ball in $3$ dimensions. One clause gadget is of type I and the other of type II.}
    \label{fig:clause_gadget3D}
\end{figure}

\medskip
Let $t \in V(G_{d,p})$ be a vertex such that $\textsf{clause}(t)= \ell_1 \vee \ell_2 \vee \ell_3$ is defined. 
We describe how to add three balls to $H_{L,d}$ (whether it is of type~I or type~II) in order to represent the vertices $a_t^1$, $a_t^2$, and $a_t^3$. 
We use cylindrical coordinates, this time centered at $O := (2d,-1,0)$ instead of $(2d-2,0,0)$ as before. 
We place three balls with centers given by:
\begin{itemize}
    \item $A_{1}^t := \bigl(1,-\tfrac{3\pi}{10}, h(\ell_1)\varepsilon \bigr)$ to represent $a_t^1$;
    \item $A_{2}^t := \bigl(1,-\tfrac{4\pi}{10}, h(\ell_2)\varepsilon \bigr)$ to represent $a_t^2$;
    \item $A_{3}^t := \bigl(1,-\tfrac{6\pi}{10}, h(\ell_3)\varepsilon \bigr)$ to represent $a_t^3$.
\end{itemize}
One can observe that the distance between $A_{t}^1$ and $X_{1,d+1,\ell_1}$ is exactly $1$, and similarly for the pairs $(A_{t}^2, X_{1,d+1,\ell_2})$ and $(A_{t}^3, X_{1,d+1,\ell_3})$.
Moreover, provided that $\varepsilon$ is sufficiently small, we have $d(A_t^1, A_t^2) < 1$, $d(A_t^2, A_t^3) < 1$, and $d(A_t^1, A_t^3) > 1$, ensuring that the adjacencies among $a_t^1$, $a_t^2$, and $a_t^3$ in $G_{\Phi,d}$ are correctly represented. 
An illustration is given in Figure~\ref{fig:Checkgadget}. Finally, it can be observed that these three balls do not intersect any other ball in the gadget $H_{L,d}$ except those mentioned above.

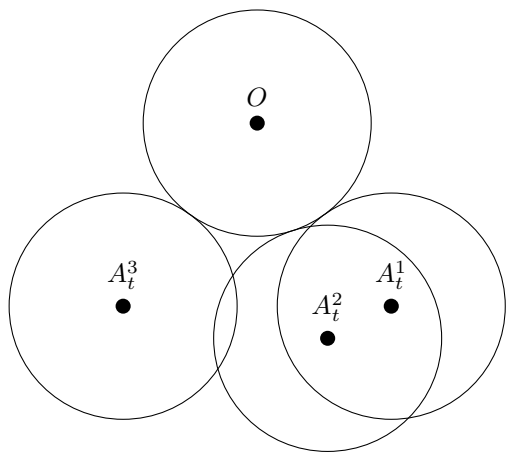
\begin{figure}[ht]
    \centering
    \begin{tikzpicture}[scale=3]
    \node[circle, fill, label = above:$O$, inner sep = 2] () at (0,0) {};
    \draw (0,0) circle (0.5);
    \node[circle, fill, label = above:$A_t^1$, inner sep = 2] () at (0.588,-0.809) {};
    \draw (0.588,-0.809) circle (0.5);
    \node[circle, fill, label = above:$A_t^2$, inner sep = 2] () at (0.309,-0.951) {};
    \draw (0.309,-0.951) circle (0.5);
    \node[circle, fill, label = above:$A_t^3$, inner sep = 2] () at (-0.588,-0.809) {};
    \draw (-0.588,-0.809) circle (0.5);
    \end{tikzpicture}
    \caption{An illustration of the positions of $O=(2d+2,-1,0)$, $A_t^1$, $A_t^2$, and $A_t^3$, projected onto the plane $(e_1,e_2)$.  There is a small offset along the $d$-th dimension of magnitude at most $m\varepsilon$, which is negligible.}\label{fig:Checkgadget}
\end{figure}
\medskip
Finally, we explain how to add the balls representing the vertices of the clique $U$ in the case $L=\mathcal{L}$. 
Recall that, in the bijective function $h$, the four occurrences of each literal appear consecutively. 
Given a variable $x_i$, define 
\[
\tilde{h}(i) = \frac{h(\ell_1)+h(\ell_2)+h(\ell_3)+h(\ell_4)}{4},
\]
where $\ell_1$, $\ell_2$, $\ell_3$, and $\ell_4$ are the four occurrences of the variable $x_i$. 
Note that if $h(\ell_1)= \min_{1\leqslant j \leqslant 4} h(\ell_j)$, then $\tilde{h}(i) = h(\ell_1)+\tfrac{3}{2}$. For each $1 \leqslant i \leqslant n$, we place a ball centered at 
\[
Y_i := \left(2d+2,\,-1-\sqrt{1-\left(\tfrac{2}{3}\varepsilon\right)^2},\,\tilde{h}(i)\varepsilon\right).
\]
Observe that $d(Y_i, X_{1,d+1,\ell_j}) \leqslant 1$ for each $j \in \{1,2,3,4\}$, and that for any literal $\ell \neq x_i$, we have $d(Y_i, X_{1,d+1,\ell}) > 1$. 
This construction is illustrated in Figure~\ref{fig:ConsistencyGadget}.
\begin{figure}[ht]
    \centering
    \begin{tikzpicture}[scale=0.5]
\draw[->] (0,-5) -- (0,9);
\node[] () at (-0.5,9.5) {$e_d$};
\node[circle, fill, label = left:$X_{1,d+1, \ell_1}$, inner sep=2] () at (0,0) {};
\node[circle, fill, label = left:$X_{1,d+1, \ell_2}$, inner sep=2] () at (0,1) {};
\node[circle, fill, label = left:$X_{1,d+1, \ell_3}$, inner sep=2] () at (0,2) {};
\node[circle, fill, label = left:$X_{1,d+1, \ell_4}$, inner sep=2] () at (0,3) {};

\node[circle, fill, label = left:$X_{1,d+1, \ell'}$, inner sep=2] () at (0,4) {};

\foreach \i in {0,1,2,3}{
	\draw (0,\i) circle (4) ; 
}
\draw[fill = blue, fill opacity = 0.2] (0,4) circle (4) ; 

\draw[red, fill = red, fill opacity = 0.2] (7.86, 1.5) circle (4);
\node[circle, fill = red, label = above:$Y_i$, inner sep = 2]  () at (7.86, 1.5) {};

\end{tikzpicture}
    \caption{Illustration of the placement of the points $Y_i$ for $1\leqslant i \leqslant n$ on the plane $(e_2,e_d)$. This point is at distance at most $1$ from the points $X_{1,d+1,\ell_1}, \ldots, X_{1,d+1,\ell_4}$, which are exactly the points such that $\ell_1=\ell_2=\ell_3=\ell_4=x_i$. For any other literal $\ell'$, the distance between $Y_i$ and $X_{1,d+1,\ell'}$ is strictly more than $1$.}
    \label{fig:ConsistencyGadget}
\end{figure}
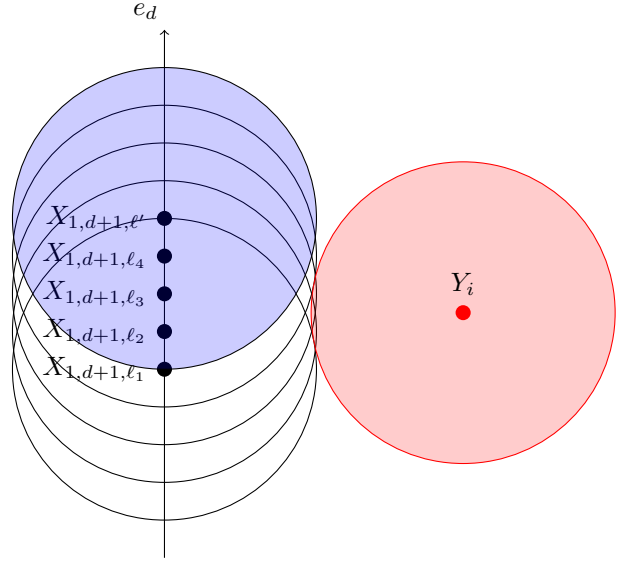

\medskip

It remains to complete the construction of $G_{\Phi,d}$. 
Consider a $d$-dimensional box of dimensions $(2d+1)p \times 5p \times \ldots \times 5p \times (3-\varepsilon m)p$, and partition it into $p^d$ sub-boxes of dimensions $(2d+1) \times 5 \times \ldots \times 5 \times (3-\varepsilon m)$. 
We map each of these sub-boxes to a vertex of $G_{d,p}$ in such a way that two sub-boxes share a common face if and only if the corresponding vertices are adjacent in $G_{d,p}$. 
Observe that $G_{d,p}$ is bipartite, and fix a proper $2$-coloring $\textsf{type}: V(G_{d,p}) \rightarrow \{I,II\}$. 
For each vertex $t \in V(G_{d,p})$, we place inside the corresponding sub-box a copy $H_t$ of the gadget $H_{d,g(t)}$ of type $\textsf{type}(t)$. Whenever $\textsf{clause}(t)$ is defined, we place balls to represent $a_t^1$, $a_t^2$ and $a_t^3$ as described above. Similarly, we add the balls representing the vertices of $U$ in the central sub-box of $H_o$ as described above. 

To complete the construction, it remains to remove the undesirable adjacencies that may occur along the face shared by two adjacent sub-boxes. 
Let $tt' \in E(G_{d,p}) \setminus E(T_{d,p})$, and let $d' \in \{1,\ldots,d\}$ be the dimension of this edge.  
Let $F$ be the common face between the sub-boxes associated with $t$ and $t'$. 
For every ball $B$ intersecting $F$, we shift $B$ slightly along dimension $d'$ so that its boundary lies at distance $\varepsilon''$ from $F$. 
Choosing $\varepsilon'' = \varepsilon/10$ is sufficient to ensure that adjacencies inside each clause gadget remain unchanged, while no ball intersects $F$ anymore. 
Repeating this operation for every such edge yields a valid ball representation of the graph induced by $G_{\Phi,d}$.

\end{proof}

\subsection{Lower Bound  for Two Sets Cut-Uncut}

We prove the same lower bound for the \textsc{Two Sets Cut--Uncut} problem. The construction is very similar to that of Section~\ref{sec:LowerBoundSubcoloring}, and therefore many geometric details are omitted.

\begin{theorem}\label{thm:LowerBoundCutUncut}
 \textsc{Two Sets Cut-Uncut} cannot be solved in time $2^{o\left(n^{1-1/(d+1)}\right)}$ on $n$-vertex intersection graphs of similarly sized of fat objects in $\mathbb{R}^d$ for any $d\geqslant 2$, and even in unit ball graphs in $\mathbb{R}^d$ for any $d\geqslant 3$.
\end{theorem}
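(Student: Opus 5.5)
We follow the framework of Section~\ref{subsec:overview} and reuse the geometric skeleton from the proof of Theorem~\ref{thm:LowerBound2Subcoloring}. We reduce from \textsc{Monotone Not-All-Equal-3-SAT} with $n$ variables and $m=4n$ clauses. We keep the grid $G_{d,p}$, the BFS tree $T_{d,p}$ rooted at $o$, the map $\textsf{cell}$, and the literal sets $g(t)$. The cell gadgets $H_{g(t),d}$ have the same shape as before, but the connections are reinterpreted as weighted edges, so that the side ($A$ or $B$) of a vertex encodes a truth value. Since the gadgets are built from the same skeleton blown up along literals, the geometric realisation (polygons for $d=2$, type I/II unit balls for $d\geqslant 3$) can be reused almost verbatim. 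The vertex count stays $O(d^2 n^{(d+1)/d})$, and an algorithm running in time $2^{o(N^{1-1/(d+1)})}$ on the $N$-vertex instance would solve the formula in time $2^{o(n)}$, contradicting ETH.

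For propagation, each literal $\ell$ gets a ``wire'': the vertices $v_{i,j,\ell}$ over all $H_t$ with $\ell\in g(t)$, linked by the skeleton edges and the propagation edges along $T_{d,p}$. This wire is a tree following a root-to-clause branch.
\begin{itemize}
\item Wire edges receive a huge weight $W$, so that in an optimal cut each wire is monochromatic.
\item Intra-clique edges $v_{i,j,\ell}v_{i,j,\ell'}$ receive weight $0$, or weight $1$ if that is easier to account for.
\end{itemize}
The cliques are needed only for geometry, but their edges affect the cut value, so the accounting must be checked.

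For the clause check, the 2-subcoloring gadget $a_t^1a_t^2a_t^3$ is no longer usable. Instead the cut/uncut requirement itself must do the work: $S$ must be connected in $G[A]$ and $T$ in $G[B]$. The natural choice is the following.
\begin{itemize}
\item Put a terminal $s_t\in S$ and a terminal $t_t\in T$ at each clause cell, adjacent to $v_{1,d+1,\ell_b}$ for $b=1,2,3$.
\item Make all terminals of $S$ (resp.\ $T$) join globally only through wires.
\end{itemize}
Then $s_t$ is connected to the rest of $S$ inside $A$ only if some literal wire of the clause lies in $A$, and $t_t$ is connected to the rest of $T$ inside $B$ only if some literal wire lies in $B$. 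This is exactly the NAE condition. Connectivity between different clauses must pass through a global hub. I would reuse the clique $U$ at the root and split it into a hub $s^\ast\in S$ and a hub $t^\ast\in T$. Each variable is represented by a pair $u_i,u_i'$, with $u_i$ adjacent to $s^\ast$ and $u_i'$ adjacent to $t^\ast$, and both adjacent to the wires of $x_i$'s literals in $H_o$. Heavy weights then force consistency of the four occurrences of $x_i$ at the root. If the min-weight aspect only complicates matters, set all weights so that the question becomes feasibility: any feasible cut avoids the $W$-edges, and its cost is a fixed constant. Lower-bounding the feasibility version then suffices.

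The hardest step is making connectivity faithful. Paths through wires of other literals must not create spurious $A$-connections, for instance through the intra-clique edges inside a cell gadget, which link all literals' wires. This can be fixed in two ways.
\begin{itemize}
\item Make intra-clique edges geometrically unavoidable but uncut-neutral. With weight $0$ they still create connectivity, which breaks the argument.
\item Check that the spurious connectivity is harmless. Since every cell clique contains vertices of every literal in $g(t)$, $G[A]$ restricted to a cell is connected whenever nonempty. The condition then becomes: $s_t$ reaches $s^\ast$ in $A$ iff some wire in $A$ passes through the path toward the root, which holds iff some literal of $g(t)$ is true.
\end{itemize}
I would aim for the second route. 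To get exactly the clause's literals, I would attach $s_t,t_t$ through private pendant paths that only reach $v_{1,d+1,\ell_b}$. I would also check that the root-ward connectivity required by the descendants does not let a clause whose literals are all false borrow $A$-connectivity from other literals present in its cell. This is the delicate point, and it may require adding a separating private ``clause wire'' per clause running to the root, at the same $O(n^{1/d})$ cost per clause.
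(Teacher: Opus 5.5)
Your core construction is the paper's. The paper deletes $U$ and puts at each clause cell a terminal $b_t^1\in S$ and a terminal $b_t^2\in T$, adjacent to each other and to a triangle $a_t^1a_t^2a_t^3$. Here $a_t^b$ is a private pendant of $v_{1,d+1,\ell_b}$, which is your ``private pendant paths''. Edges whose endpoints carry the same literal get weight $1$, all others weight $0$, and one asks for a feasible cut of weight $0$.

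\textbf{The gap: soundness is never closed.} You end by calling it ``the delicate point''. You also claim that weight-$0$ intra-clique edges ``break the argument'', which is false. You phrase the criterion as ``some literal of $g(t)$ is true'', and you consider adding private clause wires to the root. None of this is needed. The $g(t)$ criterion is also the wrong one once $s_t$ is attached only to the clause's three literal vertices.

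The missing observation is that the clause constraint is enforced \emph{locally}. Since $|T|=m\geqslant 2$ and $T$ must lie in a single component of $G[B]$, the terminal $b_t^2$ (your $t_t$) must have a neighbour in $B$. Its only neighbours are $b_t^1\in A$ and the pendants of the three literals of its own clause. Each pendant is forced to the side of its literal by a positive-weight edge. Hence an all-true clause isolates $b_t^2$ in $G[B]$, and symmetrically an all-false clause isolates $b_t^1$ in $G[A]$. No connectivity elsewhere in the graph can be ``borrowed''. This is exactly the argument you already wrote in your clause-check paragraph before second-guessing it.

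\textbf{Completeness needs no hub $s^\ast,t^\ast$.} The intra-clique connectivity you worry about is what provides global connectivity. Every literal wire passes through $H_o$, since $g(o)$ contains every literal. Each clique of $H_o$ contains one vertex per literal, so the $A$-vertices of such a clique are pairwise adjacent through weight-$0$ edges. Thus all true wires lie in one component of $G[A]$. Each $b_t^1$ joins it through the pendant of a true literal of its clause, and symmetrically for $B$.

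\textbf{Variable consistency needs no $u_i,u_i'$ gadget.} The occurrences of $x_i$ are pairwise adjacent in every clique of $H_o$. Giving those edges positive weight forces all occurrences to one side. This is what the paper's rule ``weight $1$ iff both endpoints carry the same literal'' achieves, since all vertices of a variable are connected through $H_o$.

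With these two observations, your first version is already a complete reduction: terminals attached via pendants to the clause's three literal vertices, positive weight exactly on same-variable edges, and threshold $0$.
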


\begin{proof}
Let $\Phi$ be a monotone $3$-SAT formula with $n$ variables and $m=4n$ clauses, where every clause contains exactly three positive literals and every variable appears in exactly four clauses.

We consider the graph $G'_{\Phi,d}$ obtained from the graph $G_{\Phi,d}$ used in the proof of Theorem~\ref{thm:LowerBound2Subcoloring}, with the following two modifications:
\begin{itemize}
  \item the vertex set $U$ is removed;
  \item for every node $t\in V(G_{d,p})$ the three vertices $a_t^1$, $a_t^2$ and $a_t^3$ now form a clique (instead of a $P_3$); we furthermore add two new vertices $b_t^1,b_t^2$ that are adjacent to each other and adjacent to all of $a_t^1,a_t^2,a_t^3$. The vertex $b_t^1$ is declared a terminal in $S$ and $b_t^2$ a terminal in $T$.
\end{itemize}

As in the original construction, every vertex of $G'_{\Phi,d}$ except the terminal vertices $b_t^1,b_t^2$ is associated to a unique literal of $\Phi$. As for $G_{\Phi,d}$, the gadgets are arranged so that all vertices that represent the same literal are connected within $G'_{\Phi,d}$. We do not repeat here the geometric embedding argument (intersection of fat polygons for $d=2$, unit balls for $d\geqslant 3$) since the modifications above do not affect that part of the proof.
\smallskip
Assign weight $1$ to an edge if and only if its two endpoints are both mapped to the \emph{same} literal; assign weight $0$ to every other edge. In particular, every edge incident to a terminal $b_t^1$ or $b_t^2$ has weight $0$ (because terminals are not mapped to literals).

\medskip
There exists a truth assignment of the variables of $\Phi$ such that no clause is all-\textsf{true} or all-\textsf{false} if and only if there exists a bipartition $(A,B)$ of $V(G'_{\Phi,d})$ with $S\subseteq A$, $T\subseteq B$, both $S$ and $T$ are connected in $G'_{\Phi,d}[A]$ and $G'_{\Phi,d}[B]$ respectively, and the total weight of edges between $A$ and $B$ equal to $0$.

We prove both directions.

\smallskip 
Let $\tau$ be a truth assignment of the variables of $\Phi$ in which every clause contains at least one \textsf{true} and at least one \textsf{false} literal. We construct a partition $(A,B)$ of $V(G'_{\Phi,d})$ as follows:
\begin{itemize}
  \item For every literal $\ell$, place all vertices of $G'_{\Phi,d}$ that are mapped to $\ell$ on the $A$-side if $\ell$ is assigned \textsf{true} by $\tau$, and on the $B$-side if $\ell$ is assigned \textsf{false}.
  \item Place every terminal $b_t^1\in S$ in $A$ and every $b_t^2\in T$ in $B$.
\end{itemize}

Because we place all vertices that represent the same literal on the same side, no edge of weight $1$ crosses the cut: every edge with weight $1$ has both endpoints representing the same literal, hence both endpoints lie in the same side. Therefore the cut weight is $0$.

It remains to argue that all terminals of $S$ are in the same connected component $G'_{\Phi,d}[A]$ and similarly for $T$ and $G'_{\Phi,d}[B]$. In fact, we will prove that both $G'_{\Phi,d}[A]$ and $G'_{\Phi,d}[B]$ are connected. By construction each literal's vertices form a connected subgraph, so the vertices grouped by literal are connected. Consider now any gadget corresponding to a node $t\in V(G_{d,p}')$; its three vertices $a_t^1,a_t^2,a_t^3$ represent the three literals of the corresponding clause. By the property of $\tau$ (no clause all-\textsf{true} or all-\textsf{false}), in each clause gadget at least one of the three $a_t^i$ lies in $A$ and at least one lies in $B$. Hence
\begin{itemize}
  \item the terminal $b_t^1\in A$ has a neighbor among the $a_t^i$ that is in $A$, so $b_t^1$ is not isolated in $G'[A]$; and
  \item the terminal $b_t^2\in B$ has a neighbor among the $a_t^i$ that is in $B$, so $b_t^2$ is not isolated in $G'[B]$.
\end{itemize}
Consequently $G'_{\Phi,d}[A]$ is connected and $G'_{\Phi,d}[B]$ is connected. Thus $(A,B)$ is a feasible partition with cut weight $0$.

\medskip
Conversely, suppose $(A,B)$ is a $S$-$T$-cut of $G'_{\Phi,d}$ of weight $0$ such that $S$ (resp.\ $T$) is contained in a connected component of $G'_{\Phi,d}[A]$ (resp. $G'_{\Phi,d}[B]$). We first observe:

\begin{enumerate}
  \item\label{itm:same-literal} Since every edge of weight $1$ joins two vertices mapped to the same literal, and the cut has weight $0$, no  edge of weight $1$ crosses the cut. Therefore every pair of adjacent vertices that are mapped to the same literal must lie in the same side. Because all vertices mapped to the same literal form a connected subgraph in the construction, it follows that \emph{all} vertices mapped to any fixed literal lie in the same side. Thus the partition induces a well-defined truth value for each literal (``side $A$'' or ``side $B$'').
  \item\label{itm:clause-not-all} Fix any clause gadget $t$. If all three vertices $a_t^1,a_t^2,a_t^3$ were in the same side, say all in $A$, then the terminal $b_t^2\in B$ would have no neighbor in $B$ (because $b_t^2$ is only adjacent to the $a_t^i$ and to $b_t^1$, and $b_t^1\in A$ by terminal constraint), contradicting the fact that $b_t^2$ is in the same connected components as all other terminals of $T$ in $G'_{\Phi,d}[B]$. Hence for every clause gadget $t$ the three vertices $a_t^1,a_t^2,a_t^3$ are \emph{not} all in $A$ and are \emph{not} all in $B$; equivalently, each clause gadget has at least one $a_t^i$ in $A$ and at least one in $B$.
\end{enumerate}

From (\ref{itm:same-literal}) we obtain a well-defined Boolean assignment $\tau$ to the variables: set a variable to \textsf{true} if the vertices representing its (positive) literal lie in $A$, and \textsf{false} if they lie in $B$. By (\ref{itm:clause-not-all}) every clause gadget contains at least one $a_t^i$ in $A$ and at least one in $B$, so each clause contains at least one \textsf{true} literal and at least one \textsf{false} literal under $\tau$. Therefore $\tau$ is a truth assignment in which no clause is all-\textsf{true} or all-\textsf{false}, as required.

This completes the equivalence and hence the proof.
\end{proof}

\section{Conclusion and further directions}


We introduced the \emph{weak square-root phenomenon}, revealing a new class of problems that admit $2^{\tilde O(n^{1-1/(d+1)})}$-time algorithms on geometric intersection graphs of similarly sized $\beta$-fat objects in $\mathbb{R}^d$, together with tight ETH lower bounds of $2^{\Omega(n^{1-1/(d+1)})}$, even on restricted subclasses. We established this phenomenon via a unified algorithmic and lower bound framework, and showed that it captures natural problems, including \textsc{2-Subcoloring} and \textsc{Two Sets Cut-Uncut}. Along the way, we introduced a new graph parameter called $\alpha$-modulator number, which may be of independent interest.
\medskip

We conclude with three open problems related to our lower-bound framework. 
First, the case of \emph{unit disk graphs} is not covered by our construction. 
Thus, we ask the following.

\begin{problem}
What is the complexity of \textsc{2-Subcoloring} and \textsc{Two Sets Cut--Uncut} on unit disk graphs?
\end{problem}

\begin{problem}
What is the complexity of \textsc{Two Sets Cut--Uncut} on intersection graphs of similarly sized $\beta$-fat objects in $\mathbb{R}^d$ in the unweighted case?
\end{problem}

Then, Theorem~\ref{thm:Separator} is not \emph{robust}, as it requires the geometric representation as part of the input, leading to the following problem.
\begin{problem}
Can we obtain a robust version of Theorem~\ref{thm:Separator} that does not rely on the geometric representation as input?
\end{problem}

We distinguish three possible outcomes for the first two problems: either they are polynomial-time solvable, or they admit a better subexponential algorithm than the one provided here, typically with running time $2^{O(\sqrt{n})}$, or it may be possible to successfully adapt the lower-bound framework to unit disk graphs.

Finally, we outline several directions for future research rather than concrete open problems:
\begin{itemize}
    \item Explore other problems that fit within this framework. In particular, it would be interesting to systematically consider the independence number $\alpha(G)$ as a natural parameter for FPT and XP algorithms. This direction has already received some attention~\cite{fomin2024path}.
    \item Investigate other classes of geometric graphs for which this framework could be applied.
    \item Continue the study of $\alpha$-modulator number in general graphs as a structural parameter.
\end{itemize}

\bibliography{biblio}

\end{document}